\documentclass[12pt]{article}

\usepackage{arxiv_package}

\usepackage{mathrsfs}

\usepackage{enumitem}

\usepackage[utf8]{inputenc} 
\usepackage[T1]{fontenc}    
\usepackage{url}            
\usepackage{hyperref}       
\usepackage{booktabs}       
\usepackage{amsfonts}       
\usepackage{nicefrac}       
\usepackage{microtype}      
\usepackage{bm}
\usepackage{amsthm}
\usepackage{comment}
\usepackage{cleveref}
\usepackage{amsmath}
\usepackage{multicol, multirow, colortbl}
\usepackage{pifont}
\usepackage{caption}
\usepackage{subcaption}
\usepackage{tabularx}
\usepackage{multirow}
\usepackage{makecell}
\usepackage{wrapfig}
\usepackage{diagbox}
\usepackage{float}

\usepackage{appendix}

\usepackage{bbm}
\usepackage{dsfont}

\DeclareMathAlphabet{\mathbbold}{U}{bbold}{m}{n}
\newcommand*{\one}{\mathbbold{1}}

\newcommand{\quotes}[1]{``#1''}

\hypersetup{
    colorlinks=true,
    citecolor = blue,
    linkcolor=blue
}

\setlength{\parindent}{0pt}
\setlength{\parskip}{10pt}

\makeatletter
\let\original@footnotemark\footnotemark
\newcommand{\align@footnotemark}{%
  \ifmeasuring@
    \chardef\@tempfn=\value{footnote}%
    \original@footnotemark
    \setcounter{footnote}{\@tempfn}%
  \else
    \iffirstchoice@
      \original@footnotemark
    \fi
  \fi}
\pretocmd{\start@align}{\let\footnotemark\align@footnotemark}{}{}
\makeatother

\usepackage{mathtools}

\newcolumntype{C}{>{\centering\arraybackslash}X}

\usepackage{mathtools}


\def\calS{\mathcal{S}}

\usepackage[algoruled]{algorithm2e}

\def\1{\mathbbm{1}}

\usepackage[normalem]{ulem} 
\usepackage{cancel}

\def\RRT{{\text{RT (RandomSplit) }}}
\def\ART{{\text{RT (AdaSplit) }}}

\def\weight{{\xi}}

\title{Adaptive sample splitting for randomization tests}

		\author[]{Yao Zhang\footnote{Department of Statistics, Stanford University} \ \ and  \  Zijun Gao\footnote{Department of Data Science and Operations, University of Southern California} }
	
\begin{document}

\maketitle

\begin{abstract}
Randomization tests are widely used to generate finite-sample valid 
$p$-values for causal inference on experimental data. However, when applied to subgroup analysis, these tests may lack power due to small subgroup sizes. Incorporating a shared estimator of the conditional average treatment effect (CATE) can substantially improve power across subgroups but requires sample splitting to preserve validity. To this end, we quantify each unit's contribution to estimation and testing using a certainty score, which measures how certain the unit's treatment assignment is given its covariates and outcome.
We show that units with higher certainty scores are more valuable for testing but less important for CATE estimation, since their treatment assignments can be accurately imputed. Building on this insight, we propose AdaSplit, a sample splitting procedure that adaptively allocates units between estimation and testing to maximize their overall contribution across tasks.
We evaluate AdaSplit through simulation studies, demonstrating that it yields more powerful randomization tests than baselines that omit CATE estimation or rely on random sample splitting. Finally, we apply AdaSplit to a blood pressure intervention trial, identifying patient subgroups with significant treatment effects.
\end{abstract}

\section{Introduction}

\subsection{Subgroup analysis in randomized controlled trials}

Subgroup analysis of heterogeneous treatment effects is crucial for evaluating treatment efficacy and safety across all phases of clinical trials \citep{wang2007statistics,rothwell2005subgroup}.
In early-phase trials, such analyses explore treatment effects across diverse patient types, helping to refine later-phase studies and patient selection criteria \citep{lipkovich2011subgroup,seibold2016model,friede2018recent}.
In confirmatory trials, they evaluate the effect consistency across patient subgroups, supporting regulatory review and benefit–risk assessment
\citep{tanniou2016subgroup,amatya2021subgroup,paratore2022subgroup}.
Overall, these analyses offer insights into patients who can be helped or harmed by treatment, guiding both trial design and real-world application.

\begin{figure}[t]
    \centering
        \includegraphics[width=0.8\textwidth]{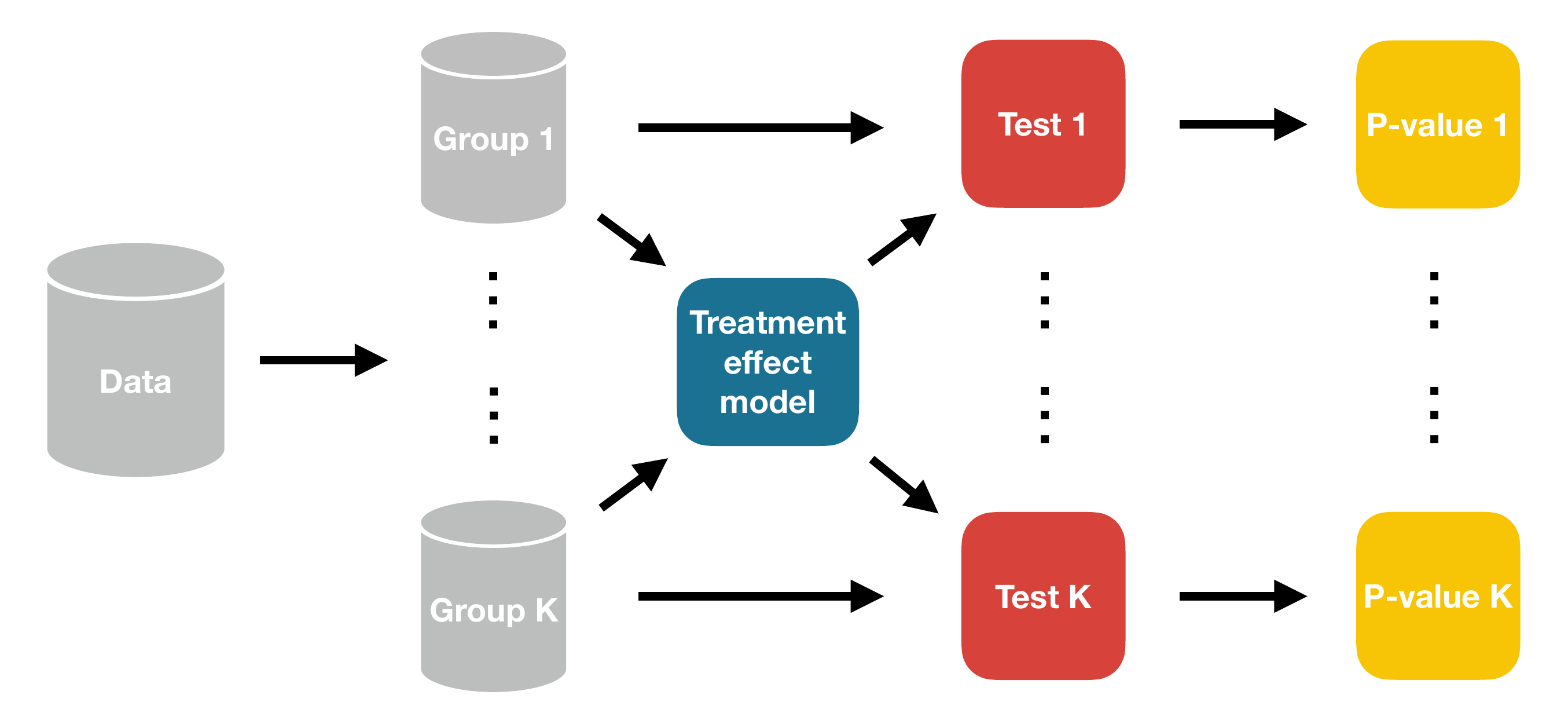}
    \caption{Diagram of adaptive sample splitting (AdaSplit) for randomization tests in subgroup analysis. AdaSplit adaptively splits out a subset of units from each subgroup to fit a shared regression model for estimating the conditional average treatment effect. This model and the carefully retained units are then used to construct randomization tests, yielding powerful $p$-values for testing treatment effects within each subgroup. }
    \label{fig:diagram}
\end{figure}

Although widely considered in clinical research, subgroup analysis methods face several common challenges that can lead to misleading results in practice.
First, methods that rely on strong modelling assumptions may falsely detect treatment effects when models are overfitted or misspecified \citep{athey2015machine,burke2015three}.
Second, methods for post-hoc subgroup selection may introduce bias, requiring further correction to ensure validity \citep{thomas2017comparing,guo2021inference}.
Third, methods for conducting subgroup analyses across multiple baseline covariates are often informal or overly stringent, failing to adequately address multiplicity across $p$-values \citep{lagakos2006challenge,wang2007statistics,bailar2012medical}.

In this article, we address the validity challenges in subgroup analysis from a fresh perspective using Fisher randomization tests \citep{rubin1980randomization,fisher1935design}. These tests offer two key advantages to statistical inference more broadly. First, they compute $p$-values based on the known assignment distribution, guaranteeing Type I error control without relying on any model assumptions. Second, they allow flexible test statistics to capture different types of treatment effects \citep{caughey2023randomisation,zhang2025multiple}, and can incorporate machine learning models to boost power, as long as the computational budget permits \citep{guo2025ml}. Building on these strengths, we develop a new randomization test framework for subgroup analysis.

\subsection{Our Contributions}\label{sect:outline}

\begin{figure}[t]
    \centering
    \hspace{35pt}
        \includegraphics[width=0.85\textwidth]{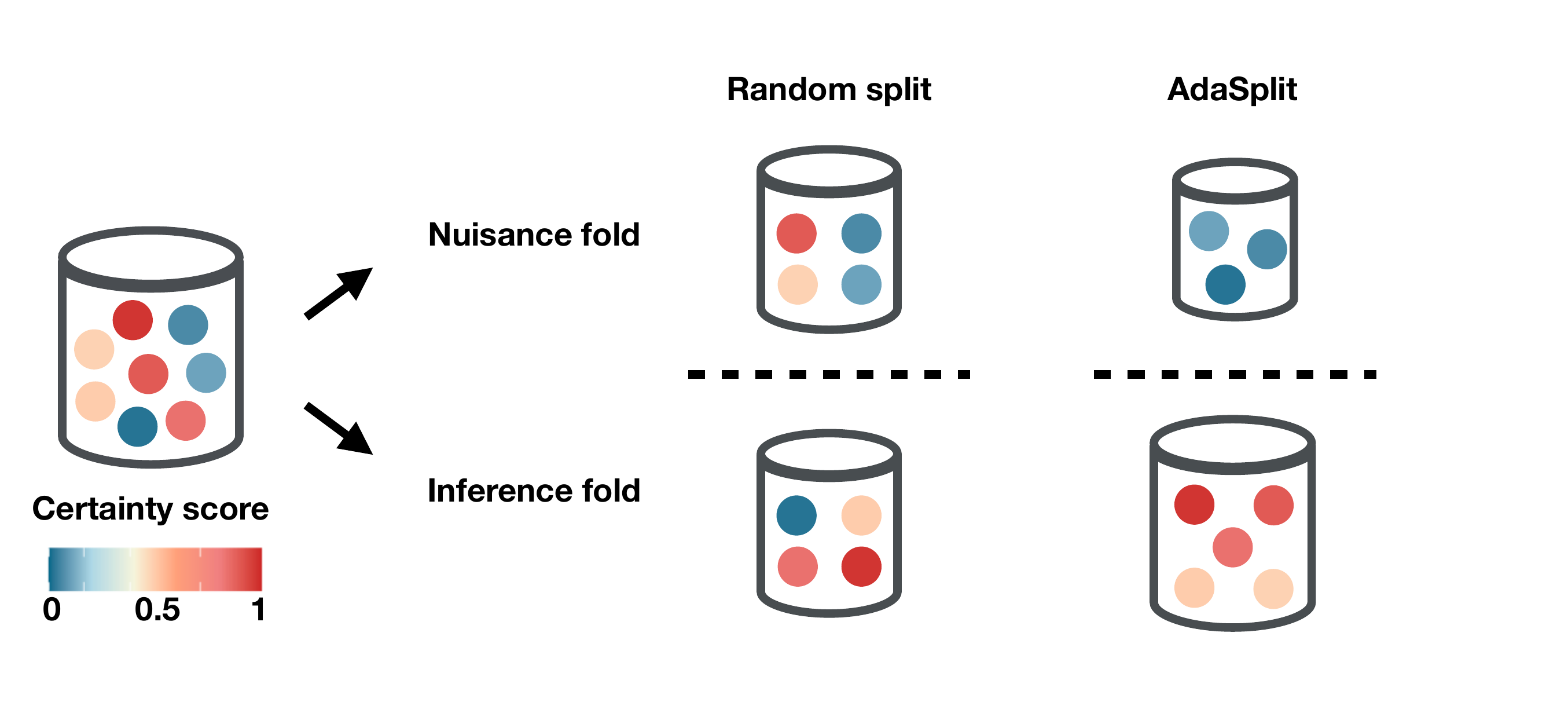}
    \caption{Comparison of random sample splitting and adaptive sample splitting (AdaSplit). AdaSplit allocates units into the nuisance fold (for CATE estimation) and the inference fold (for testing) based on their certainty scores defined in \eqref{equ:c_i}.  }
    \label{fig:diagram_2}
\end{figure}

\Cref{fig:diagram} illustrates our proposed framework for subgroup analysis. In a randomized controlled trial,  $n$ units are divided into $K$ pre-specified, disjoint subgroups based on their covariate information, and randomization tests are used to detect treatment effects within each subgroup. This multiple testing problem is formalized in \Cref{sec:formulation}.
As we will see, the power of these tests can be greatly enhanced by incorporating a shared estimator of the conditional average treatment effect (CATE) across subgroups. However, if the CATE estimators and test statistics are constructed using treatment assignments from the same units, the resulting p-values may no longer be valid.

To address this, we propose an adaptive sample splitting procedure, AdaSplit, which allocates the units into a nuisance fold $\mathcal I\subset  [n] := \{1,\dotsc, n\}$ for CATE estimation and an inference fold $\mathcal J = [n]\setminus \mathcal I$ for hypothesis testing. As shown in \Cref{fig:diagram_2}, AdaSplit's allocation strategy is primarily guided by a certainty score for each unit’s treatment assignment $Z_i$, derived from its covariates $X_i$ and outcome $Y_i$. 
Under a Bernoulli trial with probability $1/2$, this certainty score is defined as
\begin{equation}\label{equ:c_i}
 C_i := |2e(X_i,Y_i)-1|,
\end{equation}
where $e(X_i,Y_i) := \mathbb P\{Z_i = 1 \mid X_i,Y_i\}$ is the posterior assignment probability.
When $C_i = 0$, meaning $e(X_i, Y_i) = 1/2$, we are most {\it uncertain} about the value of $Z_i$. When $C_i = 1$, meaning $e(X_i, Y_i) = 0$ or 1,
we are most  {\it certain} about the value of $Z_i$. In between, $C_i\in (0,1)$ quantifies how confidently  $Z_i$ can be predicted from $X_i$ and $Y_i$.

In \Cref{sect:ass}, we characterize units' contribution to estimation and testing using $C_i$:
\begin{enumerate}[label=(\alph*)]
\item Units with {\it larger} $C_i$ contribute more to the asymptotic test power (\Cref{sec:influential.RT}).
\item Units with {\it smaller} $C_i$ are more important for CATE estimation, as the assignments of other units can be imputed in our proposed estimator (\Cref{sect:bar}).
\end{enumerate}
We leverage this observation to develop a practical algorithm of  AdaSplit in \Cref{sect:alg}. The algorithm gradually expands the nuisance fold $\mathcal I$ by adding units in increasing order of estimated $C_i$, until the CATE estimates trained on $\mathcal I$ meet a convergence criterion. The algorithm offers three desirable properties:
\begin{enumerate}
     \item It refrains from using any assignment from the inference fold, yielding valid and independent $p$-values for (multiple) hypothesis testing (\Cref{sect:valid}).
    \item It aims to maximize test power by reserving units with large $C_i$ for the inference fold, in line with point (a) highlighted above.
    \item It is fully deterministic: unlike random sample splitting, it produces the same nuisance and inference folds in every run.
\end{enumerate}

In \Cref{sect:exp}, we conduct experiments to evaluate the performance of AdaSplit across various settings. Our results show that AdaSplit produces more powerful $p$-values than standard baselines that either ignore CATE estimation or rely on random sample splitting.
We further demonstrate AdaSplit on the Systolic Blood Pressure Intervention Trial (SPRINT) dataset \citep{Wright2016ART}, identifying interpretable patient subgroups with strong treatment effects.
Finally, in \Cref{sect:discussion}, we discuss how our adaptive sample splitting idea can be extended in future work.

\subsection{Related work}\label{sect:related}

We next discuss related work in three different areas: subgroup analysis in clinical trials, randomization tests for causal inference, and active learning for classification.

The literature on subgroup analysis is often divided into exploratory methods, which aim to identify subgroups with differential treatment effects \citep{su2009subgroup,Shen2015InferenceFS,seibold2016model,li2023statistical}, and confirmatory methods, which seek to validate subgroups identified in earlier stages \citep{jenkins2011adaptive,friede2012conditional,guo2021inference}. Exploratory approaches typically focus on developing data-driven techniques (such as mixture models or tree-based algorithms) to partition the covariate space into subgroups with different levels of treatment effects. 
In contrast, our method focuses on settings with pre-specified subgroups, as commonly required in confirmatory trials by regulatory agencies, which mandate analyses across key demographic strata such as age, race, or sex.
Compared to existing confirmatory approaches, our method is distinguished by its use of randomization tests, which provide finite-sample valid $p$-values without relying on modelling assumptions. Furthermore, our procedure does not require bias correction for validity, as sample splitting naturally prevents data dredging. To reduce the efficiency loss from sample splitting, we solve a special technical problem: how to adaptively allocate units between estimation and testing to preserve the power of randomization tests. This is different from those addressed in prior work, such as subgroup discovery or post-selection bias correction.

Our article is also related to a growing body of work that extends randomization tests to hypotheses weaker than the original sharp null of no treatment effect on any unit. For example, \citet{fogarty2021prepivoted,cohen2020gaussian,zhao2021covariate} propose valid tests for the weak null of zero average treatment effect.
Other works, such as \citet{caughey2023randomisation} and \citet{chen2024enhanced}, show that test statistics satisfying certain properties can produce valid inference for quantiles of individual treatment effects. 
While these papers focus on constructing a single randomization test, \citet{zhang2025multiple} propose a framework for conducting multiple randomization tests for lagged and spillover effects in complex experiments.
However, none of these works focus on testing subgroup effects, which can be viewed as relaxing the sharp null hypothesis from all units to specific subgroups. Our work fills this gap by introducing an adaptive method. In this sense, our contribution is orthogonal to existing methods and may enhance their power when applied to subgroup analyses.

Finally, the high-level idea of our adaptive sample splitting procedure is related to active learning, a subfield of machine learning that studies how to efficiently build accurate classifiers by selecting the most informative data points for labelling. Many active learning algorithms prioritize querying points with high predictive uncertainty or disagreement among classifiers \citep{schohn2000less,balcan2006agnostic,hanneke2014theory,ash2019deep}. Related lines of work include coreset selection, which aims to identify a representative subset of data that can train a model with accuracy comparable to the one trained on the entire dataset \citep{wei2015submodularity,sener2017active,rudi2018fast,borsos2024data}.
Unlike these approaches, our splitting criterion is not purely driven by model quality or computational efficiency. Instead, it is tailored to maximize the power of randomization tests. Furthermore, in contrast to coreset methods that can examine the entire dataset during selection, our algorithm remains blind to the treatment assignments in the inference fold throughout iterations, which preserves the validity of our randomization tests.

\subsection{Notation \& Assumptions}\label{sect:notation}

Each unit \( i \in [n] \) is associated with a set  of covariates \( X_i \in \mathbb{R}^d \), a treatment assignment variable \( Z_i \in \{0,1\} \), and two potential outcomes \( Y_i(0), Y_i(1) \in \mathbb{R} \). The following are the three basic assumptions considered in this article.
\begin{assumption}\label{ass:randomization}
Each treatment assignment \( Z_i \) is drawn from a Bernoulli distribution \( \text{Bern}(e(X_i) := \mathbb{P}(Z_i = 1 \mid X_i)) \), where $e(X_i) := \mathbb{P}\{Z_i = 1 \mid X_i\}$. Unless otherwise specified, we consider the Bernoulli design with $e(X_i) =  1/2$ for all $i\in [n]$, that is,
\[
\text{Bern(1/2) design}: \  Z_1,\dotsc, Z_n \overset{\mathrm{i.i.d.}}{\sim} \text{Bern}(1/2).
\]
\end{assumption}
\vspace{5pt}
\begin{assumption}\label{ass:consistency}
The observed outcome \( Y_i = Z_i Y_i(1) + (1 - Z_i) Y_i(0) \) for all \( i \in [n] \).
\end{assumption}
The second assumption originates from the causal model of \citet{rubin1974estimating}, which implies that the treatment has no hidden variation or interference across units. Based on this assumption, 
we define the conditional average treatment effect (CATE) as
\[
\tau(x) = \mathbb E[Y_i(1) - Y_i(0)\mid X_i=x] = \mu_1(x) -  \mu_0(x),
\]
where  $\mu_z(x) = \mathbb E[Y\mid X =x, Z=z]$\footnote{Besides \Cref{ass:randomization}, CATE estimation in observational studies requires additional assumptions (unconfoundedness and positivity) to achieve the second equality. In our setup, it is assumed that these assumptions are satisfied by the randomization of treatment assignments.}
is the expected outcome when $X=x$ and $Z = z$.  

Using $\tau(x)$ and $\mu(x) := \mathbb E[Y\mid X=x]$, we can express $\mu_z(x)$ as
\begin{equation}\label{equ:two_mu}
 \mu_{z}(x) =  \mu(x) + [z-e(x)]  \tau(x),~ z\in\{0,1\}.
\end{equation}

\begin{assumption}\label{example:1}
    The outcomes $Y_i,i\in [n],$ follow a Gaussian model:
    \begin{equation}\label{equ:normal_model}
Y_i = \mu_0(X_i) + Z_i \tau(X_i) + \epsilon_i,\quad \epsilon_i \sim \mathcal N (0, \nu^2 ),\quad \epsilon_i \independent X_i, Z_i,~\forall i\in [n].
\end{equation}
\end{assumption}

Finally, for any subset $\mathcal{S} \subseteq [n]$, we use the subscript notation to denote restriction to the indices in $\mathcal{S}$. For example, $X_{\mathcal{S}} := (X_i)_{i \in \mathcal{S}}$.
We define $O_i := (X_i, Y_i, Z_i)$ and $\tilde{O}_i := (X_i, Y_i, \tilde{Z}_i)$, where $\tilde{Z}_i$ denotes a randomized treatment assignment drawn from the same distribution as $Z_i$. Similarly, let $O_{\mathcal{S}} := (O_i)_{i \in \mathcal{S}}$ and $\tilde{O}_{\mathcal{S}} := (\tilde{O}_i)_{i \in \mathcal{S}}$.
In the randomization tests introduced below, we let $\tilde{\mathbb{P}}$ denote the distribution of the randomized treatment assignments $\tilde{Z}_i$. We refer to the distribution of the test statistic $T(\tilde{O}_{[n]})$ under $\tilde{O}_{[n]} \sim \tilde{\mathbb{P}}$ as the reference distribution.

\section{Randomization tests for subgroup analysis}\label{sec:formulation}

In this section, we briefly introduce randomization tests, describe their extension to subgroup analysis, and highlight the key challenges involved.

\subsection{Randomization tests}

The individual treatment effect $Y_i (1) - Y_i(0)$ is unobserved in the data, as unit $i$ is either treated (with the control outcome missing) or assigned to control (with the treated outcome missing). To address this fundamental challenge of causal inference, randomization tests are commonly used to test \citet{fisher1935design}’s sharp null hypothesis,
\begin{equation}\label{equ:h_0_global}
    H_0: Y_i(1)=Y_i(0),~\forall i \in [n],
\end{equation}
This hypothesis states that the treatment has zero effect on every unit $i$.  It can be easily extended to test whether the treatment has a constant effect across all units. For simplicity, we focus on zero-effect hypotheses like $H_0$ throughout this article.

Although $H_0$ is a strong assumption (as we will discuss later), it allows us to impute the missing potential outcomes for all units; that is, $Y_i(1) = Y_i(0) = Y_i(Z_i) = Y_i$, where $Y_i$ is the observed outcome for unit $i$.
To define a randomization test for $H_0$, we first introduce a test statistic $T(O_{[n]})$, which maps the observed data to a summary measure---typically a treatment effect estimate---that provides evidence against the zero-effect hypothesis $H_0$. Technically, there is no restriction on the choice of test statistic $T$, as long as it is computable from the observed data.

The randomization test then computes a $p$-value 
$P(O_{[n]})
    := \mathbb {\tilde P}\{ T(\tilde O_{[n]}) \geq  T(O_{[n]}) \}$ by comparing the observed statistic $T(O_{[n]})$ with the reference distribution of $T(\tilde{O}_{[n]})$.
For example, in the Bern(1/2) design, this $p$-value is given by 
\[
P(O_{[n]})
    = 2^{-n} \sum_{{\tilde z}_{[n]} \in \{0,1\}^n} \1 
    \{T(X_{[n]}, Y_{[n]},\tilde z_{[n]}) \geq T(O_{[n]})\}.
\]
The $p$-value $P(O_{[n]})$ represents the proportion of randomized treatment assignments that yield a test statistic at least as large as the observed one. A small $p$-value suggests that the treatment effect reflected in the observed statistics $T(O_{[n]})$ is unlikely to have occurred by chance. This construction guarantees Type I error control:
\begin{equation}\label{equ:validity}
\mathbb P_{H_0}\left\{P(O_{[n]})\leq \alpha \mid X_{[n]},Y_{[n]}(0),Y_{[n]}(1)\right\} \leq \alpha,\ \forall \alpha\in [0,1],
\end{equation}
without making any assumptions about $X_{[n]}$, $Y_{[n]}(0)$ and $Y_{[n]}(1)$.

\vspace{-5pt}
\subsection{Subgroup-based randomization tests}\label{sect:vanilla}

From a critical perspective, rejecting the sharp null $H_0$ merely indicates that the treatment has a nonzero effect for at least one unit. It provides no information about which units are more likely to benefit from the treatment---an insight that is often essential in real-world applications concerned with heterogeneous treatment effects.

One natural way to relax Fisher's sharp null hypothesis is to divide the $n$ units into $K$ disjoint subgroups and test a sharp null hypothesis within each subgroup. For example, given a partition of the covariate space $\mathbb{R}^d = \bigcup_{k \in [K]} \mathcal{X}_k$, we define subgroup $k$ as
$\mathcal{S}_k := \{i \in [n] : X_i \in \mathcal{X}_k\}$,
and test the null hypothesis
\begin{equation}\label{equ:null}
    H_{0,k} : Y_i(1) = Y_i(0),~\forall i \in \mathcal{S}_k.
\end{equation}
Testing these subgroup-specific hypotheses can uncover the effect heterogeneity across covariates, e.g., age or biomarkers, and inform more personalized treatment plans.

\textbf{Power loss.}
However, randomization tests for the subgroup nulls may lack power, especially when using simple test statistics such as difference-in-means (DM):
\[
T_{\text{DM}}(\tilde{O}_{\mathcal{S}_k}) := \frac{2}{|\mathcal{S}_k|} \bigg\{ \sum_{i \in \mathcal{S}_k} \tilde{Z}_i Y_i - \sum_{i \in \mathcal{S}_k} (1 - \tilde{Z}_i) Y_i \bigg\}.
\]
Under the $\text{Bern}(1/2)$ design, the statistics $T_{\text{DM}}(\tilde{O}_{\mathcal{S}_k})$  has mean zero and variance
\[
\Var [T_{\text{DM}}(\tilde{O}_{\mathcal{S}_k})\mid Y_{\mathcal{S}_k} ] = 4|S_k|^{-2}\sum_{i\in \mathcal S_k}Y_i^2 = O_{\mathbb{P}}(|\mathcal{S}_k|^{-1}).
\]
The variance of $T_{\text{DM}}(O_{\mathcal{S}_k})$ is also of the same order. The mean difference between the two statistics is $O_{\mathbb{P}}(1)$. As the subgroup size $|\mathcal{S}_k|$ shrinks, their distributions overlap more, inflating the $p$-value  $\tilde{\mathbb P} \{ T_{\text{DM}}(\tilde O_{\mathcal{S}_k}) \geq T_{\text{DM}}( O_{\mathcal{S}_k}) \} $.

To reduce variance and improve power in randomization tests, one can incorporate regression models for covariate adjustment \citep{tsiatis2008covariate,lin2013agnostic,rothe2018flexible,guo2023generalized}. For example, \citet{rosenbaum2002covariance} defines a difference-in-means statistic using residuals from an outcome model $\hat \mu$, while \citet{zhao2021covariate} recommend using a linear model with treatment-covariate interactions to construct robust $t$-statistics. These approaches are proposed to test the sharp null $H_0$ in \eqref{equ:h_0_global} and the weak null of zero average treatment effect.

In contrast, our goal is to construct powerful randomization tests for subgroup-specific null hypotheses, which provide more granular insight into treatment effect heterogeneity. While the inferential targets differ, our approach is technically similar: given the functions $\mu$ and $\tau$, we construct a model-assisted test statistic
\[
T_{\text{AIPW}}(\tilde O_{\calS_k}) = |\mathcal{S}_k|^{-1} \sum_{i\in \mathcal{S}_k} \phi_{\text{AIPW}} (\tilde O_i;e, \mu, \tau),
\]
using the well-known augmented inverse probability weighting (AIPW) formula \citep{robins1994estimation} for the average treatment effect (ATE):
\begin{equation} \label{eq:AIPW}    
\begin{split}
\phi_{\text{AIPW}} (\tilde O_i;e, \mu, \tau)  =  
\frac{\tilde Z_i}{e(X_i)} \left[ Y_{i} -  \mu_1(X_{i}) \right]  -   \frac{1-\tilde Z_i}{1-e(X_i)} \left[ Y_{i} -  \mu_0(X_{i}) \right]      +  \tau (X_i),
\end{split}
\end{equation}
where $\mu_0$ and $\mu_1$ are defined using $\mu$ and $\tau$ as in \eqref{equ:two_mu}. Under the $\text{Bern}(1/2)$ design, the AIPW statistic $T_{\text{AIPW}}(\tilde O_{\mathcal{S}_k})$ has mean zero and variance
\[
\Var [T_{\text{AIPW}}(\tilde{O}_{\mathcal{S}_k})\mid X_{\mathcal{S}_k},Y_{\mathcal{S}_k} ] = 4|S_k|^{-2}\sum_{i\in \mathcal S_k}[Y_i-\mu(X_i)]^2,
\]
which is significantly smaller than the variance of $T_{\text{DM}}(\tilde{O}_{\mathcal{S}_k})$ if $\mu(X_i)$ explains away most of the variation in $Y_i$. Consequently, even when $|\mathcal{S}_k|$ is small, the distributions of $T_{\text{AIPW}}(O_{\mathcal{S}_k})$ and $T_{\text{AIPW}}(\tilde O_{\mathcal{S}_k})$ can remain separated, yielding a small $p$-value.

In practice, the nuisance functions $\tau$, $\mu_0$, and $\mu_1$ are unknown and must be estimated from data.
Using the same dataset to estimate them and to conduct randomization tests may lead to invalid $p$-values.
Moreover, the $p$-values across subgroups can become dependent, since the nuisance estimators rely on overlapping treatment assignments. This dependency violates the assumptions required by standard multiple testing procedures such as the Benjamini–Hochberg (BH) method \citep{benjamini1995controlling}. Cross-fitting\footnote{Cross-fitting randomly splits the data into multiple folds, using each fold for testing while using the remaining folds for estimation, and then aggregates the resulting $p$-values for inference.} \citep{schick1986asymptotically,bickel1988estimating,chernozhukov2018double} 
can produce valid $p$-values for each subgroup null hypothesis.
However, standard valid methods for combining these dependent $p$-values, such as taking twice their average, may not improve power \citep{ruschendorf1982random,meng1994posterior,vovk2020combining}.
By adaptively splitting the sample between estimation and testing, AdaSplit addresses these challenges and improves power over random splitting.

\section{Adaptive sample splitting (AdaSplit)}\label{sect:ass}

The splitting strategy in AdaSplit is motivated by points (a) and (b) in \Cref{sect:outline}. We begin by illustrating these observations using a synthetic example, shown in \Cref{fig:AdaSplit_example}; simulation details are provided in \Cref{sect:appendix_exp_figure}.

\begin{figure}[t]
    \centering
    \begin{subfigure}{0.44\textwidth}
        \centering
        \includegraphics[width=\linewidth]{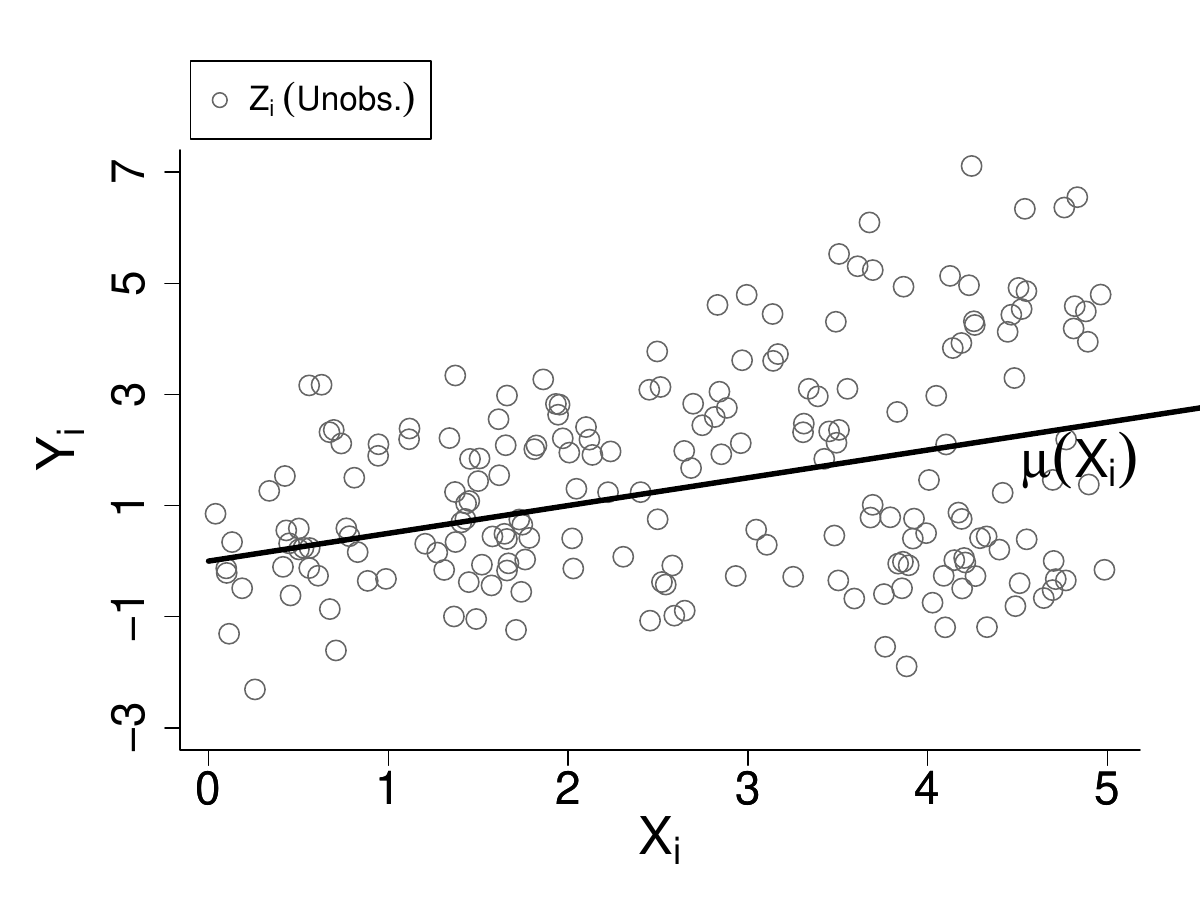}
        \caption{}\label{fig:AdaSplit_example_1}
    \end{subfigure}
    \begin{subfigure}{0.44\textwidth}
        \centering
        \includegraphics[width=\linewidth]{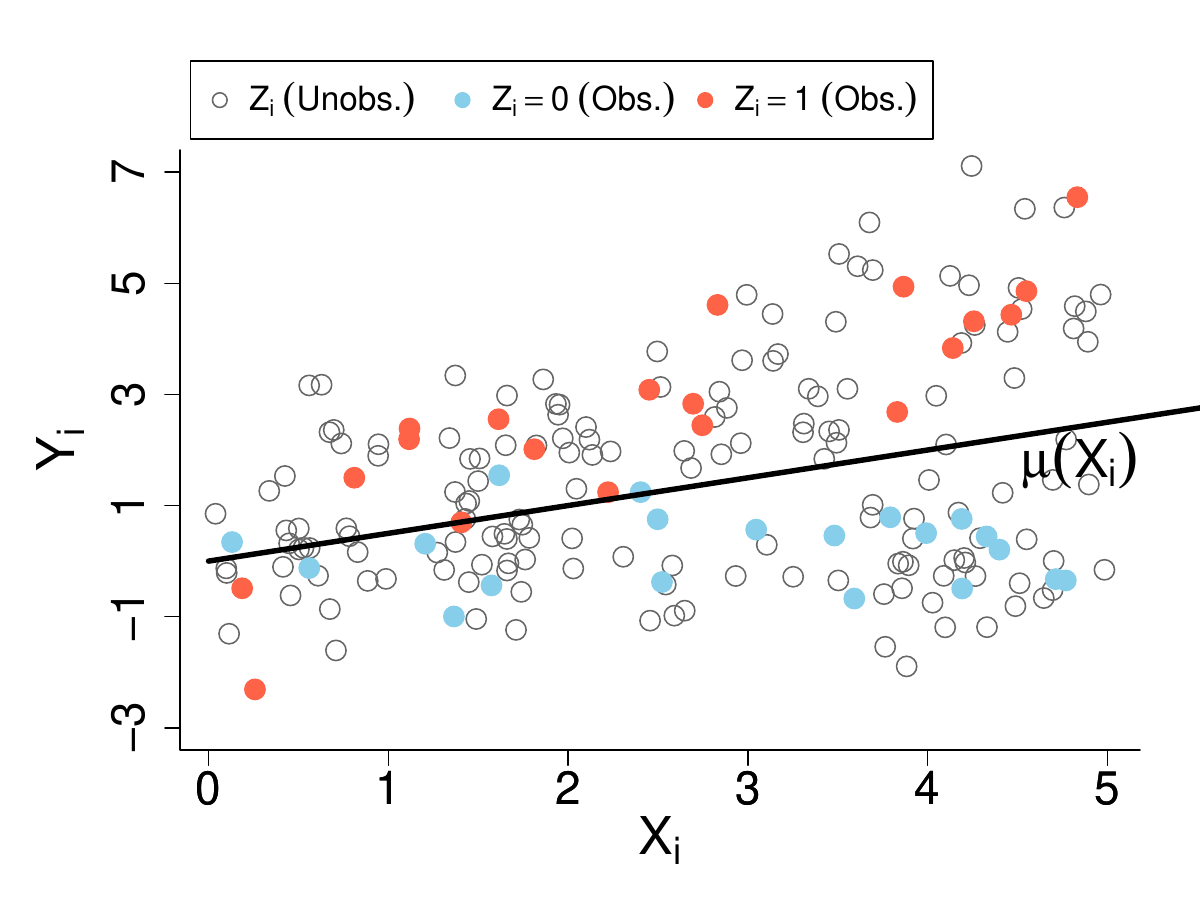}
        \caption{}\label{fig:AdaSplit_example_2}
    \end{subfigure}
    \begin{subfigure}{0.44\textwidth}
        \centering
        \includegraphics[width=\linewidth]{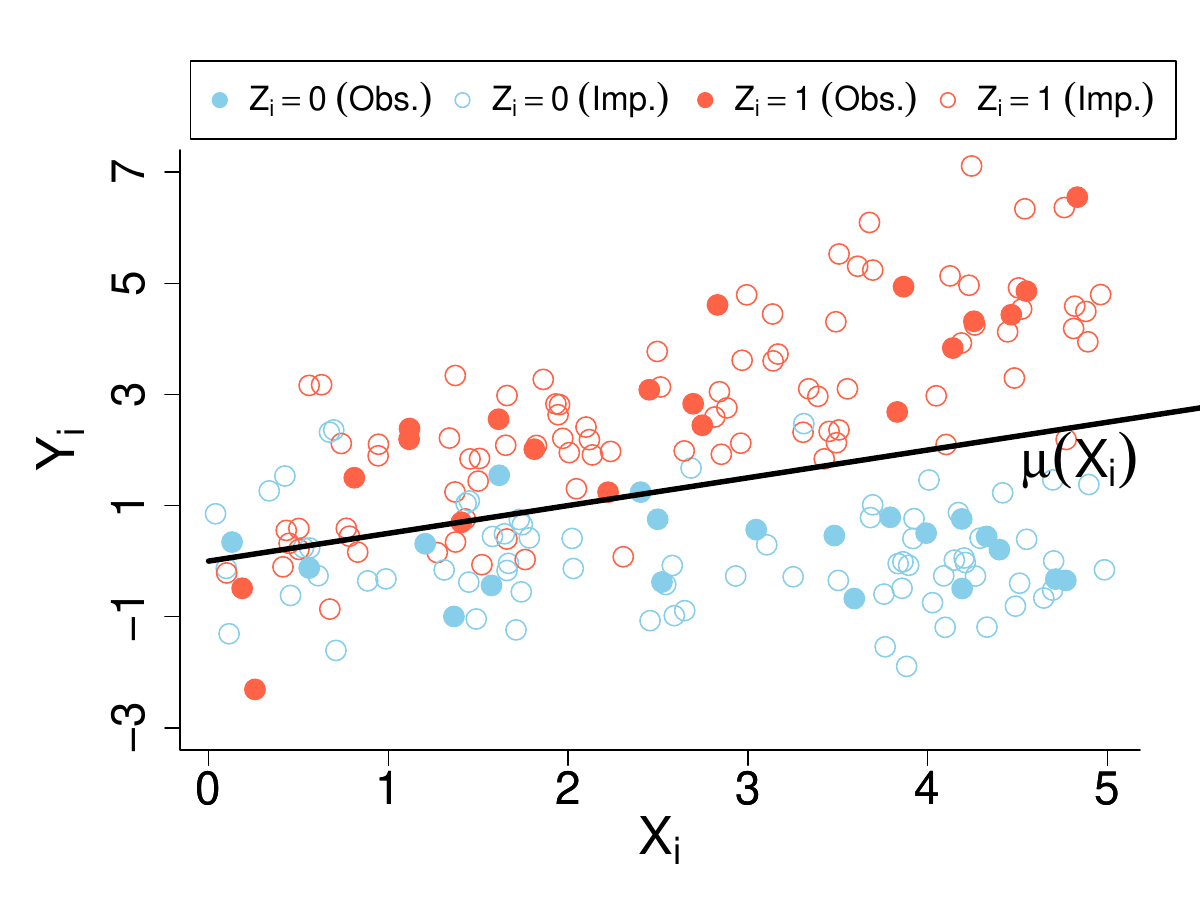}
        \caption{}\label{fig:AdaSplit_example_3}
    \end{subfigure}
    \begin{subfigure}{0.44\textwidth}
        \centering
        \includegraphics[width=\linewidth]{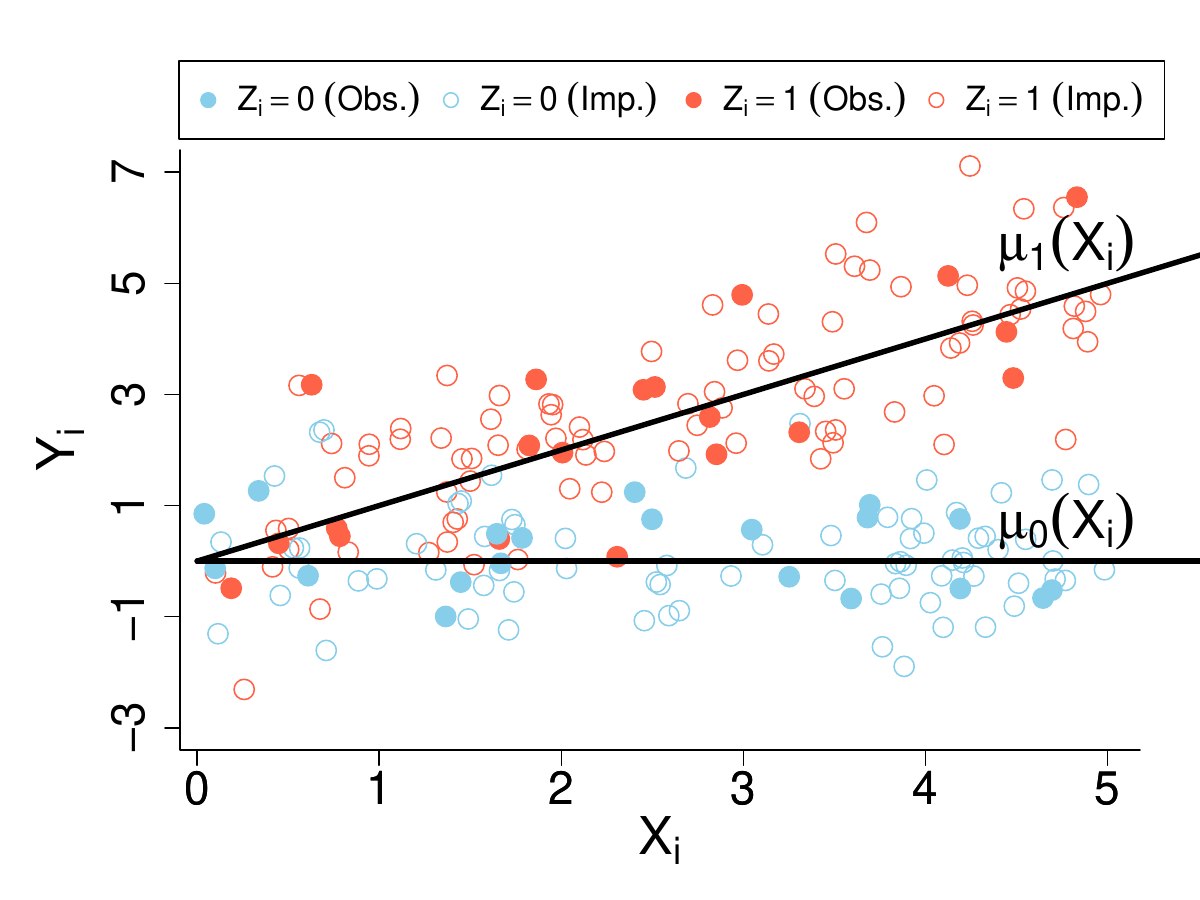}
        \caption{}\label{fig:AdaSplit_example_4}
    \end{subfigure}
\caption{
    Illustration of AdaSplit on a synthetic dataset. 
    (a) Estimate \( \mu(X_i) = \mathbb{E}[Y_i \mid X_i] \) without using treatment assignments. 
    (b) Reveal the treatment assignments for a subset of units  (solid points).
    (c) Impute the assignments of the remaining units (hollow points) using a model fitted to the revealed ones.
    (d) Use both observed and imputed assignments to estimate $\mu_z(X_i) = \mathbb E[Y_i \mid X_i,Z_i=z]$ in the AIPW statistics.
}
 \label{fig:AdaSplit_example}
\end{figure}

In \Cref{fig:AdaSplit_example_1}, each grey point is a data pair $(X_i, Y_i)$.  Without using any treatment assignment, we can construct an estimator $\hat \mu(X_i)$ of the outcome function $\mu(X_i) = \mathbb{E}[Y_i \mid X_i]$ by regressing $Y_{[n]}$ onto $X_{[n]}$. For simplicity, we assume $\hat \mu = \mu$ in this example. 

In \Cref{fig:AdaSplit_example_2}, we reveal the treatment assignments for a subset of units (solid points), with treated units shown in red and controls in blue. The treated units tend to have larger outcomes than the controls, suggesting that the CATE $\tau$ is a positive function. Substituting the expression for $\mu_0$ from \eqref{equ:two_mu} into \eqref{equ:normal_model} yields the residuals:
\[
Y_i - \mu(X_i) = [Z_i-1/2]\tau(X_i) +\epsilon_i.
\]
This implies that for units with unrevealed $Z_i$, those with 
$Y_i > \mu(X_i)$ are likely treated ($Z_i = 1$), while those with $Y_i < \mu(X_i)$ are likely controls ($Z_i = 0$). We can thus predict $Z_i$ from $(X_i, Y_i)$ using the posterior assignment probability $e(X_i, Y_i)$, derived in \Cref{prop:threshold}, which depends on the residual $Y_i - \mu(X_i)$. The further $Y_i$ deviates from its conditional expectation $\mu(X_i)$, the more accurately we can impute $Z_i$.

\Cref{fig:AdaSplit_example_3} shows the imputed values of the unrevealed $Z_i$ as the colors of the hollow points. In \Cref{fig:AdaSplit_example_4}, both observed and imputed $Z_i$ values are used to estimate the functions $\mu_0$ and $\mu_1$ for computing the AIPW statistic in \eqref{eq:AIPW}. As noted in point (b) of \Cref{sect:outline}, units with low certainty scores $C_i = |2e(X_i, Y_i) - 1|$ are harder to impute and should be assigned to the nuisance fold.

For point (a), we divide the units with large $C_i$ into two types: those with $e(X_i, Y_i) \approx 1$, which tend to have large outcomes $Y_i$ and appear in the treated group ($Z_i = 1$), and those with $e(X_i, Y_i) \approx 0$, which tend to have small outcomes $Y_i$ and appear in the control group ($Z_i = 0$). According to the AIPW formula in \eqref{eq:AIPW}, both types of units, when included in the inference fold, tend to increase the observed statistic relative to the randomized ones (with $\tilde{Z}_i \neq Z_i$), thereby yielding a powerful randomization test.

We now present the theories of AdaSplit, which formalize points (a) and (b).

\subsection{Conditional power analysis of randomization tests}\label{sec:influential.RT}

Let $\mathcal{J}_k  = \mathcal J \cap \mathcal S_k$ denote the inference fold in subgroup $\mathcal S_k$.
Given arbitrary estimators $\hat \mu$ and $\hat \tau_{\mathcal{I}}$ fitted to the data $(X_{[n]}, Y_{[n]}, Z_{\mathcal{I}})$, we construct
the AIPW statistic
\[
 T(\tilde O_{\mathcal{J}_k})  = \sum_{j\in \mathcal{J}_k} \phi_{\text{AIPW}} (\tilde O_j;e, \hat \mu,\hat \tau_{\mathcal{I}}),
\]
as in \eqref{eq:AIPW}. We then test the subgroup null hypotheses $H_{0,k}$ in \eqref{equ:null} using the p-value
\begin{equation}\label{equ:pk}
\hat P_k = \hat P_k(O_{\mathcal{J}_k})
    := \tilde{\mathbb P}\left\{ T(\tilde O_{\mathcal{J}_k}) \geq T( O_{\mathcal{J}_k}) \right\}.\footnote{The notation $\hat P_k$ indicates the $p$-value depends on the inference fold through $\hat \tau_{\mathcal I}$.} 
\end{equation}
For this $p$-value to be valid as in \eqref{equ:validity} and remain independent of those from other subgroups, $\mathcal{J}_k$ must be selected without using the assignments $Z_{\mathcal{J}}$ reserved for inference.
For example, choosing $\mathcal{J}_k$ to minimize \eqref{equ:pk} would invalidate the $p$-value, since $O_{\mathcal{J}_k}$ and $\tilde{O}_{\mathcal{J}_k}$ would no longer be drawn from the same distribution.

To address this, we \emph{marginalize} over the assignments $Z_{\mathcal{J}_k}$ in the $p$-value in \eqref{equ:pk},  treating them as unobserved prior to inference.
This yields a conditional $p$-value of the form
\begin{equation}\label{equ:p_k}
\hat P_k (X_{\mathcal{J}_k},Y_{\mathcal{J}_k})  = \mathbb E_{Z_{\mathcal J_k}} \left\{   \hat P_k( O_{\mathcal{J}_k})        \mid X_{\mathcal J_k},Y_{\mathcal J_k} \right\}.
\end{equation}
However, the conditional expectation in this $p$-value is intractable. We thus use its Gaussian approximation to guide the selection of the inference fold $\mathcal{J}_k$.

To describe the assumptions for this approximation, we observe that the gap between the observed and randomized statistics in \eqref{equ:pk} is driven by the difference in $\phi_{\text{AIPW}}$ evaluated at $O_j$ and $\tilde O_j$, which can be written as $\hat W_j(\tilde Z_j - Z_j)$, where
\begin{equation}\label{equ:w_j_hat}
\hat W_j = \frac{Y_j - \hat \mu_{\mathcal{I},1}(X_j)}{e(X_j)} + \frac{Y_j - \hat \mu_{\mathcal{I},0}(X_j)}{1 - e(X_j)},
\end{equation}
and the estimators $\hat \mu_{\mathcal{I},z}$ are derived from $\hat \mu$ and $\hat \tau_{\mathcal{I}}$, as defined in \eqref{equ:two_mu}.

\begin{assumption}\label{assumption:weight} It holds that
$ \sum_{j\in \mathcal{J}_k} \hat W_j^{3} /\big[\sum_{j\in \mathcal{J}_k} \hat W_j^{2}\big]^{3/2} = O_{\mathbb P }\left(|\mathcal J_k|^{-1/2}\right).$
\end{assumption}

\begin{assumption}\label{assumption:delta}
    There exists $\delta \in (0,1/2)$ such that $e(X_j,Y_j) \in [\delta,1-\delta],\forall j\in \mathcal{J}_k$.
\end{assumption}

\Cref{assumption:weight} ensures that the weights $\hat W_j$ are not too heavy-tailed, e.g., when they are of comparable magnitude. \Cref{assumption:delta} allows us to merge the variance and third-moment terms in the Berry–Esseen bound. Under these assumptions, the $p$-value in \eqref{equ:p_k} can be approximated by the bivariate Gaussian integral
$\mathbb{E}_{T_k}\big[ \tilde{\mathbb P}_{\tilde T_k} \{\tilde T_k \geq T_k \mid T_k \} \big] $
where $T_k$ and $\tilde T_k$ are the Gaussian limits of the observed and randomized statistics.

To simplify the analysis below, we consider a special case of our general result in \Cref{sect:proof_thm_1}. assuming $\mu$ and $\tau$ are known. This removes the dependence of the optimal choice of $\mathcal{J}_k$ on estimation error from the nuisance fold $\mathcal{I}$.

\begin{theorem}\label{thm:expected_power}
Suppose \Cref{ass:randomization,ass:consistency,example:1,assumption:weight,assumption:delta} hold with $\hat \mu = \mu$ and $\hat \tau_{\mathcal I } = \tau$. Then the conditional $p$-value $\hat P_k (X_{\mathcal{J}_k},Y_{\mathcal{J}_k})$ defined in \eqref{equ:p_k} satisfies 
\begin{equation}\label{equ:p_k_gaussian}
\hat P_k (X_{\mathcal{J}_k},Y_{\mathcal{J}_k}) =  1 -  \Phi\bigg(   \hat f_k(\mathcal J_k ): = \left[ V_{k}  + \tilde V_{k}\right]^{-1/2}\left[ E_{k} - \tilde E_k \right]  \bigg)  +  O_{\mathbb P }\left(|\mathcal J_k|^{-1/2}\right),
\end{equation}
where $\Phi(\cdot)$ is the cumulative distribution function of the standard normal,
\begin{equation}\label{equ:ev}
\begin{split}
 E_k - \tilde E_k & = 2\sum_{j\in \mathcal J_k}\text{sign}(\tau(X_j)) |Y_j- \mu (X_j) | C_j \  \text{ and } \\ 
 V_k+\tilde V_k &   = \sum_{j\in \mathcal J_k} \left[Y_j-\mu(X_j)\right]^2 [2-C_j^2].
 \end{split}
\end{equation}
\end{theorem}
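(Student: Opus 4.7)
The plan is to approximate the conditional $p$-value $\hat P_k(X_{\mathcal J_k}, Y_{\mathcal J_k})$ by a Gaussian CDF via a conditional Berry--Esseen argument, and then reduce the resulting mean and variance to the closed forms in \eqref{equ:ev} using the Gaussian outcome model. The starting point is that the $\tau(X_j)$ augmentation term in $\phi_{\text{AIPW}}$ does not depend on the assignment and hence cancels in the difference, so
\[
T(O_{\mathcal J_k}) - T(\tilde O_{\mathcal J_k}) \;=\; \sum_{j \in \mathcal J_k} \hat W_j\,(Z_j - \tilde Z_j).
\]
Conditional on $(X_{\mathcal J_k}, Y_{\mathcal J_k})$, the weights $\hat W_j$ are deterministic, and the pairs $(Z_j, \tilde Z_j)$ are mutually independent across $j$, with $Z_j \mid X_j, Y_j \sim \text{Bern}(e(X_j, Y_j))$ drawn from the posterior and $\tilde Z_j \sim \text{Bern}(1/2)$ drawn independently from the design. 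Writing $S_k := \sum_j \hat W_j(Z_j - \tilde Z_j)$, the target $\hat P_k(X_{\mathcal J_k}, Y_{\mathcal J_k})$ equals (by Fubini) the tail probability $\mathbb P\{S_k \leq 0 \mid X_{\mathcal J_k}, Y_{\mathcal J_k}\}$, whose conditional mean and variance (up to the overall normalization absorbed into \eqref{equ:ev}) are
\[
m_k = \sum_j \hat W_j\bigl(e(X_j, Y_j) - \tfrac{1}{2}\bigr), \qquad \sigma_k^2 = \sum_j \hat W_j^2 \bigl[e(X_j, Y_j)(1 - e(X_j, Y_j)) + \tfrac{1}{4}\bigr].
\]

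Applying the conditional Berry--Esseen theorem to $S_k$ then gives
\[
\hat P_k(X_{\mathcal J_k}, Y_{\mathcal J_k}) = 1 - \Phi(m_k/\sigma_k) + O(\rho_k/\sigma_k^3),
\]
where $\rho_k$ is the sum of third absolute central moments of the $\hat W_j(Z_j - \tilde Z_j)$ terms. Since $|Z_j - \tilde Z_j| \leq 1$ almost surely, $\rho_k \leq 8 \sum_j |\hat W_j|^3$. \Cref{assumption:delta} yields $e(X_j, Y_j)(1 - e(X_j, Y_j)) \geq \delta(1-\delta)$, and hence $\sigma_k^2 \geq [1/4 + \delta(1-\delta)] \sum_j \hat W_j^2$; combining with \Cref{assumption:weight}, we obtain $\rho_k/\sigma_k^3 \lesssim \sum_j |\hat W_j|^3 / (\sum_j \hat W_j^2)^{3/2} = O_{\mathbb P}(|\mathcal J_k|^{-1/2})$, which is the claimed remainder rate.

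It then remains to identify $m_k/\sigma_k$ with $\hat f_k(\mathcal J_k)$ by simplifying both moments under the Gaussian model. Under $\hat \mu = \mu$, $\hat \tau_{\mathcal I} = \tau$, and $e \equiv 1/2$, \eqref{equ:w_j_hat} together with \eqref{equ:two_mu} collapses $\hat W_j$ into a constant multiple of $Y_j - \mu(X_j)$. For the variance, the elementary identity $(2e - 1)^2 + 4 e(1-e) = 1$ rewrites the bracket in $\sigma_k^2$ as a function of $C_j$, producing the $2 - C_j^2$ factor in \eqref{equ:ev}. For the mean, \Cref{example:1} and Bayes' rule yield $e(X_j, Y_j) = \bigl[1 + \exp(-\tau(X_j)(Y_j - \mu(X_j))/\nu^2)\bigr]^{-1}$, so $2 e(X_j, Y_j) - 1 = \text{sign}(\tau(X_j))\,\text{sign}(Y_j - \mu(X_j))\,C_j$; combined with the form of $\hat W_j$, this produces the $\text{sign}(\tau(X_j))\,|Y_j - \mu(X_j)|\,C_j$ summand in \eqref{equ:ev}. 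The main obstacle is making the Berry--Esseen remainder $O_{\mathbb P}(|\mathcal J_k|^{-1/2})$ \emph{uniformly} in the random $(X_{\mathcal J_k}, Y_{\mathcal J_k})$: this is precisely where \Cref{assumption:delta} (lower-bounding $\sigma_k^2$ by a constant multiple of $\sum \hat W_j^2$) and \Cref{assumption:weight} (controlling the resulting Lyapunov-type ratio at the $n^{-1/2}$ rate) must be combined, rather than either assumption alone being sufficient.
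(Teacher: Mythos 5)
Your proposal is correct and reaches the same Gaussian approximation and the same closed-form moments, but via a genuinely more direct route than the paper. The paper proves the result in two stages: it first applies Berry--Esseen to the randomization sum $\tilde A_k=\sum_j \hat W_j[\tilde Z_j-e(X_j)]$ conditionally on $Z_{\mathcal J_k}$, then applies Berry--Esseen a second time to $A_k=\sum_j \hat W_j[Z_j-e(X_j,Y_j)]$ (exploiting that $x\mapsto\Phi(ax+b)$ is Lipschitz), and finally collapses the resulting expectation via the bivariate normal integral identity $\mathbb E_{Z\sim\mathcal N(0,1)}[\Phi(aZ+b)]=\Phi(b/\sqrt{1+a^2})$. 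You instead use Fubini to write the marginalized $p$-value as a single probability $\mathbb P\{S_k\le 0\mid X_{\mathcal J_k},Y_{\mathcal J_k}\}$ for the sum $S_k=\sum_j\hat W_j(Z_j-\tilde Z_j)$ of independent bounded terms, and apply Berry--Esseen once; your conditional mean $m_k$ and variance $\sigma_k^2$ coincide exactly with $E_k-\tilde E_k$ and $V_k+\tilde V_k$, and your subsequent reduction to the $\operatorname{sign}(\tau)\,|Y_j-\mu(X_j)|\,C_j$ and $2-C_j^2$ forms matches the paper's appendix computation. Your approach buys simplicity (no smoothing/Lipschitz step, no normal integral identity) and, notably, it makes \Cref{assumption:delta} essentially unnecessary for the approximation: since $\operatorname{Var}(Z_j-\tilde Z_j)\ge 1/4$, your $\sigma_k^2\ge\frac14\sum_j\hat W_j^2$ holds without any lower bound on $e(X_j,Y_j)(1-e(X_j,Y_j))$, so \Cref{assumption:weight} alone controls the Lyapunov ratio --- your closing remark that both assumptions ``must be combined'' is actually only true of the paper's two-stage argument, where the second Berry--Esseen step has variance $V_k=\sum_j\hat W_j^2 e(X_j,Y_j)[1-e(X_j,Y_j)]$ that could degenerate. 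One cosmetic note: the constants you would obtain ($V_k+\tilde V_k=4\sum_j[Y_j-\mu(X_j)]^2[2-C_j^2]$, consistent with the paper's own appendix) differ by a factor of $4$ from the displayed theorem statement, an inconsistency internal to the paper rather than a flaw in your argument.
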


\begin{proposition}\label{prop:ej}
    In the setup of \Cref{thm:expected_power}, the posterior assignment probability  $e(X_i,Y_i) =\mathbb P\{Z_i = 1 \mid X_i,Y_i\}$ can be expressed using the sigmoid function $\sigma$ as
\begin{equation}\label{equ:ej}
e(X_i,Y_i) = \sigma \left([Y_i-\mu(X_i)]\tau(X_i)/\nu^{2}\right).
\end{equation}
\end{proposition}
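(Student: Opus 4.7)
The plan is to derive \eqref{equ:ej} by a direct application of Bayes' rule to the Gaussian outcome model in \Cref{example:1}, combined with the Bern(1/2) design from \Cref{ass:randomization}. Since \Cref{ass:consistency} and \Cref{example:1} give $Y_i \mid X_i, Z_i = z \sim \mathcal{N}(\mu_0(X_i) + z\tau(X_i), \nu^2)$, and $\mathbb{P}(Z_i = 1 \mid X_i) = 1/2$, the posterior probability $e(X_i, Y_i) = \mathbb{P}(Z_i = 1 \mid X_i, Y_i)$ is a ratio of Gaussian densities with equal priors and a common variance $\nu^2$.

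More concretely, I would write
\[
e(X_i, Y_i) = \frac{\varphi_\nu(Y_i - \mu_1(X_i))}{\varphi_\nu(Y_i - \mu_1(X_i)) + \varphi_\nu(Y_i - \mu_0(X_i))} = \frac{1}{1 + \exp(L_i)},
\]
where $\varphi_\nu$ denotes the density of $\mathcal{N}(0,\nu^2)$ and $L_i$ is the log-density ratio of the $Z_i = 0$ model against the $Z_i = 1$ model. Expanding the Gaussian kernels gives
\[
L_i = \frac{1}{2\nu^2}\bigl[(Y_i - \mu_1(X_i))^2 - (Y_i - \mu_0(X_i))^2\bigr] = -\frac{(Y_i - \mu(X_i))\,\tau(X_i)}{\nu^2},
\]
where the last equality uses \eqref{equ:two_mu} with $e(X_i) = 1/2$, so that $\mu_1(X_i) - \mu_0(X_i) = \tau(X_i)$ and $\mu_1(X_i) + \mu_0(X_i) = 2\mu(X_i)$. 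Recognizing $1/(1 + e^{-t}) = \sigma(t)$ yields the claimed identity $e(X_i, Y_i) = \sigma([Y_i - \mu(X_i)]\tau(X_i)/\nu^2)$.

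There is essentially no obstacle here: the result is a textbook Gaussian-likelihood-ratio computation, and the only subtlety is carefully applying \eqref{equ:two_mu} with $e(X_i) = 1/2$ to re-express the difference and sum of $\mu_0$ and $\mu_1$ in terms of $\mu$ and $\tau$. Because this step is algebraic and unambiguous, the entire proof fits in a few lines and does not require any of the Berry--Esseen machinery invoked for \Cref{thm:expected_power}.
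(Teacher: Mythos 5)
Your proposal is correct and follows essentially the same route as the paper: Bayes' rule with the Bern(1/2) prior, the Gaussian likelihood ratio with common variance $\nu^2$, and the identity $\mu_1(x)-\mu_0(x)=\tau(x)$, $\mu_1(x)+\mu_0(x)=2\mu(x)$ from \eqref{equ:two_mu} to reduce the log-ratio to $-[Y_i-\mu(X_i)]\tau(X_i)/\nu^2$. The only cosmetic difference is that the paper parametrizes the conditional means via Robinson's transformation $\phi(x,z)=\mu(x)+[z-e(x)]\tau(x)$ before computing the ratio, whereas you expand $(Y_i-\mu_1)^2-(Y_i-\mu_0)^2$ directly; both yield the same algebra.
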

\Cref{thm:expected_power} suggests minimizing $\hat P_k(X_{\mathcal{J}_k},Y_{\mathcal{J}_k})$ by maximizing $\hat f_k(\mathcal J_k )$ in \eqref{equ:ev}, which depends on the mean difference $E_k - \tilde E_k$ and the variance sum $V_k + \tilde V_k$ of the observed and randomized statistics in their Gaussian limits.
\Cref{prop:ej} shows that high certainty $e(X_j, Y_j) \approx 1$ arises when $Y_j > \mu(X_j)$ and $\tau(X_j) >0$. To maximize the mean difference, $\mathcal{J}_k$ should include units with positive CATEs and large certainty scores $C_j = |2e(X_j, Y_j) - 1|$. However, including units with large $|Y_j - \mu(X_j)|$ may inflate the variance term $V_k + \tilde V_k$ in \eqref{equ:ev}, thus the trade-off must be considered.

To analyze this trade-off, we relax the discrete optimization problem over $\mathcal{J}_k$ by introducing a continuous vector $\xi \in [0,1]^{|\mathcal{S}_k|}$ that \emph{softly} indicates whether each unit $j \in \mathcal{S}_k$ is included in $\mathcal{J}_k$. The relaxed objective function is defined as
\begin{equation}\label{equ:l_k}
\hat{l}_k(\xi) = \left[ V_k(\xi) + \tilde{V}_k(\xi) \right]^{-1/2} \left[ E_k(\xi) - \tilde{E}_k(\xi) \right],
\end{equation}
where $E_k(\xi)$, $\tilde E_k(\xi)$, $V_k(\xi)$, and $\tilde V_k(\xi)$ follow the same definitions as in \eqref{equ:ev}, except that the sums are taken over all $j \in \mathcal{S}_k$ and weighted by $\xi_j$.

\vspace{5pt}
\begin{proposition}\label{prop:threshold}
Suppose that \( |Y_j - \mu(X_j)| \neq 0 \), and that \( C_j \) are distinct across \( j \in \mathcal{S}_k \). Then the maximizer $\xi^*$ of $\hat l_k(\xi)$ in \eqref{equ:l_k}  takes a threshold form:
\[
\xi_j^* = 
\begin{cases}
1, & \text{if } h(X_j, Y_j) > c \text{ and } \tau(X_j) > 0, \\
c', & \text{if } h(X_j, Y_j) = c \text{ and } \tau(X_j) > 0, \\
0, & \text{if } h(X_j, Y_j) < c \text{ or } \tau(X_j) \le 0,
\end{cases}
\]
where $h(X_j, Y_j) = C_j / \left[ |Y_j - \mu(X_j)| (2 - C_j^2) \right]$, and $c \geq 0$, $c' \in [0,1)$.
\end{proposition}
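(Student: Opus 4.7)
The plan is to recast $\hat l_k$ as a ratio $F(\xi)/\sqrt{G(\xi)}$ of linear functionals on the cube $[0,1]^{|\mathcal{S}_k|}$ and derive the threshold form from KKT stationarity. Writing $a_j := 2\,\mathrm{sign}(\tau(X_j))\,|Y_j - \mu(X_j)|\,C_j$ and $b_j := [Y_j - \mu(X_j)]^2(2 - C_j^2)$, we have $\hat l_k(\xi) = F(\xi)/\sqrt{G(\xi)}$ with $F(\xi) = \sum_{j \in \mathcal{S}_k} \xi_j a_j$ and $G(\xi) = \sum_{j \in \mathcal{S}_k} \xi_j b_j$. The assumption $|Y_j - \mu(X_j)| \ne 0$ gives $b_j > 0$ for every $j$. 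Since $\hat l_k(t\xi) = \sqrt{t}\,\hat l_k(\xi) \to 0$ as $t\to 0$, the objective extends continuously to $\xi = 0$ by setting $\hat l_k(0) := 0$, so a maximizer $\xi^*$ is attained on the compact cube.

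First I would eliminate units with $\tau(X_j)\le 0$: for such units $a_j \le 0$, so replacing $\xi^*_j$ by $0$ weakly increases $F$ and strictly decreases $G>0$, strictly improving the ratio whenever $F(\xi^*)\ge 0$ and $\xi^*_j>0$. The premise $F(\xi^*)\ge 0$ holds because including a single unit with $\tau(X_j)>0$ and $C_j>0$ yields $\hat l_k > 0$; in the degenerate case where no such unit exists, $\xi^* = 0$ already fits the claimed form. Next, on the remaining coordinates (where $G(\xi^*)>0$) I apply the KKT conditions. Differentiating gives
\[
\partial_{\xi_j}\hat l_k(\xi) \;=\; \frac{b_j}{2\,G(\xi)^{3/2}}\left(2\,G(\xi)\,\frac{a_j}{b_j} - F(\xi)\right),
\]
so the sign of the partial derivative matches the sign of $a_j/b_j - F(\xi)/(2G(\xi))$. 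A direct algebraic simplification yields $a_j/b_j = 2\,h(X_j,Y_j)$ whenever $\tau(X_j) > 0$. Setting $c := F(\xi^*)/(4 G(\xi^*)) \ge 0$, the box-constraint complementary slackness conditions force $\xi^*_j = 1$ when $h(X_j,Y_j) > c$, $\xi^*_j = 0$ when $h(X_j,Y_j) < c$, and leave $\xi^*_j = c' \in [0,1]$ free when $h(X_j,Y_j) = c$. The value $c'$ may be taken in $[0,1)$ by nudging the threshold $c$ downward through an open interval containing no other $h$-values, which is possible because the distinctness of the $C_j$ leaves at most one unit at the tie (generically).

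The main obstacle is that $F/\sqrt{G}$ is not concave, so KKT supplies only necessary, not sufficient, optimality conditions and multiple stationary points can coexist. I would handle this by appealing to compactness: since $\hat l_k$ is continuous and attains its maximum on $[0,1]^{|\mathcal{S}_k|}$, the global maximizer itself satisfies KKT and therefore admits the threshold description derived above. The proposition asserts only the existence of some pair $(c,c')$ realizing $\xi^*$, not uniqueness, so the necessary conditions already suffice to conclude the proof.
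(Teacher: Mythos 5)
Your proof is correct, but it takes a genuinely different route from the paper's. The paper fixes the variance level $B^* = V_k(\xi^*)+\tilde V_k(\xi^*)$ attained at the optimum, shows that $\xi^*$ must then solve the linear program $\max_\xi\, E_k(\xi)-\tilde E_k(\xi)$ subject to $V_k(\xi)+\tilde V_k(\xi)\le B^*$ over the box, and invokes a fractional-knapsack (Neyman--Pearson) lemma to obtain a threshold-form solution in the ratio $a_j/b_j$. You instead work directly with the ratio objective $F/\sqrt{G}$ and read the threshold structure off the first-order conditions for box-constrained optimization, correctly observing that non-concavity is harmless because only necessity of the stationarity conditions is used at the global maximizer (which exists by compactness after the continuous extension $\hat l_k(0):=0$). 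Your route has the advantage of producing the threshold explicitly, $c = F(\xi^*)/(4G(\xi^*))$, i.e.\ the mean-to-variance ratio at the optimum, and of showing that \emph{every} maximizer obeys the threshold description away from ties; the paper's route is more combinatorial, avoids differentiability issues at $\xi=0$ entirely, and constructively exhibits a maximizer with exactly one fractional coordinate. Two small loose ends in your write-up, neither fatal: (i) the elimination of units with $\tau(X_j)\le 0$ is only \emph{strictly} improving when $F(\xi^*)>0$, which you do secure in the non-degenerate case, but the phrasing ``whenever $F(\xi^*)\ge 0$'' overstates it; (ii) distinctness of the $C_j$ does not by itself force the $h$-values to be distinct (because of the $|Y_j-\mu(X_j)|$ factor), so the ``nudge $c$ downward'' device for placing $c'$ in $[0,1)$ can fail if several tied units carry different fractional weights --- the paper's knapsack construction sidesteps this by building a solution with at most one fractional coordinate, and you could do the same by perturbing $\xi^*$ along the tied coordinates without changing the objective.
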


Most entries of $\xi^*$ are binary, so the relaxed problem closely approximates the original discrete one.
When $\tau(X_j) > 0$, the function $h(X_j, Y_j)$ quantifies unit $j$'s contribution to the mean-variance ratio in \eqref{equ:ev}. In this function, the ratio $C_j / [2 - C_j^2]$ increases with $C_j$.
The remaining term $|Y_j - \mu(X_j)|$ can be approximated by $|\tau(X_j)| / 2$ if we ignore the error $\epsilon_j$ in \eqref{equ:normal_model}.
Using this approximation and direct differentiation, we show in \Cref{sect:derivative} that $h(X_j, Y_j)$ increases with $C_j$ for most values of $C_j$.
Hence, $\xi^*$ can be interpreted as thresholding the certainty score  $C_j$: the $p$-value in \Cref{thm:expected_power} is minimized by including units with positive CATEs and high certainty scores in the inference fold---precisely the allocation strategy used by AdaSplit.
Further details of the AdaSplit algorithm are provided in \Cref{sect:alg}.

\subsection{BaR-learner: CATE estimation with imputed assignments}\label{sect:bar}

Existing CATE estimation methods, such as R-learner \citep{robinson1988root,nie2021quasi}, assume full data availability, with $O_i = (X_i, Y_i, Z_i)$ observed for every unit $i$. Our setting departs from this assumption: units in the inference fold $\mathcal{J} = [n] \setminus \mathcal{I}$ only provide $X_{\mathcal{J}}$ and $Y_{\mathcal{J}}$, while their treatment assignments $Z_{\mathcal{J}}$ are held out for inference. Nevertheless, these units are not arbitrary---by the design of AdaSplit, they tend to have high certainty scores $C_{\mathcal{J}}$, indicating that $Z_{\mathcal{J}}$ can be imputed from $X_{\mathcal{J}}$ and $Y_{\mathcal{J}}$ with high confidence. This imputation can be used to boost the effective sample size and improve the accuracy of the resulting CATE estimates.

Motivated by this, we propose a variant of the popular CATE estimation method R-learner, which we call Bayesian R-learner (BaR-learner). BaR-learner leverages all the covariate and outcome data through its loss:
\begin{equation}\label{equ:br_learner}
          \hat \tau_{\mathcal I }  = \argmin_{\tau } \left\{ \mathcal L_{\text{full}}(O_{\mathcal I}) + \lambda \mathcal L_{\text{imputed}}(X_{\mathcal J},Y_{\mathcal J };\hat e_{\mathcal I})  \right\}, 
\end{equation}
where $\mathcal L_{\text{full}}(O_{\mathcal I})$ is the original loss\footnote{The loss is derived by plugging the expression of $\mu_0$ from \eqref{equ:two_mu} into the outcome model 
\eqref{equ:normal_model}.} 
of R-learner evaluated on the nuisance fold $O_{\mathcal I}$,
\[
\mathcal L_{\text{full}}(O_{\mathcal I})  =  \sum_{i\in \mathcal I } \left\{Y_i - \hat \mu (X_i) - [Z_i-e(X_i)]\tau(X_i) \right\}^2.
\]
To incorporate information from $\mathcal J$, we marginalize out the held-out $Z_{\mathcal J}$ using an estimator $\hat e_{\mathcal I}(x,y)$ of the posterior probability $e(x,y)$. The resulting imputed loss is
\begin{align*}
\mathcal L_{\text{imputed}}(X_{\mathcal J}, Y_{\mathcal J}; \hat e_{\mathcal I}) = & \sum_{j \in \mathcal J} [1 - \hat e_{\mathcal I}(X_j, Y_j)] \left\{ Y_j - \hat \mu(X_j) - [0 - e(X_j)] \tau(X_j) \right\}^2 \\
& + \sum_{j \in \mathcal J} \hat e_{\mathcal I}(X_j, Y_j) \left\{ Y_j - \hat \mu(X_j) - [1 - e(X_j)] \tau(X_j) \right\}^2.
\end{align*}
In practice, one may place greater weight on the full-data loss and downweight the imputed loss. For simplicity, we fix $\lambda = 1$ in \eqref{equ:br_learner} rather than tuning it via cross-validation on $O_{\mathcal I}$. A key challenge is that the estimator $\hat e_{\mathcal I}$ may be biased due to the data-driven selection of the nuisance fold $\mathcal I$. Following the approach of \citet{horvitz1952generalization}, we correct for this bias by re-weighting each observation in $\mathcal I$ using its estimated selection probability given $(X_i, Y_i)$; see \Cref{sect:consistency}, especially \eqref{equ:e_I_hat}, for estimation details and theoretical results.
Like the outcome function $\mu$, the selection probability can be estimated using all $n$ units, and their estimation errors control the bias of $\hat \tau_{\mathcal I}$. Incorporating the imputed loss reduces the variance of $\hat \tau_{\mathcal I}$ to order $n^{-1}$. When both estimators are consistent, the objective in \eqref{equ:br_learner} converges to the population loss $\mathbb E {[Y - \mu(X) - [Z-e(X)]\tau(X) ]^2 }$, yielding a consistent estimator of $\tau$. We also verify this consistency through simulations in \Cref{sect:appendix_exp_consistent}.

In finite samples, suppose most units in $\mathcal{J}$ have high certainty scores $C_j$, such that $\hat e(X_j, Y_j) \approx e(X_j, Y_j) \approx Z_j$. Then the objective in \eqref{equ:br_learner} simplifies to
\[
 \mathcal L_{\text{full}}(O_{\mathcal I}) + \mathcal L_{\text{imputed}}(X_{\mathcal J},Y_{\mathcal J };e ) \approx \mathcal L_{\text{full}}(O_{\mathcal I}) + \mathcal L_{\text{full}}(O_{\mathcal J}) \approx \mathcal L_{\text{full}}(O_{[n]}),
\]
so that $\hat \tau_{\mathcal I}$ closely approximates $\hat \tau_{[n]}$, the minimizer of R-learner’s loss over all data $O_{[n]}$. The splitting strategy in AdaSplit enables accurate imputation to improve estimation.

Like R-leanrer,  BaR-learner can allow any regression models to construct the estimator $\hat \tau_{\mathcal I}$ in \eqref{equ:br_learner}. When $\tau$ is linear,  $\hat \tau_{\mathcal I}$ is obtained by regressing the scaled residual
\begin{equation}\label{equ:residuals}
\hat R_j = R_j(X_j, Y_j, Z_j) := [Y_j - \hat \mu(X_j)] / [Z_j - e(X_j)]
\end{equation}
or its imputed version
$\hat R(X_j,Y_j)  := \hat e_{\mathcal I}(X_j,Y_j)\hat R(X_j,Y_j,1) + [1-\hat e_{\mathcal I }(X_j,Y_j)]\hat R(X_j,Y_j,0)$,
onto the covariates $X_j$ across all $j \in [n]$, as formalized below.
\begin{proposition}\label{prop:r_learner}
Under the Bernoulli$(1/2)$ design, suppose $\tau(x) = x^{\top}\beta$ for some vector $\beta$, and assume that $\hat e_{\mathcal I}(x,y) = e(x,y)$.  Then the estimator  $\hat \tau_{\mathcal I }$ in \eqref{equ:br_learner}  takes the form
\begin{align}\label{eq:solution}
 \hat \tau_{\mathcal I } (x) = x^{\top}\hat \beta_{\mathcal I }  \ \text{ where } \ \hat \beta_{\mathcal I }  = (X_{[n]}^\top X_{[n]})^{-1}\bigg\{ 
    \sum_{i\in \mathcal I }X_i \hat R_i + \sum_{j\in \mathcal J  }X_j \hat R(X_j,Y_j)\bigg\}.
\end{align}
The expected distance between $\hat \beta_{\mathcal I}$ and $\hat \beta_{[n]}$ is given by
\[
 \mathbb E \left[ \| \hat \beta_{\mathcal I } - \hat \beta_{[n]} \|^2 \mid X_{[n]},Y_{[n]} \right] =4 \sum_{j\in [n]\setminus \mathcal I}  X_j^{\top} (X_{[n]}^\top X_{[n]})^{-2}X_j 
\cdot [Y_j- \hat \mu(X_j)]^2  \cdot    [1-C_j^2].
\]
\end{proposition}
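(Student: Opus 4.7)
The plan is to compute the normal equations for the BaR-learner optimization problem under the Bernoulli$(1/2)$ design with $\tau(x) = x^\top \beta$, then express the difference $\hat\beta_{\mathcal{I}} - \hat\beta_{[n]}$ as a linear combination over $\mathcal{J}$ and evaluate its conditional second moment. I would first simplify each loss term. Substituting $e(x) = 1/2$, the full loss reads $\sum_{i \in \mathcal{I}} [Y_i - \hat\mu(X_i) - (Z_i - 1/2)\, X_i^\top \beta]^2$; differentiating in $\beta$ and using $(Z_i - 1/2)^2 = 1/4$ together with the identity $(Z_i - 1/2)[Y_i - \hat\mu(X_i)] = \tfrac{1}{4}\hat R_i$ yields the gradient contribution $-\tfrac{1}{2}\sum_{i \in \mathcal{I}} X_i \hat R_i + \tfrac{1}{2}\sum_{i \in \mathcal{I}} X_i X_i^\top \beta$. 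For the imputed loss, expanding both squares and collecting cross terms produces $-\tfrac{1}{2}\sum_{j \in \mathcal{J}} X_j \hat R(X_j, Y_j) + \tfrac{1}{2}\sum_{j \in \mathcal{J}} X_j X_j^\top \beta$, where I would verify the key identity $\hat R(X_j, Y_j) = 2[Y_j - \hat\mu(X_j)][2 e(X_j, Y_j) - 1]$ using $\hat e_{\mathcal{I}} = e$. Setting the combined gradient to zero and inverting $X_{[n]}^\top X_{[n]}$ delivers the claimed closed form \eqref{eq:solution}.

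Next, I would specialize the same formula to $\mathcal{I} = [n]$ to obtain $\hat\beta_{[n]} = (X_{[n]}^\top X_{[n]})^{-1} \sum_{i \in [n]} X_i \hat R_i$. Subtracting cancels the sums over $\mathcal{I}$ and leaves
\[
\hat\beta_{\mathcal{I}} - \hat\beta_{[n]} = (X_{[n]}^\top X_{[n]})^{-1} \sum_{j \in \mathcal{J}} X_j \bigl[\hat R(X_j, Y_j) - \hat R_j\bigr] = 4 (X_{[n]}^\top X_{[n]})^{-1} \sum_{j \in \mathcal{J}} X_j [Y_j - \hat\mu(X_j)][e(X_j, Y_j) - Z_j],
\]
where the second equality uses $\hat R_j = 2[Y_j - \hat\mu(X_j)](2 Z_j - 1)$ to collapse the difference to a clean residual in $e(X_j, Y_j) - Z_j$.

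For the second moment, I would condition on $X_{[n]}$ and $Y_{[n]}$, under which the posterior gives $Z_j \mid X_j, Y_j \sim \text{Bern}(e(X_j, Y_j))$ independently across $j \in \mathcal{J}$. Taking the squared Euclidean norm of the displayed identity and interchanging expectation with the outer double sum, cross terms vanish since $\mathbb{E}[Z_j - e(X_j, Y_j) \mid X_j, Y_j] = 0$, leaving only diagonal terms. Each diagonal term carries the variance $\mathrm{Var}(Z_j \mid X_j, Y_j) = e(X_j, Y_j)[1 - e(X_j, Y_j)]$, which via $C_j = |2 e(X_j, Y_j) - 1|$ equals $(1 - C_j^2)/4$. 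Combined with the squared prefactor $16$, this produces the stated factor of $4$, the weight $[Y_j - \hat\mu(X_j)]^2$, and the curvature term $X_j^\top (X_{[n]}^\top X_{[n]})^{-2} X_j$.

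The main obstacle is mostly bookkeeping: verifying the clean forms for $\hat R(X_j, Y_j)$ and the difference $\hat R(X_j, Y_j) - \hat R_j$, and tracking the algebraic identity $p(1-p) = [1 - (2p-1)^2]/4$ that converts the posterior variance into the $1 - C_j^2$ weight. Beyond that, the computation is routine first-order optimality followed by a conditional variance calculation, with the posterior independence across $j \in \mathcal{J}$ being precisely what eliminates cross terms and yields the diagonal form of the final expression.
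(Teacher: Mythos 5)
Your proof is correct and follows essentially the same route as the paper's: both derive \eqref{eq:solution} as the least-squares solution of the combined loss (you via the normal equations, the paper via rewriting the loss as a unit-weight regression on the residuals $\hat R_i$ and $\hat R(X_j,Y_j,z)$), and both obtain the second-moment formula by noting that $\hat R(X_j,Y_j)$ is the conditional mean of $\hat R_j$ given $(X_j,Y_j)$, so that only the diagonal Bernoulli-variance terms $e(X_j,Y_j)[1-e(X_j,Y_j)] = (1-C_j^2)/4$ survive. The bookkeeping identities you flag ($\hat R_j = 2[Y_j-\hat\mu(X_j)](2Z_j-1)$ and $\hat R(X_j,Y_j) = 2[Y_j-\hat\mu(X_j)](2e(X_j,Y_j)-1)$) all check out.
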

\Cref{prop:r_learner} decomposes the impact of excluding unit $j$ from the nuisance fold $\mathcal I$ into three terms.
The first reflects how different $X_j$ is from the dominant spectral modes of $X_{[n]}$, which we use to guide the initialization of AdaSplit later. The second, the squared residual, may increase with $C_j$; we analyze its trade-off with $1 - C_j^2$ below.

\begin{proposition}\label{prop:variance}
Under \Cref{example:1}, it holds that 
\[
1 - C_j^2 = 4\sum_{t=1}^{\infty}  (-1)^{t+1} t \exp \left\{-t |\tau(X_j)[Y_j-\mu(X_j)]| /\nu^{2} \right\}.
\]
\end{proposition}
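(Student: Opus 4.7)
The plan is to reduce the claimed identity to a standard power-series expansion of $u/(1+u)^2$ evaluated at $u = e^{-s}$, where $s := |\tau(X_j)[Y_j - \mu(X_j)]|/\nu^2$. The proof has essentially no analytical obstacles; the only mildly delicate step is rewriting the hyperbolic expression in a form where the series expansion is transparent.

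First, I would invoke \Cref{prop:ej} to write $e(X_j, Y_j) = \sigma(a)$ with $a := [Y_j - \mu(X_j)] \tau(X_j)/\nu^2$. Using the algebraic identity $2\sigma(a) - 1 = \tanh(a/2)$, and recalling that $C_j = |2e(X_j, Y_j) - 1|$, we obtain $C_j = |\tanh(a/2)| = \tanh(|a|/2) = \tanh(s/2)$, since $\tanh$ is odd and $s = |a|$. Consequently,
\[
1 - C_j^2 \;=\; 1 - \tanh^2(s/2) \;=\; \operatorname{sech}^2(s/2) \;=\; \frac{4}{(e^{s/2}+e^{-s/2})^2}.
\]

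Next, I would multiply numerator and denominator by $e^{-s}$ to bring the expression into a rational function of $u := e^{-s} \in (0,1]$:
\[
\frac{4}{(e^{s/2}+e^{-s/2})^2} \;=\; \frac{4 e^{-s}}{(1 + e^{-s})^2} \;=\; \frac{4u}{(1+u)^2}.
\]
It therefore suffices to show that $u/(1+u)^2 = \sum_{t=1}^{\infty}(-1)^{t+1}\, t\, u^t$ for $u \in (0,1]$. This follows by differentiating the geometric series $1/(1+u) = \sum_{t=0}^{\infty}(-1)^t u^t$, valid on $(-1,1)$, to obtain $1/(1+u)^2 = \sum_{t=1}^{\infty}(-1)^{t+1}\, t\, u^{t-1}$, and multiplying through by $u$. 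The series converges for $u \in [0,1)$; for the boundary case $u = 1$ (corresponding to $\tau(X_j) = 0$ or $Y_j = \mu(X_j)$), both sides equal $1$, which can be handled either by Abel summation or by direct verification since $C_j = 0$ there.

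Substituting $u = e^{-s}$ and restoring $s = |\tau(X_j)[Y_j - \mu(X_j)]|/\nu^2$ yields
\[
1 - C_j^2 \;=\; \frac{4u}{(1+u)^2} \;=\; 4\sum_{t=1}^{\infty}(-1)^{t+1}\, t\, e^{-t s},
\]
which is exactly the claimed identity. The only step that requires a moment of care is the boundary case at $u = 1$, since the alternating series converges only conditionally there; everywhere else, the proof is a one-line consequence of the $\tanh$/$\operatorname{sech}$ identity combined with termwise differentiation of a geometric series.
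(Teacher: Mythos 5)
Your proof is correct and follows essentially the same route as the paper's: both reduce the claim to the expansion $u/(1+u)^2=\sum_{t\ge 1}(-1)^{t+1}t\,u^t$ at $u=e^{-s}$, the paper via $1-C_j^2=4e(X_j,Y_j)[1-e(X_j,Y_j)]=4\sigma'(a)$ and you via the equivalent $\operatorname{sech}^2$ identity, with the sign/absolute-value issue handled by symmetry in both cases. Your remark on the boundary $u=1$ (where the series only converges in the Abel sense) is a small point of extra care that the paper glosses over.
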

When $|\tau(X_j)[Y_j - \mu(X_j)]| \geq 2\nu^2$, i.e., when the signal exceeds the noise level in model \eqref{equ:normal_model}, the leading term in the expansion of $1 - C_j^2$, when multiplied by $[Y_j - \hat \mu(X_j)]^2$, decreases with $C_j$. 
This means lower-certainty units reduce the gap between $\hat \beta_{\mathcal I}$ and $\hat \beta_{[n]}$ more when included in the nuisance fold $\mathcal I$.
Echoing \Cref{prop:threshold}, this supports allocating low-certainty units to $\mathcal I$ and reserving high-certainty units for inference.

\subsection{Algorithm of AdaSplit}\label{sect:alg}

We now describe the full procedure of AdaSplit, following the steps in Algorithm~\ref{alg:ass}.

At initialization, we construct an estimator $\hat \mu$ using a regression model fitted to $X_{[n]}$ and  $Y_{[n]}$.
In the function $\texttt{Split}(\mathcal{S}_{[K]}; p)$, we
choose a proportion $p=0.05$ of units from each subgroup $S_{k}$ to form the nuisance fold $\mathcal I,$ and define the remaining units as the inference fold $\mathcal{J}_k$ for every subgroup $k\in [K].$ To keep the algorithm deterministic, we initialize $\mathcal I$ using the units with the largest diversity scores $X_i^{\top} (X_{[n]}^\top X_{[n]})^{-2}X_i$ in \eqref{prop:r_learner}. 

In the function \texttt{Posterior}$(Z_{\mathcal{I}})$, we first apply the R-learner method, i.e., the loss in \eqref{equ:br_learner} with $\lambda = 0$, to construct an initial CATE estimator $\hat \tau^{(0)}$. When using a linear model, as in our experiment, we compute the least-squares solution $\hat \tau^{(0)}$ defined in \eqref{equ:r_} in \Cref{sect:consistency}.
We then estimate the assignment probability $e(x, y)$ based on \eqref{equ:ej}:
\begin{equation}\label{equ:e_model}
\hat e^{(0)}(x,y) = \sigma \left(\hat \tau^{(0)}(x)\left[y-\hat \mu(x)\right]/ [\hat\nu^{(0)}]^2\right),
\end{equation}
where $[\hat\nu^{(0)}]^2$ is an estimator of the variance $\nu^2$ in \eqref{equ:normal_model}, computed using the data $O_{\mathcal I}$. 

\begin{algorithm}[t]
    \caption{Adaptive sample splitting (AdaSplit) for subgroup analysis}\label{alg:ass}
   \vspace{2pt}
\textbf{Input:} Covariates $X_{[n]}$, Outcomes $Y_{[n]}$, Treatment assignments $Z_{[n]}$, \\
  \vspace{2pt}
      \hspace{37pt}    Subgroups $\mathcal S_{[K]}$, Initial proportion $p$, Proportion threshold $\rho$,  \\
        \vspace{2pt}
   \hspace{37pt}        Stopping threshold $\epsilon_l$, Window size $n_0$ \\
   \vspace{5pt}
\textbf{Initialization:}  
$t\gets 1$,\ $\pi_{[K]} \gets 1$,\ $l_{2-n_0},\dotsc l_{n}  \gets 0, \ \mathcal{J} \gets [n]$,\ \ $\hat \mu\gets \mathcal{A}(X_{[n]},Y_{[n]}) $ \\
   \vspace{5pt}
\textbf{Note:}  $X_{[n]},Y_{[n]},$  and $\hat \mu$ are kept implicit below, as they remain fixed throughout. \\
\vspace{5pt}
{\color{blue} Split the sample and create the nuisance estimators}   \\
\vspace{5pt}
$\mathcal{I},\mathcal{J}_{[K]} \gets $\texttt{Split}$(\mathcal{S}_{[K]}; p)$,  
    $\hat{e}^{(0)}  \gets$ \texttt{Posterior}$(Z_{\mathcal{I}} )$
\vspace{10pt}          \\ 
\While{$\max\{l_{t-n_0+1},\dots, l_t\}\leq \one\{t\leq n_0\} +  \epsilon_l $ \text{ and }  $\max\{\pi_{[K]}\} \leq \rho$ 
}{
\vspace{7pt}
{\color{blue} 1. Select units from subgroups that meet the threshold $\rho$ }   \\
\vspace{2pt}
$(j^*,k^*) \gets \argmin_{j\in \mathcal J_k,\ k\in [K]:\pi_k \geq \rho} \text{sign}\big(\hat \tau^{(t-1)}(X_j) \big)\big|2\hat{e}^{(t-1)}(X_j,Y_j)-1\big|$  \\
\vspace{5pt}
$\mathcal{I} \gets \mathcal{I}\cup \{j^*\}$, \ 
$\mathcal{J}_{k^*} \gets \mathcal{J}_{k^*}\setminus \{j^*\}$, \ $\pi_{k^*}\gets |\mathcal{J}_{k^*}|/|\mathcal{S}_{k^*}|$ \\
\vspace{7pt}
{\color{blue} 2. Update the nuisance estimators } \\
\vspace{5pt}
 $\hat{e}^{(t)}  \gets$ \texttt{Posterior}$(Z_{\mathcal{I}})$,\ 
   $\hat{\tau}^{(t)}  \gets$ \texttt{BaR-learner}$(Z_{\mathcal{I}}; \hat{e}^{(t)})$
   \\
\vspace{7pt}
{\color{blue} 3. Check convergence of loss} \\
\vspace{3pt}
 $l_t \gets 1 - R^2(\hat{\tau}^{(t)},\hat{\tau}^{(t-1)};X_{\mathcal{J}})$, \ $t \gets t +1$ 
 \vspace{5pt}
}
Remove units with the most negative $\hat{\tau}^{(t)}(X_j)$ from $\mathcal{J}_k$ until $\pi_k <\rho $ \\
\vspace{5pt}
$\mathcal I \gets [n]\setminus \bigcup_{k\in [K]} \mathcal J_k$   ,\ 
 $\hat{e}_{\mathcal I}  \gets$ \texttt{Posterior}$(Z_{\mathcal{I}})$,\ 
  $\hat{\tau}_{\mathcal I}  \gets$ \texttt{BaR-learner}$(Z_{\mathcal{I}}; \hat{e}_{\mathcal I})$
   \\
\vspace{5pt}
\textbf{Output:} $p$-values $\hat P_k(O_{\mathcal{J}_k})$ in \eqref{equ:pk} for all $k \in [K]$
\vspace{2pt}
\end{algorithm}
At iteration $t = 1, 2, \dotsc$, we implement a unit selection process as follows.
\begin{itemize}
    \item 
In Step 1, we update the nuisance fold $\mathcal I$ with the unit  $j^*$ that minimizes
\begin{equation}\label{equ:psi}
\text{sign}\big(\hat \tau^{(t-1)}(X_j) \big)\hat{C}_j^{(t-1)} := \text{sign}\big(\hat \tau^{(t-1)}(X_j) \big)\big|2\hat{e}^{(t-1)}(X_j,Y_j)-1\big|,
\end{equation}
across all the subgroups with the inference proportion $\pi_k  = |\mathcal J_k|/|\mathcal S_k| \geq \rho$, e.g., $\rho = 0.5$. 
Minimizing \eqref{equ:psi} prioritizes units with negative estimated CATEs. Once these are exhausted, it tends to select those with positive estimated CATEs and low estimated certainty scores $\hat{C}_j^{(t-1)}$.
After identifying $j^*$, we remove it from the inference fold $\mathcal{J}_{k^*}$ it belongs to, and update $\pi_{k^*} = |\mathcal{J}_{k^*}|/|\mathcal{S}_{k^*}|$ accordingly.
\item  In Step 2,  \texttt{Posterior}$(Z_{\mathcal{I}}$) computes a new posterior assignment probability estimator $\hat{e}^{(t)}$ as in \eqref{equ:e_model}, using a new R-learner estimator that corrects for selection bias via inverse probability weighting, as described in \eqref{equ:r_weighted} in \Cref{sect:consistency}. We then apply the function \texttt{BaR-learner}$(Z_{\mathcal{I}}; \hat{e}^{(t)})$, i.e., the loss function in \eqref{equ:br_learner} with $\lambda = 1$,
on the updated inference fold $\mathcal I$ to construct a new estimator,
$\hat{\tau}^{(t)}(x) = x^{\top}\hat \beta^{(t)},$
where $\hat \beta^{(t)}$ is defined analogously to $\beta_{\mathcal I}$ in \eqref{eq:solution}, except that the  assignment probability $e(x,y)$ is replaced by its estimate $\hat e^{(t)}(x,y)$.\footnote{ If the regression model used in Step 2 is computationally expensive to train, we may update it every few iterations (e.g., every 20) and adjust the convergence threshold in Step 3 accordingly.}
\item 
In Step 3, after the first $n_0$ iterations, e.g., $n_0 = 50$, we check the convergence of the CATE estimates for terminating the unit selection process. We compute $1 - R^2$ (one minus the coefficient of determination) to assess the change in predictions of $\hat{\tau}^{(t)}$ relative to $\hat{\tau}^{(t-1)}$ on the covariates $X_{\mathcal J}$. We terminate the selection process if this change remains below a threshold $\epsilon_l = 0.01$ for the last $n_0$ iterations, or if the inference proportion $\pi_k < \rho$ for all subgroups $k \in [K]$. This stopping rule ensures that the CATE estimates have converged and that the inference proportions across all subgroups are at least $\rho$, which matches the proportion used in random sample splitting in all our experiments.
\end{itemize}

According to the p-value in \Cref{thm:expected_power} and the solution in \Cref{prop:threshold}, we remove from $\mathcal J_k$ the units with the most negative estimated CATEs, if such units exist and $\pi_k \geq \rho$ after termination. These units are added into the nuisance fold $\mathcal{I}$ to
update $\hat e_{\mathcal I}$ as in Step 2 and compute the final estimator $\hat \tau_{\mathcal{I}}$ using \texttt{BaR-learner}$(Z_{\mathcal{I}}; \hat{e}_{\mathcal I})$. This estimator is then applied to compute the $p$-values $\hat P_k(O_{\mathcal{J}_k})$ in \eqref{equ:pk}.

\vspace{-3pt}
\subsection{Validity of AdaSplit}\label{sect:valid}

We now show that AdaSplit yields valid \( p \)-values for both single and multiple hypothesis testing. We let \( \mathcal{K}_0 = \{k \in [K] : H_{0,k} \text{ in } \eqref{equ:null} \text{ is true} \} \) denote the set of null groups, and \( \mathcal{K}_1 = [K] \setminus \mathcal{K}_0 \) as the set of non-nulls. For $z\in \{0,1\}$, we let  \( \mathcal{J}_{\mathcal{K}_z} = \bigcup_{k \in \mathcal{K}_z} \mathcal{J}_k \).

In Algorithm~\ref{alg:ass}, the nuisance fold \( \mathcal{I} = \mathcal{I}(O_{[n]}) \) is constructed iteratively by selecting the unit that minimizes the objective function \eqref{equ:psi} at each step. This relies only on the treatment assignments revealed up to the current iteration. If the final nuisance fold \( \mathcal{I}(O_{[n]}) = I \), modifying the assignments of any units outside \( I \) would yield the same nuisance fold. This invariance property leads to validity, as formalized below.

\begin{theorem}\label{thm:validity}
Suppose Assumptions~\ref{ass:randomization} and~\ref{ass:consistency} hold. Then, for any \( k \in \mathcal{K}_0 \), the \( p \)-value \( \hat{P}_k(O_{\mathcal{J}_k}) \) returned by Algorithm~\ref{alg:ass} satisfies
\begin{equation}\label{equ:validity}
    \mathbb{P} \left\{ \hat{P}_k(O_{\mathcal{J}_k}) \leq \alpha \,\middle|\,
    X_{[n]}, Y_{[n]}, Z_{\mathcal{I} \cup \mathcal{J}_{\mathcal{K}_1}}, \mathcal{I}(O_{[n]}) \right\} \leq \alpha.
\end{equation}
Furthermore, the null \( p \)-values \( \hat{P}_k(O_{\mathcal{J}_k}) \), for \( k \in \mathcal{K}_0 \), are jointly independent conditional on the same random variables as in \eqref{equ:validity}.
\end{theorem}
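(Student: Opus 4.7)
The plan is to reduce both claims to a single structural invariance of Algorithm~\ref{alg:ass}: the map $O_{[n]} \mapsto (\mathcal{I}(O_{[n]}), \hat{\tau}_{\mathcal{I}}, \hat{e}_{\mathcal{I}})$ factors through $(X_{[n]}, Y_{[n]}, Z_{\mathcal{I}(O_{[n]})})$, i.e.\ the algorithm never reads any $Z_j$ with $j \notin \mathcal{I}(O_{[n]})$. Once this is in place, a standard conditional-exchangeability argument for randomization tests delivers the super-uniformity bound, and the joint independence then falls out from the disjointness of the inference folds $\{\mathcal{J}_k\}_{k \in \mathcal{K}_0}$ together with the independent Bernoulli design.

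The invariance lemma I would prove by induction on the iteration counter $t$ in Algorithm~\ref{alg:ass}. The inductive hypothesis is that, on the event $\{\mathcal{I}(O_{[n]}) = I\}$ for a fixed $I \subseteq [n]$, the running fold $\mathcal{I}^{(t)} \subseteq I$, the estimators $(\hat{\tau}^{(t)}, \hat{e}^{(t)})$, the loss $l_t$, and the stopping indicator are all measurable with respect to $\mathcal{F}_t := \sigma(X_{[n]}, Y_{[n]}, Z_{\mathcal{I}^{(t)}})$. The base case is the initialization: \texttt{Split}$(\mathcal{S}_{[K]}; p)$ picks its seeds deterministically from the leverage scores $X_i^{\top}(X_{[n]}^{\top} X_{[n]})^{-2} X_i$ and thus uses only $X_{[n]}$, and the initial \texttt{Posterior}$(Z_{\mathcal{I}})$ feeds on $(X_{[n]}, Y_{[n]})$ plus those seeds' assignments. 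For the inductive step, (a) the argmin in Step~1 depends only on $(\hat{\tau}^{(t-1)}, \hat{e}^{(t-1)}, X_{[n]}, Y_{[n]}, \{\mathcal{J}_k^{(t-1)}\})$, which are $\mathcal{F}_{t-1}$-measurable; (b) \texttt{Posterior} and \texttt{BaR-learner} in Step~2 take $Z_{\mathcal{I}^{(t)}}$ as their only new input beyond $(X_{[n]}, Y_{[n]})$; (c) Step~3's $l_t$ is a function of $\hat{\tau}^{(t-1)}, \hat{\tau}^{(t)}$ on $X_{\mathcal{J}}$. The post-loop pruning step ranks units by $\hat{\tau}^{(T)}(X_j)$, which is $\mathcal{F}_T$-measurable, and the ensuing recomputation of $\hat{e}_{\mathcal{I}}, \hat{\tau}_{\mathcal{I}}$ only adds $Z_j$ values for $j$ that by construction lie in the final $\mathcal{I}$.

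For validity given the invariance, fix $k \in \mathcal{K}_0$ and let $\mathcal{G} := \sigma(X_{[n]}, Y_{[n]}, Z_{\mathcal{I} \cup \mathcal{J}_{\mathcal{K}_1}}, \mathcal{I}(O_{[n]}))$. Under $H_{0,k}$ and Assumption~\ref{ass:consistency}, $Y_j = Y_j(0) = Y_j(1)$ for $j \in \mathcal{S}_k$ is a function of potential outcomes only, hence independent of $Z_j$; under Assumption~\ref{ass:randomization}, $(Z_i)_{i=1}^n$ is i.i.d.\ Bernoulli$(1/2)$ and independent of $(X_{[n]}, Y_{[n]}(0), Y_{[n]}(1))$. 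Putting these together, conditioning on $(X_{[n]}, Y_{[n]}, Z_{\mathcal{I} \cup \mathcal{J}_{\mathcal{K}_1}})$ leaves $Z_{\mathcal{J}_{\mathcal{K}_0}}$ i.i.d.\ Bernoulli$(1/2)$, because the conditioned $Z$'s cover every index whose observed $Y$ could carry information about its assignment. By the invariance lemma, the event $\{\mathcal{I}(O_{[n]}) = I\}$ is measurable with respect to $\sigma(X_{[n]}, Y_{[n]}, Z_I)$, so further conditioning on $\mathcal{I}(O_{[n]})$ does not disturb the distribution of $Z_{\mathcal{J}_{\mathcal{K}_0}}$. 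Since the AIPW statistic $T(\cdot) = \sum_{j \in \mathcal{J}_k} \phi_{\text{AIPW}}(\cdot;\, e, \hat{\mu}, \hat{\tau}_{\mathcal{I}})$ is a fixed $\mathcal{G}$-measurable function of its argument, $T(O_{\mathcal{J}_k})$ and $T(\tilde O_{\mathcal{J}_k})$ have the same conditional law given $\mathcal{G}$; a standard rank argument then yields $\mathbb{P}\{\hat P_k \leq \alpha \mid \mathcal{G}\} \leq \alpha$.

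For the joint independence, the folds $\{\mathcal{J}_k\}_{k \in \mathcal{K}_0}$ are pairwise disjoint and, by the argument above, $\{Z_{\mathcal{J}_k}\}_{k \in \mathcal{K}_0}$ are mutually independent conditional on $\mathcal{G}$; since each $\hat P_k$ is measurable with respect to $\sigma(\mathcal{G}, Z_{\mathcal{J}_k})$, the null $p$-values inherit this independence. The main obstacle in the whole argument is the bookkeeping in the invariance lemma: one must check that every subroutine called by Algorithm~\ref{alg:ass} — especially the adaptive stopping rule and the post-termination pruning of units with the most negative $\hat{\tau}^{(T)}(X_j)$ — really never reads a $Z_j$ for $j$ outside the current running fold. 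Once that is verified, the remaining steps are routine conditional-exchangeability reasoning for randomization tests.
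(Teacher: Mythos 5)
Your proposal is correct and follows essentially the same route as the paper's proof: the key invariance (the selected nuisance fold, and hence the fitted estimators, depend only on $X_{[n]}, Y_{[n]}$ and the assignments inside the fold) is exactly the property the paper encodes via its partition $\mathcal{A}_I(X_{[n]},Y_{[n]})$ of the assignment space, and your subsequent conditional-exchangeability and disjointness arguments mirror the paper's chain of conditional independences and its independence claim. The only difference is presentational — you make explicit, via induction on the algorithm's iterations, the bookkeeping that the paper asserts in one sentence before invoking the partition-based framework of \citet{zhang2023randomization}.
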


The \( p \)-value validity in \eqref{equ:validity} guarantees control of the type I error rate for testing each subgroup null hypothesis  individually. However, when rejecting multiple nulls to claim that the treatment has a significant effect in several subgroups, it becomes important to control the family-wise error rate (FWER), defined as the probability of making one or more false rejections. A principled way to achieve strong FWER control is through the closed testing procedure \citep{marcus1976closed}.

In this procedure, each null \( H_{0,k} \) is tested by evaluating the intersection nulls \( H_{0,\mathcal{K}} = \bigcap_{j \in \mathcal{K}} H_{0,j} \) for all subsets \( \mathcal{K} \subseteq [K] \) that contain \( k \). The null \( H_{0,k} \) is rejected if all such intersection nulls are rejected. Specifically, the procedure conducts a global test to generate a $p$-value $\hat{P}_\mathcal{K}$ 
for each intersection null $H_{0,\mathcal{K}}$, and defines the rejection set:
\[
\mathcal{R} = \left\{ k \in [K] : \hat{P}_\mathcal{K} \leq \alpha \text{ for all } \mathcal{K} \subseteq [K] \text{ with } k \in \mathcal{K} \right\}.
\]

\begin{theorem}[\citep{marcus1976closed}]
In the setup of \Cref{thm:validity}, the rejection set \( \mathcal{R} \) controls the family-wise error rate (FWER) at level \( \alpha \):
\[
\mathbb{P} \left\{ \mathcal{R} \cap \mathcal{K}_0 \neq \emptyset \,\middle|\,
X_{[n]}, Y_{[n]}, Z_{\mathcal{I} \cup \mathcal{J}_{\mathcal{K}_1}}, \mathcal{I}(O_{[n]}) \right\} \leq \alpha.
\]
\end{theorem}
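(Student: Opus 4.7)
The plan is to combine the classical closed-testing principle of \citet{marcus1976closed} with the conditional validity and conditional independence already established in \Cref{thm:validity}. The argument decomposes into a short pathwise reduction and a verification that the intersection $p$-values are conditionally super-uniform under their respective true nulls.

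First, I would carry out the standard closed-testing reduction. On the event $\{\mathcal{R} \cap \mathcal{K}_0 \neq \emptyset\}$, pick any $k^{*} \in \mathcal{R} \cap \mathcal{K}_0$. Since $k^{*} \in \mathcal{R}$, the definition of $\mathcal{R}$ forces $\hat{P}_{\mathcal{K}} \leq \alpha$ for every subset $\mathcal{K} \subseteq [K]$ that contains $k^{*}$; taking $\mathcal{K} = \mathcal{K}_0$ (which does contain $k^{*}$) yields $\hat{P}_{\mathcal{K}_0} \leq \alpha$. This gives the deterministic, pathwise inclusion $\{\mathcal{R} \cap \mathcal{K}_0 \neq \emptyset\} \subseteq \{\hat{P}_{\mathcal{K}_0} \leq \alpha\}$, so it suffices to bound $\mathbb{P}(\hat{P}_{\mathcal{K}_0} \leq \alpha \mid \cdot) \leq \alpha$ on the conditioning $\sigma$-field appearing in \Cref{thm:validity}.

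Next, I would specify a concrete construction of the global $p$-value $\hat{P}_{\mathcal{K}_0}$ and verify its conditional super-uniformity. A convenient choice is the Bonferroni combiner $\hat{P}_{\mathcal{K}_0} := \min\bigl\{|\mathcal{K}_0|\, \min_{k \in \mathcal{K}_0} \hat{P}_k,\,1\bigr\}$, which is conditionally super-uniform by a union bound whenever each $\hat{P}_k$ is. Because \Cref{thm:validity} additionally guarantees that $\{\hat{P}_k\}_{k \in \mathcal{K}_0}$ are jointly independent conditional on $X_{[n]}, Y_{[n]}, Z_{\mathcal{I} \cup \mathcal{J}_{\mathcal{K}_1}}, \mathcal{I}(O_{[n]})$, one may alternatively use tighter combiners exploiting independence (e.g., Fisher's or Simes'). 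In either case, the resulting $\hat{P}_{\mathcal{K}_0}$ satisfies the conditional bound, and together with the inclusion from the first step this yields the claimed FWER control; marginalizing produces the unconditional version.

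The only real obstacle is bookkeeping around conditioning. One must confirm that (i) $\mathcal{K}_0$ is a deterministic (unknown) subset of $[K]$, so the collection $\{\hat{P}_k\}_{k\in \mathcal{K}_0}$ is well defined; (ii) each fold $\mathcal{J}_k$, being determined by $\mathcal{I}(O_{[n]})$ and the fixed partition $\mathcal{S}_{[K]}$, is measurable with respect to $\sigma(X_{[n]}, Y_{[n]}, Z_{\mathcal{I} \cup \mathcal{J}_{\mathcal{K}_1}}, \mathcal{I}(O_{[n]}))$, so that \Cref{thm:validity} applies verbatim to $\hat{P}_k(O_{\mathcal{J}_k})$; and (iii) the closed-testing inclusion holds pathwise rather than only in distribution. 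Given the data-adaptive construction of $\mathcal{I}(O_{[n]})$ by Algorithm~\ref{alg:ass}, this measurability check is where one should be most careful, but it requires no new technical machinery beyond \Cref{thm:validity}.
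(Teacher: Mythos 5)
Your proposal is correct: the pathwise inclusion $\{\mathcal{R}\cap\mathcal{K}_0\neq\emptyset\}\subseteq\{\hat P_{\mathcal{K}_0}\leq\alpha\}$ followed by conditional super-uniformity of the global $p$-value for the true intersection null $H_{0,\mathcal{K}_0}$ (available via Bonferroni, or Fisher/Simes using the conditional independence from \Cref{thm:validity}) is exactly the classical closed-testing argument. The paper gives no proof of its own here --- it cites \citet{marcus1976closed} --- and your argument is the standard one that citation stands in for, correctly adapted to the conditional $\sigma$-field of \Cref{thm:validity}.
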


The FWER control holds under arbitrary dependence among the \( p \)-values. Nevertheless, the independence of our \( p \)-values in \Cref{thm:validity} enables us to apply powerful global tests for the intersection nulls, e.g., Fisher’s method \citep{fisher1928statistical}, in place of commonly used but conservative procedures like the Holm method \citep{holm1979simple}.

\section{Experiment}\label{sect:exp}

We evaluate the randomization tests produced by Algorithm \ref{alg:ass} and compare them against two baselines.\footnote{Code to reproduce the simulation studies and real data analysis is available at
\url{https://github.com/ZijunGao/AdaSplit}.} The first is the vanilla subgroup-based randomization test described in \Cref{sect:vanilla}, which uses all units in the inference fold to compute the difference-in-means statistic $T_{\text{DM}}$.
The second baseline applies random sample splitting, allocating one fold for CATE estimation and the other for randomization tests. The reference distributions in all $p$-values are computed using Monte-Carlo with
1,000 draws of randomized treatment assignments. We refer to the first baseline as RT, the second as RT (RandomSplit), and our method as RT (AdaSplit).

RT (RandomSplit) and RT (AdaSplit) use the AIPW statistic, where we estimate $\mu$ and $\tau$ using linear regression; the same experiments are repeated in \Cref{sect:appendix_exp_xgboost} with $\mu$ estimated via XGBoost. By default, RT (RandomSplit) splits the units evenly, i.e., the proportion of units in the nuisance fold is $\rho = 0.5$. RT (AdaSplit) also sets the maximum proportion of units in the nuisance fold $\mathcal I$ to $\rho$, and initializes $\mathcal I$ in Algorithm \ref{alg:ass} using the 5\% of units with the highest diversity scores.
We repeat the experiments with various proportions in \Cref{sect:appendix_exp_results}.

\subsection{Experiments on synthetic data}

\subsubsection{Setup} \label{sect:exp_setup}

Our experiment here has three settings.  In the default setting, we set the number of units to $n=500$. Every unit $i\in [n]$ has five covariates: $X_1,X_2$ and $X_3$ are independently drawn from a uniform distribution on $[-0.5, 0.5]$, while $X_4$ and $X_5$ are independent Bernoulli  taking values in $\{-0.5, 0.5\}$ with $\mathbb{P}(X_4 = 0.5) = 0.25$ and $\mathbb{P}(X_5 = 0.5) = 0.75$.
The treatment assignments and outcomes are generated following the setup of \Cref{example:1}.
Specifically, we fix the noise variance at $\nu^2 = 1$, 
define both $\mu_0$ and $\mu_1$ as linear functions of the covariates.
The CATE is given by
\begin{equation}\label{equ:tau_synthetic}
\tau(x) = 0.5 + \sum_{i=1}^5 x_i.
\end{equation}
Building on the default, we introduce two other settings: (1) \quotes{Larger sample size}, where the sample size increases from $n=500$ to $n = 1000$, and (2) \quotes{Increased noise level}, where the noise variance increases from $\nu^2 = 1$ to $\nu^2 = 2$. 

We create five subgroups $(K=5)$ based on the $0.2$, $0.4$, $\ldots$, and $0.8$-quantiles of $X_1$; for example, the first subgroup contains units $i$ with
$X_{i,1}$ below the $0.2$-quantile.

\subsubsection{BaR-learner v.s. R-learner}\label{sect:r_bar}

We first compare our proposed BaR-learner with the original R-learner in estimating the true CATE $\tau$ across the three simulation settings mentioned above.
To assess the estimators $\hat{\tau}$, we compute the out-of-sample $R^2$ using a hold-out dataset of size $10^4$:
\begin{align*}
    R^2 = 1 - \sum_{i'=1}^{10^4} \left(\tau(X_{i'}) - \hat{\tau}(X_{i'})\right)^2/\sum_{i'=1}^{10^4} \left(\tau(X_{i'}) - \bar{\tau}\right)^2, \   
\end{align*}
where $\bar{\tau} = 10^{-4}\sum_{i'=1}^{10^4}\tau(X_{i'})$.
The larger $R^2$ is, the more accurate $\hat{\tau}$ is. 
\Cref{tab:R2} shows that  BaR-learner produces significantly more accurate CATE estimates than R-learner. 
The poor performance of R-learner can be attributed to the efficiency loss due to discarding all units in the inference fold. In contrast, by imputing treatment assignments in the inference fold, BaR-learner is able to leverage all units' covariates and outcomes for estimation.  We observe consistent benefits from this imputation strategy across all settings, including the \quotes{Increased noise level} scenario, where treatment assignments $Z_i$ are harder to predict from $X_i$ and $Y_i$.

\begin{table}[tbh]
\centering
\caption{Out-of-sample $R^2$ of the CATE estimators from BaR-learner and R-learner across three simulation settings. Results are obtained from 100 independent runs.}
\label{tab:R2}
\begin{tabular}{c|ccc}
\toprule
Method & Default & Larger sample size & Increased noise level \\
\midrule
R-learner   & $0.49$ (0.06) & $-1.58$ (0.34) & $-1.56$  (0.47) \\
BaR-learner & 0.79 (0.01) & 0.43 (0.03) & 0.43 (0.06) \\
\bottomrule
\end{tabular}
\end{table}

\subsubsection{Single hypothesis testing}

\textbf{Validity}. To test the validity of the randomization tests, we set $\tau(x) = 0$ in \eqref{equ:tau_synthetic}, turning all subgroups into null. In \Cref{tab:typeIerror}, all methods control their type I errors at the nominal level $0.2$ across all subgroups. This result is expected for RT and RT (RandomSplit) since they do not select units in a data-driven way. More importantly, it confirms that our adaptive procedure preserves the validity of randomization tests.

\begin{table}[h]
\centering
\caption{Type I errors of the randomization tests at the nominal level $\alpha = 0.2$ for five null groups (G1–G5). 
The results are aggregated over $200$ trials.}
\label{tab:typeIerror}
\begin{tabular}{c|ccc}
\toprule
Method & RT & RT (RandomSplit) & RT (AdaSplit) \\
\midrule
G1 & 0.180 (0.038) & 0.160 (0.037) & 0.180 (0.038) \\
G2 & 0.170 (0.038) & 0.215 (0.041) & 0.225 (0.042) \\
G3 & 0.190 (0.039) & 0.210 (0.041) & 0.220 (0.041) \\
G4 & 0.240 (0.043) & 0.195 (0.040) & 0.200 (0.040) \\
G5 & 0.165 (0.037)  & 0.195 (0.040) & 0.170 (0.038) \\
\bottomrule
\end{tabular}
\vspace{5pt}
\end{table}

\begin{figure}[t]
    \begin{minipage}{0.27\textwidth}
        \centering
        \includegraphics[clip, trim = 0.2cm 0cm 5cm 0cm, height = 0.9\textwidth]{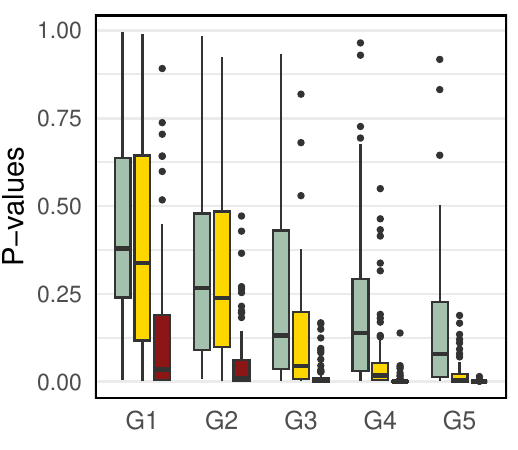}
        \subcaption*{ (a) Default setting.}
    \end{minipage}
    \hspace{0cm}
     \begin{minipage}{0.27\textwidth}
        \centering
        \includegraphics[clip, trim = 0cm 0cm 5cm 0cm, height = 0.9\textwidth]{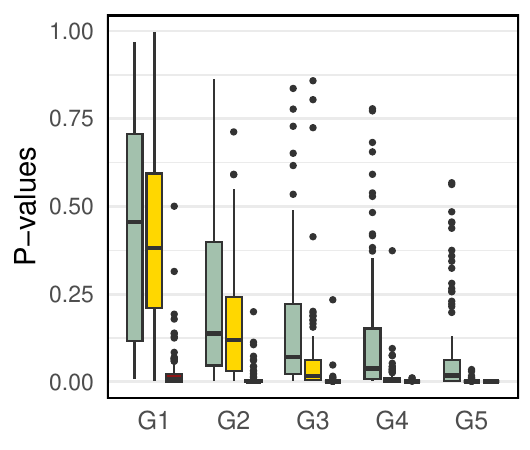}
        \subcaption*{\quad (b) Large sample size.}
    \end{minipage}
    \hspace{0cm}
         \begin{minipage}{0.27\textwidth}
        \centering
        \includegraphics[clip, trim = 0cm 0cm 0cm 0cm, height = 0.9\textwidth]{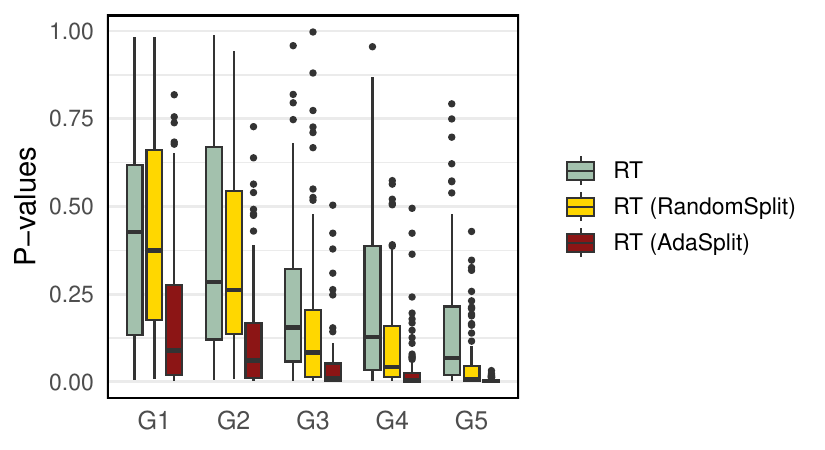}
        \subcaption*{ ~~(c) Increased noise level.}
    \end{minipage}
    \vspace{5pt}
    \caption{Boxplots of subgroup $p$-values generated by RT, RT (RandomSplit), and RT (AdaSplit) across three different settings.
The results are aggregated over $100$ trials.
}
    \label{fig:group.p.val}
\end{figure}

\begin{figure}[h]
\vspace{10pt}
    \centering
     \begin{minipage}{0.35\textwidth}
        \centering
        \includegraphics[clip, trim = 0cm 0cm 0cm 0cm, height = 1\textwidth]{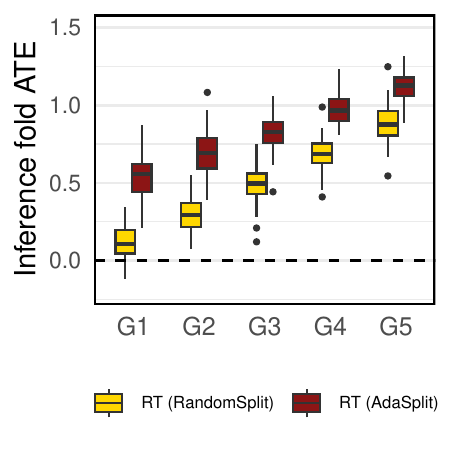}
        \subcaption*{(a) Subgroup ATEs.}
    \end{minipage}
    \hspace{1cm}
    \begin{minipage}{0.35\textwidth}
        \centering
        \includegraphics[clip, trim = 0cm 0cm 0cm 0cm, height = 1\textwidth]{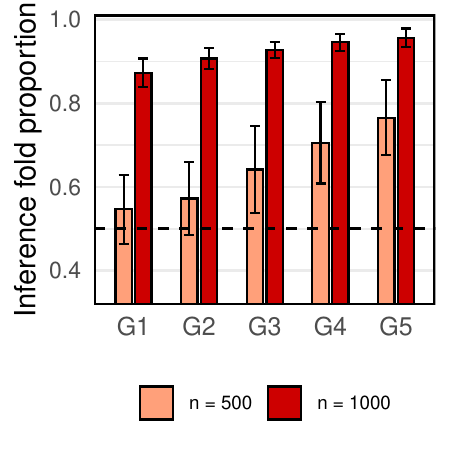}
        \subcaption*{(b) Inference fold proportions.}
    \end{minipage}
\caption{Subgroup ATEs and inference fold proportions
in RT (RandomSplit) and RT (AdaSplit) in the default setting (and the setting with $n=1000$ in the right panel).
}
\label{fig:inference.fold}
\end{figure}

\textbf{Power}.  In \Cref{fig:group.p.val}, we compare the power (i.e., $p$-values) of the randomization tests in the three settings mentioned above.
We first observe that RT is generally less powerful than the other two methods. This indicates that the AIPW statistics using the estimators $\hat{\mu}$ and $\hat{\tau}$ can substantially improve power over the simple difference-in-means statistic.
That said, comparing panel (a) with panel (c) reveals that this advantage diminishes when the estimators become less accurate due to increased noise.

\ART consistently achieves smaller $p$-values than the two non-adaptive methods across all settings. Two key observations in \Cref{fig:inference.fold} help explain this improvement. 
\begin{itemize}
    \item Panel (a)  shows that the average treatment effects (ATEs) (i.e., the observed test statistics) in the inference fold adaptively selected in \ART are larger than those in RT or RT~(RandomSplit). This naturally leads to smaller $p$-values, assuming the reference distributions are similar across methods. 
    \item Panel (b) compares the inference proportions\footnote{The inference proportions reported here correspond to those before the final step of our algorithm, which excludes units with negative CATE estimates.} of \ART at sample sizes $n = 500$ and $1000$. 
We observe that \ART can reserve more than 50\% of the units for inference as the CATE estimator converges early. Moreover, for a fixed sample size $n$, the inference proportions increase from Group G1 to G5, which means the proportion tends to be larger in subgroups with larger CATEs, as defined in the experimental setup in \Cref{sect:exp_setup}. Units with larger CATEs typically have estimates of $e(X_i, Y_i)$ closer to 0 or 1, making them less likely to be assigned to the nuisance fold.
\end{itemize}

\subsubsection{Multiple hypotheses testing}

The previous section evaluates the methods based on subgroup $p$-values. Here, we assess their performance in the context of multiple testing by applying Fisher's method to their $p$-values, controlling the family-wise error rate (FWER) at level $q = 0.2$. \Cref{tab:FWER} shows that the realized FWERs for all methods remain below 0.2. \ART is consistently more powerful than the other methods, as shown in \Cref{tab:power}.

\begin{table}[h]
\centering
\caption{Realized FWERs of RT, \RRT and \ART in the null setting with $\tau(x) = 0$ in \eqref{equ:tau_synthetic} and  $q=0.2$. The results are aggregated over $200$ repeats.}
\vspace{5pt}
\label{tab:FWER}
\begin{tabular}{c|ccc}
\toprule
Method & RT & \RRT & \ART \\
\midrule
Null & 0.105 (0.022) & 0.110 (0.022) & 0.135 (0.024) \\
\bottomrule
\end{tabular} 
\end{table}

\vspace{10pt}

\begin{table}[h]
\centering
\caption{Realized powers of RT, \RRT and \ART across three simulation settings. Results are aggregated over $100$ repeats per setting.}
\label{tab:power}
\begin{tabular}{c|ccc}
\toprule
Method & RT & \RRT & \ART \\
\midrule
Default setting           & 0.298 (0.026) & 0.590 (0.024) & 0.930 (0.012) \\
Larger sample size & 0.496 (0.028) & 0.728 (0.017) & 0.994 (0.004) \\
Increased noise level     & 0.288 (0.026) & 0.500 (0.026) & 0.854 (0.017) \\
\bottomrule
\end{tabular}
\end{table}

\subsection{Experiments on real data}

\begin{figure}[t]
\vspace{5pt}
    \centering
    \begin{minipage}{0.31\textwidth}
        \centering
        \includegraphics[clip, trim = 0cm 0cm 3cm 0.65cm, height = 1.3\textwidth]{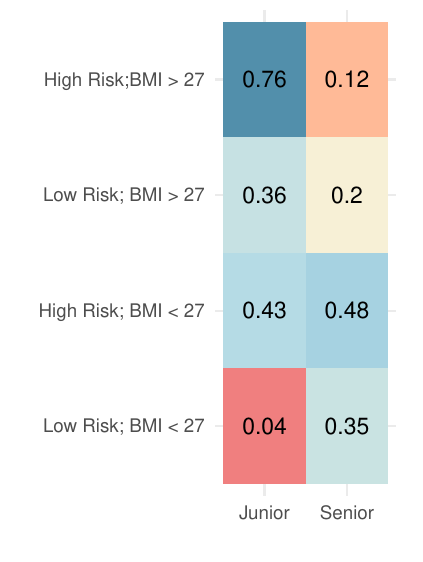}
        \subcaption*{\qquad\qquad\qquad (a) RT.}
    \end{minipage}
    \hspace{0.5cm}
     \begin{minipage}{0.31\textwidth}
        \centering
        \includegraphics[clip, trim = 3.5cm 0cm 3cm 0.65cm, height = 1.3\textwidth]{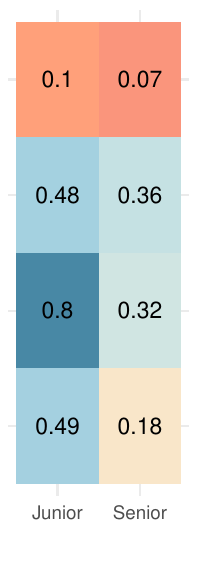}
        \subcaption*{(b) RT (RandomSplit).}
    \end{minipage}
         \begin{minipage}{0.31\textwidth}
        \centering
        \includegraphics[clip, trim = 3.7cm 0cm 1.45cm 0.65cm, height = 1.3\textwidth]{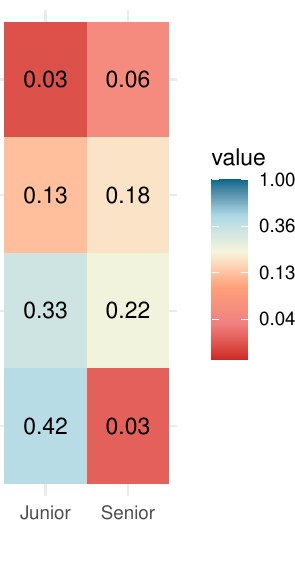}
        \subcaption*{(c) RT (AdaSplit). \qquad\qquad}
    \end{minipage}
    \vspace{3pt}
\caption{Heatmaps of subgroup $p$-values in the SPRINT dataset.
}
\label{fig:group.p.val.SPRINT}
\end{figure}

We apply our method to the Systolic Blood Pressure Intervention Trial (SPRINT) dataset \citep{Wright2016ART,sprint2016systolic, gao2021assessment}, which evaluates whether an intensive systolic blood pressure treatment reduces the risk of cardiovascular disease (CVD). The primary outcome is a binary indicator of whether a major CVD event occurred. 
To ensure larger outcomes indicate better health,  we recode the outcome: 1 indicates no CVD event and 0 indicates an event occurred. The trial uses a Bernoulli design, where every individual's treatment assignment is independently generated from a Bernoulli distribution with probability 1/2.
The dataset has 18 covariates, including demographic information (e.g., age) and baseline clinical measurements (e.g., body mass index, BMI). We retain 8,746 individuals for analysis after removing those with missing data.

We partition the data into $2^3 = 8$ subgroups based on three covariates: age (senior if $>70$, junior if $\le 70$), body mass index (BMI high if $>27$, low if $\le 27$), and 10-year CVD risk (high if ``RISK10YRS'' $> 18$, low if $\le 18$).
We set the FWER level at $0.2$.

We compare our method against the same baselines used in the previous section, keeping each method’s setup unchanged except for estimating the outcome function $\mu$ using XGBoost.  
The heatmap in \Cref{fig:group.p.val.SPRINT} shows that, while the other methods fail to reject any subgroup null hypotheses, \ART yields smaller $p$-values and successfully rejects three subgroups, those in the top row and the bottom right. This result suggests that individuals with both high risk scores and BMI $>27$, as well as older individuals with low risk scores and BMI $<27$, are more likely to benefit from the treatment. The strong effect in the latter group seems unexpected, given their relatively good health. This may be due to limited power for detecting effects in other senior subgroups with higher risk scores or BMI.

\section{Discussion}\label{sect:discussion}

This paper introduced AdaSplit, an adaptive sample splitting method for constructing valid and powerful randomization tests. Our key observation is that, when dividing a sample between estimation and inference, individual units contribute differently to each task. By adaptively allocating units based on these contributions, AdaSplit can achieve better performance than random sample splitting.
A natural question is how broadly the idea of AdaSplit can be applied beyond the current context, as random sample splitting remains a common strategy in many problems that require both estimation and inference. We explore this question below.

\textbf{Other experimental designs.} While we focus on Bernoulli designs and pre-specified subgroups in the main text, AdaSplit can be extended to more complex experimental designs. This requires two adaptations: (1) re-deriving the Gaussian approximation of the p-value in \Cref{thm:expected_power}, which underlies the objective function in \eqref{equ:psi} used for selecting the nuisance fold; and (2) recalculating the posterior assignment probability used in BaR-learner. These modifications could broaden the applicability of AdaSplit to settings such as stratified or cluster-randomized trials, which are widely used in practice but remain underexplored in the context of adaptive inference.

\textbf{Data-adaptive subgroups.}  In the absence of pre-specified subgroups, tree-based CATE estimators \citep{athey2016recursive,hahn2020bayesian} can be integrated into AdaSplit to partition the covariate space into subgroups with different treatment effects. When such methods identify subgroups with large treatment effects, applying randomization tests to these subgroups can yield high power. Crucially, the unit allocation strategy in AdaSplit does not depend on subgroup membership, allowing subgroup definitions to evolve across iterations. This flexibility makes AdaSplit well-suited for applications such as targeted marketing or policy evaluation, where meaningful subgroups are often discovered from data rather than defined a priori.

\textbf{Beyond treatment effects.} 
An interesting direction for future work is to extend the core idea of AdaSplit to other statistical tasks that involve both estimation and inference.
For instance, in change-point detection, one might first fit a parametric model to the data stream over time, and then permute observations on either side of the estimated change-point to compute a valid $p$-value. In this setting, observations near the change-point are more informative for localization, while those farther away contribute more to model fitting.
Similarly, in conditional independence testing, e.g., testing whether $X_j \perp Y \mid X_{-j}$, observations where $X_{i,j}$ is highly correlated with $Y_i$ are more informative for detecting dependence. Following the idea of BaR-learner, such $X_{i,j}$ can be imputed from $(X_{i,-j}, Y_i)$, and the imputed values can be used in fitting a regression model to reduce the variance of the test statistic. 
In addition,  \citet{small2024protocols} highlights the challenge of optimally allocating data between the design and analysis phases of observational studies, which could potentially be addressed by AdaSplit.

\vspace{15pt}

\section{Acknowledgements}

Y.Z was supported by the Office of Naval Research under Grant No. N00014-24-1-2305 and the National Science Foundation under Grant No. DMS2032014. 

\bibliographystyle{plainnat}
\bibliography{references}

\begin{thebibliography}{71}
\providecommand{\natexlab}[1]{#1}
\providecommand{\url}[1]{\texttt{#1}}
\expandafter\ifx\csname urlstyle\endcsname\relax
  \providecommand{\doi}[1]{doi: #1}\else
  \providecommand{\doi}{doi: \begingroup \urlstyle{rm}\Url}\fi

\bibitem[Amatya et~al.(2021)Amatya, Fiero, Bloomquist, Sinha, Lemery, Singh,
  Ibrahim, Donoghue, Fashoyin-Aje, de~Claro, et~al.]{amatya2021subgroup}
Anup~K Amatya, Mallorie~H Fiero, Erik~W Bloomquist, Arup~K Sinha, Steven~J
  Lemery, Harpreet Singh, Amna Ibrahim, Martha Donoghue, Lola~A Fashoyin-Aje,
  R~Angelo de~Claro, et~al.
\newblock Subgroup analyses in oncology trials: regulatory considerations and
  case examples.
\newblock \emph{Clinical cancer research}, 27\penalty0 (21):\penalty0
  5753--5756, 2021.

\bibitem[Ash et~al.(2019)Ash, Zhang, Krishnamurthy, Langford, and
  Agarwal]{ash2019deep}
Jordan~T Ash, Chicheng Zhang, Akshay Krishnamurthy, John Langford, and Alekh
  Agarwal.
\newblock Deep batch active learning by diverse, uncertain gradient lower
  bounds.
\newblock \emph{arXiv preprint arXiv:1906.03671}, 2019.

\bibitem[Athey and Imbens(2016)]{athey2016recursive}
Susan Athey and Guido Imbens.
\newblock Recursive partitioning for heterogeneous causal effects.
\newblock \emph{Proceedings of the National Academy of Sciences}, 113\penalty0
  (27):\penalty0 7353--7360, 2016.

\bibitem[Athey and Imbens(2015)]{athey2015machine}
Susan Athey and Guido~W Imbens.
\newblock Machine learning methods for estimating heterogeneous causal effects.
\newblock \emph{stat}, 1050\penalty0 (5):\penalty0 1--26, 2015.

\bibitem[Bailar and Hoaglin(2012)]{bailar2012medical}
John~C Bailar and David~C Hoaglin.
\newblock \emph{Medical uses of statistics}.
\newblock John Wiley \& Sons, 2012.

\bibitem[Balcan et~al.(2006)Balcan, Beygelzimer, and
  Langford]{balcan2006agnostic}
Maria-Florina Balcan, Alina Beygelzimer, and John Langford.
\newblock Agnostic active learning.
\newblock In \emph{Proceedings of the 23rd international conference on Machine
  learning}, pages 65--72, 2006.

\bibitem[Benjamini and Hochberg(1995)]{benjamini1995controlling}
Yoav Benjamini and Yosef Hochberg.
\newblock Controlling the false discovery rate: a practical and powerful
  approach to multiple testing.
\newblock \emph{Journal of the Royal statistical society: series B
  (Methodological)}, 57\penalty0 (1):\penalty0 289--300, 1995.

\bibitem[Bickel and Ritov(1988)]{bickel1988estimating}
Peter~J Bickel and Yaacov Ritov.
\newblock Estimating integrated squared density derivatives: sharp best order
  of convergence estimates.
\newblock \emph{Sankhy{\=a}: The Indian Journal of Statistics, Series A}, pages
  381--393, 1988.

\bibitem[Borsos et~al.(2024)Borsos, Mutn{\`y}, Tagliasacchi, and
  Krause]{borsos2024data}
Zal{\'a}n Borsos, Mojm{\'\i}r Mutn{\`y}, Marco Tagliasacchi, and Andreas
  Krause.
\newblock Data summarization via bilevel optimization.
\newblock \emph{Journal of Machine Learning Research}, 25\penalty0
  (73):\penalty0 1--53, 2024.

\bibitem[Burke et~al.(2015)Burke, Sussman, Kent, and Hayward]{burke2015three}
James~F Burke, Jeremy~B Sussman, David~M Kent, and Rodney~A Hayward.
\newblock Three simple rules to ensure reasonably credible subgroup analyses.
\newblock \emph{Bmj}, 351, 2015.

\bibitem[Caughey et~al.(2023)Caughey, Dafoe, Li, and
  Miratrix]{caughey2023randomisation}
Devin Caughey, Allan Dafoe, Xinran Li, and Luke Miratrix.
\newblock Randomisation inference beyond the sharp null: bounded null
  hypotheses and quantiles of individual treatment effects.
\newblock \emph{Journal of the Royal Statistical Society Series B: Statistical
  Methodology}, 85\penalty0 (5):\penalty0 1471--1491, 2023.

\bibitem[Chen et~al.(2010)Chen, Goldstein, and Shao]{chen2010normal}
Louis~HY Chen, Larry Goldstein, and Qi-Man Shao.
\newblock \emph{Normal approximation by Stein’s method}.
\newblock Springer Science \& Business Media, 2010.

\bibitem[Chen and Guestrin(2016)]{chen2016xgboost}
Tianqi Chen and Carlos Guestrin.
\newblock Xgboost: A scalable tree boosting system.
\newblock In \emph{Proceedings of the 22nd acm sigkdd international conference
  on knowledge discovery and data mining}, pages 785--794, 2016.

\bibitem[Chen and Li(2024)]{chen2024enhanced}
Zhe Chen and Xinran Li.
\newblock Enhanced inference for distributions and quantiles of individual
  treatment effects in various experiments.
\newblock \emph{arXiv preprint arXiv:2407.13261}, 2024.

\bibitem[Chernozhukov et~al.(2018)Chernozhukov, Chetverikov, Demirer, Duflo,
  Hansen, Newey, and Robins]{chernozhukov2018double}
Victor Chernozhukov, Denis Chetverikov, Mert Demirer, Esther Duflo, Christian
  Hansen, Whitney Newey, and James Robins.
\newblock Double/debiased machine learning for treatment and structural
  parameters, 2018.

\bibitem[Cohen and Fogarty(2022)]{cohen2020gaussian}
Peter~L Cohen and Colin~B Fogarty.
\newblock Gaussian prepivoting for finite population causal inference.
\newblock \emph{Journal of the Royal Statistical Society: Series B (Statistical
  Methodology)}, 84\penalty0 (2):\penalty0 295--320, 2022.

\bibitem[Fisher(1928)]{fisher1928statistical}
Ronald~Aylmer Fisher.
\newblock \emph{Statistical methods for research workers}.
\newblock Number~5. Oliver and Boyd, 1928.

\bibitem[Fisher(1935)]{fisher1935design}
Ronald~Aylmer Fisher.
\newblock \emph{Design of experiments}.
\newblock Oliver and Boyd, Edinburgh, 1935.

\bibitem[Fogarty(2021)]{fogarty2021prepivoted}
Colin~B Fogarty.
\newblock Prepivoted permutation tests.
\newblock \emph{arXiv preprint arXiv:2102.04423}, 2021.

\bibitem[Friede et~al.(2012)Friede, Parsons, and
  Stallard]{friede2012conditional}
Tim Friede, N~Parsons, and Nigel Stallard.
\newblock A conditional error function approach for subgroup selection in
  adaptive clinical trials.
\newblock \emph{Statistics in medicine}, 31\penalty0 (30):\penalty0 4309--4320,
  2012.

\bibitem[Friede et~al.(2018)Friede, Posch, Zohar, Alberti, Benda, Comets, Day,
  Dmitrienko, Graf, G{\"u}nhan, et~al.]{friede2018recent}
Tim Friede, Martin Posch, Sarah Zohar, Corinne Alberti, Norbert Benda,
  Emmanuelle Comets, Simon Day, Alex Dmitrienko, Alexandra Graf,
  Burak~K{\"u}rsad G{\"u}nhan, et~al.
\newblock Recent advances in methodology for clinical trials in small
  populations: the inspire project.
\newblock \emph{Orphanet journal of rare diseases}, 13:\penalty0 1--9, 2018.

\bibitem[Gao et~al.(2021)Gao, Hastie, and Tibshirani]{gao2021assessment}
Zijun Gao, Trevor Hastie, and Robert Tibshirani.
\newblock Assessment of heterogeneous treatment effect estimation accuracy via
  matching.
\newblock \emph{Statistics in Medicine}, 40\penalty0 (17):\penalty0 3990--4013,
  2021.

\bibitem[Guo and Basse(2023)]{guo2023generalized}
Kevin Guo and Guillaume Basse.
\newblock The generalized oaxaca-blinder estimator.
\newblock \emph{Journal of the American Statistical Association}, 118\penalty0
  (541):\penalty0 524--536, 2023.

\bibitem[Guo et~al.(2025)Guo, Lee, and Toulis]{guo2025ml}
Wenxuan Guo, JungHo Lee, and Panos Toulis.
\newblock Ml-assisted randomization tests for detecting treatment effects in
  a/b experiments.
\newblock \emph{arXiv preprint arXiv:2501.07722}, 2025.

\bibitem[Guo and He(2021)]{guo2021inference}
Xinzhou Guo and Xuming He.
\newblock Inference on selected subgroups in clinical trials.
\newblock \emph{Journal of the American Statistical Association}, 116\penalty0
  (535):\penalty0 1498--1506, 2021.

\bibitem[Hahn et~al.(2020)Hahn, Murray, and Carvalho]{hahn2020bayesian}
P~Richard Hahn, Jared~S Murray, and Carlos~M Carvalho.
\newblock Bayesian regression tree models for causal inference: Regularization,
  confounding, and heterogeneous effects (with discussion).
\newblock \emph{Bayesian Analysis}, 15\penalty0 (3):\penalty0 965--1056, 2020.

\bibitem[Hanneke et~al.(2014)]{hanneke2014theory}
Steve Hanneke et~al.
\newblock Theory of disagreement-based active learning.
\newblock \emph{Foundations and Trends{\textregistered} in Machine Learning},
  7\penalty0 (2-3):\penalty0 131--309, 2014.

\bibitem[Hemerik and Goeman(2018)]{hemerik2018exact}
Jesse Hemerik and Jelle Goeman.
\newblock Exact testing with random permutations.
\newblock \emph{Test}, 27\penalty0 (4):\penalty0 811--825, 2018.

\bibitem[Holm(1979)]{holm1979simple}
Sture Holm.
\newblock A simple sequentially rejective multiple test procedure.
\newblock \emph{Scandinavian journal of statistics}, pages 65--70, 1979.

\bibitem[Horvitz and Thompson(1952)]{horvitz1952generalization}
Daniel~G Horvitz and Donovan~J Thompson.
\newblock A generalization of sampling without replacement from a finite
  universe.
\newblock \emph{Journal of the American statistical Association}, 47\penalty0
  (260):\penalty0 663--685, 1952.

\bibitem[Jenkins et~al.(2011)Jenkins, Stone, and Jennison]{jenkins2011adaptive}
Martin Jenkins, Andrew Stone, and Christopher Jennison.
\newblock An adaptive seamless phase ii/iii design for oncology trials with
  subpopulation selection using correlated survival endpoints.
\newblock \emph{Pharmaceutical statistics}, 10\penalty0 (4):\penalty0 347--356,
  2011.

\bibitem[Lagakos et~al.(2006)]{lagakos2006challenge}
Stephen~W Lagakos et~al.
\newblock The challenge of subgroup analyses-reporting without distorting.
\newblock \emph{New England Journal of Medicine}, 354\penalty0 (16):\penalty0
  1667, 2006.

\bibitem[Li and Imai(2023)]{li2023statistical}
Michael~Lingzhi Li and Kosuke Imai.
\newblock Statistical performance guarantee for subgroup identification with
  generic machine learning.
\newblock \emph{arXiv preprint arXiv:2310.07973}, 2023.

\bibitem[Lin(2013)]{lin2013agnostic}
Winston Lin.
\newblock Agnostic notes on regression adjustments to experimental data:
  Reexamining freedman’s critique.
\newblock 2013.

\bibitem[Lipkovich et~al.(2011)Lipkovich, Dmitrienko, Denne, and
  Enas]{lipkovich2011subgroup}
Ilya Lipkovich, Alex Dmitrienko, Jonathan Denne, and Gregory Enas.
\newblock Subgroup identification based on differential effect search—a
  recursive partitioning method for establishing response to treatment in
  patient subpopulations.
\newblock \emph{Statistics in medicine}, 30\penalty0 (21):\penalty0 2601--2621,
  2011.

\bibitem[Marcus et~al.(1976)Marcus, Eric, and Gabriel]{marcus1976closed}
Ruth Marcus, Peritz Eric, and K~Ruben Gabriel.
\newblock On closed testing procedures with special reference to ordered
  analysis of variance.
\newblock \emph{Biometrika}, 63\penalty0 (3):\penalty0 655--660, 1976.

\bibitem[Meng(1994)]{meng1994posterior}
Xiao-Li Meng.
\newblock Posterior predictive $ p $-values.
\newblock \emph{The annals of statistics}, 22\penalty0 (3):\penalty0
  1142--1160, 1994.

\bibitem[Nadaraya(1964)]{nadaraya1964estimating}
Elizbar~A Nadaraya.
\newblock On estimating regression.
\newblock \emph{Theory of Probability \& Its Applications}, 9\penalty0
  (1):\penalty0 141--142, 1964.

\bibitem[{National Heart, Lung, and Blood Institute
  (NHLBI)}(2016)]{sprint2016systolic}
{National Heart, Lung, and Blood Institute (NHLBI)}.
\newblock Systolic blood pressure intervention trial (sprint).
\newblock \url{https://www.clinicaltrials.gov/}, 2016.
\newblock Identifier: NCT01206062.

\bibitem[Neyman and Pearson(1933)]{neyman1933ix}
Jerzy Neyman and Egon~Sharpe Pearson.
\newblock Ix. on the problem of the most efficient tests of statistical
  hypotheses.
\newblock \emph{Philosophical Transactions of the Royal Society of London.
  Series A, Containing Papers of a Mathematical or Physical Character},
  231\penalty0 (694-706):\penalty0 289--337, 1933.

\bibitem[Nie and Wager(2021)]{nie2021quasi}
Xinkun Nie and Stefan Wager.
\newblock Quasi-oracle estimation of heterogeneous treatment effects.
\newblock \emph{Biometrika}, 108\penalty0 (2):\penalty0 299--319, 2021.

\bibitem[Paratore et~al.(2022)Paratore, Zichi, Audisio, Bungaro, Caglio,
  Di~Liello, Gamba, Gargiulo, Mariniello, Reale, et~al.]{paratore2022subgroup}
C~Paratore, C~Zichi, M~Audisio, M~Bungaro, A~Caglio, R~Di~Liello, T~Gamba,
  P~Gargiulo, A~Mariniello, ML~Reale, et~al.
\newblock Subgroup analyses in randomized phase iii trials of systemic
  treatments in patients with advanced solid tumours: a systematic review of
  trials published between 2017 and 2020.
\newblock \emph{ESMO open}, 7\penalty0 (6):\penalty0 100593, 2022.

\bibitem[Ramdas et~al.(2023)Ramdas, Barber, Cand{\`e}s, and
  Tibshirani]{ramdas2023permutation}
Aaditya Ramdas, Rina~Foygel Barber, Emmanuel~J Cand{\`e}s, and Ryan~J
  Tibshirani.
\newblock Permutation tests using arbitrary permutation distributions.
\newblock \emph{Sankhya A}, 85\penalty0 (2):\penalty0 1156--1177, 2023.

\bibitem[Robins et~al.(1994)Robins, Rotnitzky, and Zhao]{robins1994estimation}
James~M Robins, Andrea Rotnitzky, and Lue~Ping Zhao.
\newblock Estimation of regression coefficients when some regressors are not
  always observed.
\newblock \emph{Journal of the American statistical Association}, 89\penalty0
  (427):\penalty0 846--866, 1994.

\bibitem[Robinson(1988)]{robinson1988root}
Peter~M Robinson.
\newblock Root-n-consistent semiparametric regression.
\newblock \emph{Econometrica: Journal of the Econometric Society}, pages
  931--954, 1988.

\bibitem[Rosenbaum(2002)]{rosenbaum2002covariance}
Paul~R Rosenbaum.
\newblock Covariance adjustment in randomized experiments and observational
  studies.
\newblock \emph{Statistical Science}, 17\penalty0 (3):\penalty0 286--327, 2002.

\bibitem[Rothe(2018)]{rothe2018flexible}
Christoph Rothe.
\newblock Flexible covariate adjustments in randomized experiments.
\newblock \emph{Mannheim. Available at: https://madoc. bib. uni-mannheim.
  de/52249}, 529, 2018.

\bibitem[Rothwell(2005)]{rothwell2005subgroup}
Peter~M Rothwell.
\newblock Subgroup analysis in randomised controlled trials: importance,
  indications, and interpretation.
\newblock \emph{The Lancet}, 365\penalty0 (9454):\penalty0 176--186, 2005.

\bibitem[Rubin(1974)]{rubin1974estimating}
Donald~B Rubin.
\newblock Estimating causal effects of treatments in randomized and
  nonrandomized studies.
\newblock \emph{Journal of educational Psychology}, 66\penalty0 (5):\penalty0
  688, 1974.

\bibitem[Rubin(1980)]{rubin1980randomization}
Donald~B Rubin.
\newblock Randomization analysis of experimental data: The fisher randomization
  test comment.
\newblock \emph{Journal of the American statistical association}, 75\penalty0
  (371):\penalty0 591--593, 1980.

\bibitem[Rudi et~al.(2018)Rudi, Calandriello, Carratino, and
  Rosasco]{rudi2018fast}
Alessandro Rudi, Daniele Calandriello, Luigi Carratino, and Lorenzo Rosasco.
\newblock On fast leverage score sampling and optimal learning.
\newblock \emph{Advances in Neural Information Processing Systems}, 31, 2018.

\bibitem[R{\"u}schendorf(1982)]{ruschendorf1982random}
Ludger R{\"u}schendorf.
\newblock Random variables with maximum sums.
\newblock \emph{Advances in Applied Probability}, 14\penalty0 (3):\penalty0
  623--632, 1982.

\bibitem[Schick(1986)]{schick1986asymptotically}
Anton Schick.
\newblock On asymptotically efficient estimation in semiparametric models.
\newblock \emph{The Annals of Statistics}, pages 1139--1151, 1986.

\bibitem[Schohn and Cohn(2000)]{schohn2000less}
Greg Schohn and David~A. Cohn.
\newblock Less is more: Active learning with support vector machines.
\newblock In \emph{International Conference on Machine Learning}, 2000.

\bibitem[Seibold et~al.(2016)Seibold, Zeileis, and Hothorn]{seibold2016model}
Heidi Seibold, Achim Zeileis, and Torsten Hothorn.
\newblock Model-based recursive partitioning for subgroup analyses.
\newblock \emph{The international journal of biostatistics}, 12\penalty0
  (1):\penalty0 45--63, 2016.

\bibitem[Sener and Savarese(2017)]{sener2017active}
Ozan Sener and Silvio Savarese.
\newblock Active learning for convolutional neural networks: A core-set
  approach.
\newblock \emph{arXiv preprint arXiv:1708.00489}, 2017.

\bibitem[Shen and He(2015)]{Shen2015InferenceFS}
Juan Shen and Xuming He.
\newblock Inference for subgroup analysis with a structured logistic-normal
  mixture model.
\newblock \emph{Journal of the American Statistical Association}, 110:\penalty0
  303 -- 312, 2015.
\newblock URL \url{https://api.semanticscholar.org/CorpusID:38573472}.

\bibitem[Small(2024)]{small2024protocols}
Dylan~S Small.
\newblock Protocols for observational studies: Methods and open problems.
\newblock \emph{Statistical Science}, 39\penalty0 (4):\penalty0 519--554, 2024.

\bibitem[Su et~al.(2009)Su, Tsai, Wang, Li, et~al.]{su2009subgroup}
Xiaogang Su, Chih-Ling Tsai, Hansheng Wang, Bogong Li, et~al.
\newblock Subgroup analysis via recursive partitioning.
\newblock \emph{Journal of Machine Learning Research}, 10\penalty0 (2), 2009.

\bibitem[Tanniou et~al.(2016)Tanniou, Van Der~Tweel, Teerenstra, and
  Roes]{tanniou2016subgroup}
Julien Tanniou, Ingeborg Van Der~Tweel, Steven Teerenstra, and Kit~CB Roes.
\newblock Subgroup analyses in confirmatory clinical trials: time to be
  specific about their purposes.
\newblock \emph{BMC medical research methodology}, 16:\penalty0 1--15, 2016.

\bibitem[Thomas and Bornkamp(2017)]{thomas2017comparing}
Marius Thomas and Bj{\"o}rn Bornkamp.
\newblock Comparing approaches to treatment effect estimation for subgroups in
  clinical trials.
\newblock \emph{Statistics in Biopharmaceutical Research}, 9\penalty0
  (2):\penalty0 160--171, 2017.

\bibitem[Tsiatis et~al.(2008)Tsiatis, Davidian, Zhang, and
  Lu]{tsiatis2008covariate}
Anastasios~A Tsiatis, Marie Davidian, Min Zhang, and Xiaomin Lu.
\newblock Covariate adjustment for two-sample treatment comparisons in
  randomized clinical trials: a principled yet flexible approach.
\newblock \emph{Statistics in medicine}, 27\penalty0 (23):\penalty0 4658--4677,
  2008.

\bibitem[Vershynin(2018)]{vershynin2018high}
Roman Vershynin.
\newblock \emph{High-dimensional probability: An introduction with applications
  in data science}, volume~47.
\newblock Cambridge university press, 2018.

\bibitem[Vovk and Wang(2020)]{vovk2020combining}
Vladimir Vovk and Ruodu Wang.
\newblock Combining p-values via averaging.
\newblock \emph{Biometrika}, 107\penalty0 (4):\penalty0 791--808, 2020.

\bibitem[Wang et~al.(2007)Wang, Lagakos, Ware, Hunter, and
  Drazen]{wang2007statistics}
Rui Wang, Stephen~W Lagakos, James~H Ware, David~J Hunter, and Jeffrey~M
  Drazen.
\newblock Statistics in medicine—reporting of subgroup analyses in clinical
  trials.
\newblock \emph{New England Journal of Medicine}, 357\penalty0 (21):\penalty0
  2189--2194, 2007.

\bibitem[Watson(1964)]{watson1964smooth}
Geoffrey~S Watson.
\newblock Smooth regression analysis.
\newblock \emph{Sankhy{\=a}: The Indian Journal of Statistics, Series A}, pages
  359--372, 1964.

\bibitem[Wei et~al.(2015)Wei, Iyer, and Bilmes]{wei2015submodularity}
Kai Wei, Rishabh Iyer, and Jeff Bilmes.
\newblock Submodularity in data subset selection and active learning.
\newblock In \emph{International conference on machine learning}, pages
  1954--1963. PMLR, 2015.

\bibitem[Wright et~al.(2016)Wright, Williamson, Whelton, Snyder, Sink, Rocco,
  Reboussin, Rahman, Oparil, Lewis, Kimmel, Johnson, Goff, Fine, Cutler,
  Cushman, Cheung, and Ambrosius]{Wright2016ART}
Jackson~T. Wright, Jeff~D. Williamson, Paul~K. Whelton, Joni~K. Snyder,
  Kaycee~M Sink, Michael~V. Rocco, David~M. Reboussin, Mahboob Rahman, Suzanne
  Oparil, Cora~E. Lewis, Paul~L. Kimmel, Karen~C. Johnson, David~C Goff,
  Lawrence~J. Fine, Jeffrey~A. Cutler, William~C. Cushman, Alfred~K. Cheung,
  and Walter~T. Ambrosius.
\newblock A randomized trial of intensive versus standard blood-pressure
  control.
\newblock \emph{The New England journal of medicine}, 373 22:\penalty0
  2103--16, 2016.

\bibitem[Zhang and Zhao(2023)]{zhang2023randomization}
Yao Zhang and Qingyuan Zhao.
\newblock What is a randomization test?
\newblock \emph{Journal of the American Statistical Association}, 118\penalty0
  (544):\penalty0 2928--2942, 2023.

\bibitem[Zhang and Zhao(2025)]{zhang2025multiple}
Yao Zhang and Qingyuan Zhao.
\newblock Multiple conditional randomization tests for lagged and spillover
  treatment effects.
\newblock \emph{Biometrika}, 112\penalty0 (1):\penalty0 asae042, 2025.

\bibitem[Zhao and Ding(2021)]{zhao2021covariate}
Anqi Zhao and Peng Ding.
\newblock Covariate-adjusted fisher randomization tests for the average
  treatment effect.
\newblock \emph{Journal of Econometrics}, 225\penalty0 (2):\penalty0 278--294,
  2021.

\end{thebibliography}
 
\clearpage

\appendix
\appendixpage

\section{Additional details on BaR-learner}\label{sect:consistency}

Suppose the design is \text{Bern(1/2)} and $\tau$ is a linear function.
We first show that applying R-learner to the selected nuisance fold $\mathcal I = \mathcal I(O_{[n]})$ may yield a inconsistent CATE estimator. 
Let $B_i = \one\{i\in \mathcal I\}$ indicate whether unit $i$ is included in $\mathcal I $. Let $\Sigma_{\mathcal I} = n_{\mathcal I}^{-1}\sum_{i=1}^{n}B_i X_i X_i^{\top} $ denote the sample covariance matrix based on $\mathcal I $,
where $n_{\mathcal I} = |\mathcal I|=  \sum_{i=1}^{n}B_i$. The full sample covariance matrix $\Sigma_{[n]}$ is defined analogously.

When $\hat \mu$ is a consistent estimator of $\mu$, the residuals $\hat R_i$ defined in \eqref{equ:residuals} satisfy:
\[
\hat R_i- \left(X_i^{\top}\beta +  \frac{\epsilon_i}{Z_i-e(X_i)}\right) \overset{p}{\rightarrow} 0.
\]
Define the least-squares estimator $\hat \tau^{\text{OLS}}(x) = x^{\top} \hat \beta_{\mathcal I}^{\text{OLS}} $ from R-learner as
\begin{equation}\label{equ:r_}
 \hat \beta_{\mathcal I}^{\text{OLS}}  = \Sigma_{\mathcal I}^{-1}\phi_{\mathcal I} \  \text{ with } \  \phi_{\mathcal I} = n_{\mathcal I}^{-1} \sum_{i=1}^{n}B_iX_i\hat R_i.
\end{equation}
Let $U_i :=  \mathbb E[X_i \epsilon_i \mid X_i, B_i]$. 
Due to the selection bias in $\mathcal I$, 
\begin{equation}\label{equ:asymptotic_bias}
\begin{split}
\hat \beta_{\mathcal I}^{\text{OLS}}  - \beta  
 & = \Sigma_{\mathcal I}^{-1} \frac{1}{n_{\mathcal I }}\sum_{i=1}^{n}  \bigg\{ B_i X_i U_i + B_i X_i (\epsilon_i-U_i)    + \frac{\mu(X_i) - \hat \mu (X_i) }{1-e(X_i)} \bigg\} \\
& \overset{p}{\rightarrow} (\mathbb E [ X_i^{\top} X_i \mid B_i=1])^{-1} \mathbb E[X_i \epsilon_i \mid B_i=1].
\end{split}
\end{equation}
In contrast, without conditioning on $B_i$, we have 
$\mathbb E[X_i \epsilon_i \mid X_i] =  X_i \mathbb E[\epsilon_i \mid X_i] = 0$, which means the estimator based on a randomly sampled nuisance fold is consistent. 

To correct for this bias, we draw on the Horvitz-Thompson estimator \citep{horvitz1952generalization} and re-weight observations in $\mathcal I$ by their selection probability  $p(X_i) := \mathbb P\{ B_i = 1\mid X_i,Y_i  \}$. This leads to a new estimator $\hat \tau_{\mathcal I,\hat p}(x) = x^{\top} \hat \beta_{\mathcal I,\hat p }$, where
\begin{equation}\label{equ:r_weighted}
\hat \beta_{\mathcal I,\hat p } = \Sigma_{\mathcal I,\hat p}^{-1}\phi_{\mathcal I,\hat p} :=  \left(  n^{-1}\sum_{i=1}^{n}\frac{B_i}{\hat p(X_i,Y_i)} X_i X_i^{\top}\right)^{-1} \left(  n^{-1}\sum_{i=1}^{n}\frac{B_i}{\hat p(X_i,Y_i)} X_i \hat  R_i\right).
\end{equation}
The estimator $\hat p$ can be obtained by fitting a classifier on $(X_i,Y_i,B_i),i\in [n]$, or computing a local average as in \citet{nadaraya1964estimating,watson1964smooth}:
\[
\hat p(X_i,Y_i)  = \frac{\sum_{j=1}^{n}\kappa([X_j,Y_j],[X_i,Y_i])B_j}{\sum_{j'=1}^{n}\kappa([X_j',Y_j'],[X_i,Y_i])},
\]
where $\kappa$ is a positive similarity measure between its two arguments, e.g., $\kappa = 1$ if $[X_j, Y_j]$ is among the 10 nearest neighbors of $[X_i, Y_i]$.

\begin{assumption}\label{ass:6}
The covariates and outcomes $(X_i, Y_i)$, the outcome function $\mu$, and its estimator $\hat \mu$ are all $\ell_2$-bounded. Moreover, the sample covariance matrices $\Sigma_{\mathcal I, \hat p}$ and $\Sigma_{\mathcal I}$ are almost surely positive definite for any nuisance fold $\mathcal I \subseteq [n]$, with all their eigenvalues bounded below by some constant $c_{\Sigma} > 0$.
\end{assumption}

\begin{assumption}\label{ass:7}
For all $i \in [n]$, $(\epsilon_i, Z_i) \independent B_i \mid X_i, Y_i$.
\end{assumption}
\begin{assumption}\label{ass:8}
There exists $c_p > 0$ such that $p(x, y) \in (c_p, 1]$ for all $(x, y) \in \mathcal{X} \times \mathcal{Y}$.
\end{assumption}

\begin{proposition}\label{prop:p_x}
Under \Cref{ass:6,ass:7,ass:8}, the bias of $\hat \beta_{\mathcal I,\hat p }$ satisfies
\[
\mathbb E \big[  \hat \beta_{\mathcal I,\hat p } \big] - \beta  \lesssim  \mathbb E \big[ | \hat  \mu(X_i) - \mu(X_i) |   \big] + \mathbb E \big[ | \hat  p(X_i,Y_i) - p(X_i,Y_i) |   \big] + O(1/\sqrt{n}).
\]
\end{proposition}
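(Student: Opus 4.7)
\begin{hproof}
The plan is to bound $\|\mathbb E[\hat\beta_{\mathcal I,\hat p}] - \beta\| \le \mathbb E\|\hat\beta_{\mathcal I,\hat p} - \beta\|$ by Jensen's inequality, and then to decompose the weighted-least-squares error into a nuisance-error piece plus an oracle piece that benefits from the Horvitz--Thompson weighting. First I would write
$\hat\beta_{\mathcal I,\hat p} - \beta = \Sigma_{\mathcal I,\hat p}^{-1}\bigl\{\phi_{\mathcal I,\hat p} - \Sigma_{\mathcal I,\hat p}\beta\bigr\}$ and use \Cref{ass:6} to absorb $\Sigma_{\mathcal I,\hat p}^{-1}$ into an almost-sure operator-norm bound $c_\Sigma^{-1}$. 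It then suffices to control $\mathbb E\|\phi_{\mathcal I,\hat p} - \Sigma_{\mathcal I,\hat p}\beta\|$.

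Under the Bern$(1/2)$ design, substituting the linear CATE $\tau(x)=x^\top\beta$ into model \eqref{equ:normal_model} yields the identity
\[
\hat R_i - X_i^\top\beta = \frac{\epsilon_i + \mu(X_i) - \hat\mu(X_i)}{Z_i - 1/2},
\]
so $\phi_{\mathcal I,\hat p} - \Sigma_{\mathcal I,\hat p}\beta = n^{-1}\sum_i \tfrac{B_i X_i}{\hat p(X_i,Y_i)}\cdot \tfrac{\epsilon_i + \mu(X_i) - \hat\mu(X_i)}{Z_i-1/2}$. I would split this into three terms. Term (A) isolates the $\hat\mu$-error: using $|Z_i-1/2|=1/2$, the boundedness of $X_i$ from \Cref{ass:6}, and a truncation ensuring $\hat p\ge c_p$, its norm is dominated by a constant multiple of $n^{-1}\sum_i|\hat\mu(X_i)-\mu(X_i)|$, whose expectation is $O(\mathbb E|\hat\mu(X_i)-\mu(X_i)|)$. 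Term (B) replaces $1/\hat p$ by $1/p$; the pointwise bound $|1/\hat p-1/p|\le c_p^{-2}|\hat p-p|$ then yields $O(\mathbb E|\hat p(X_i,Y_i)-p(X_i,Y_i)|)$.

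Term (C) is the oracle piece $n^{-1}\sum_i \tfrac{B_i X_i}{p(X_i,Y_i)}\cdot\tfrac{\epsilon_i}{Z_i-1/2}$. The key identity here is obtained by conditioning on $(X_i,Y_i,Z_i,\epsilon_i)$ and invoking \Cref{ass:7}:
\[
\mathbb E\!\left[\frac{B_i}{p(X_i,Y_i)}\,\Big|\, X_i,Y_i,Z_i,\epsilon_i\right] = \frac{\mathbb E[B_i\mid X_i,Y_i]}{p(X_i,Y_i)} = 1.
\]
Each summand therefore has unconditional mean $\mathbb E[X_i\,\epsilon_i/(Z_i-1/2)] = \mathbb E[X_i]\,\mathbb E[\epsilon_i/(Z_i-1/2)] = 0$, using $\epsilon_i\perp(X_i,Z_i)$ and $\mathbb E\epsilon_i=0$ from \Cref{example:1}. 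A routine second-moment calculation under \Cref{ass:6,ass:7,ass:8} gives $\mathbb E\|(\mathrm C)\|^2 = O(1/n)$, so Jensen's inequality delivers $\mathbb E\|(\mathrm C)\| = O(1/\sqrt n)$. Summing the three contributions produces the stated bound.

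The main obstacle is the mutual dependence of $\hat\mu$, $\hat p$, and the adaptively selected indicators $B_i$ that enter terms (A) and (B), since all of them are functions of the same $n$ units and, in Algorithm~\ref{alg:ass}, $\mathcal I$ itself is chosen using $\hat\tau^{(t)}$ and $\hat e^{(t)}$. A clean argument likely requires either sample-splitting the construction of the nuisance estimators or a Cauchy--Schwarz step producing an $\ell_2$ rather than $\ell_1$ rate. A secondary technical point is the truncation $\hat p\ge c_p$, which must be stated explicitly so that $1/\hat p$ is well-defined and bounded, a standard move in Horvitz--Thompson analyses.
\end{hproof}
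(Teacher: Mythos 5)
Your argument is correct and reaches the stated bound, but it follows a genuinely different route from the paper. The paper writes $\hat\beta_{\mathcal I,\hat p}-\beta$ as a three-way resolvent decomposition $(\Sigma_{\mathcal I,\hat p}^{-1}-\Sigma_{\mathcal I,p}^{-1})\phi_{\mathcal I,\hat p}+(\Sigma_{\mathcal I,p}^{-1}-\Sigma^{-1})\phi_{\mathcal I,\hat p}+\Sigma^{-1}(\phi_{\mathcal I,\hat p}-\Sigma\beta)$, obtains the $O(1/\sqrt n)$ term from a matrix Bernstein bound on $\Sigma_{\mathcal I,p}-\Sigma$, and shows the oracle part of the last term is \emph{exactly} unbiased by conditioning on $(X_i,Y_i)$ and invoking \Cref{ass:7}, so it never needs to control a variance. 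You instead center at $\Sigma_{\mathcal I,\hat p}\beta$ rather than $\Sigma\beta$, which makes the $X_iX_i^\top\beta$ contributions cancel within each summand (so you never see the paper's $[B_i-p(X_i,Y_i)]X_i^\top\beta/p(X_i,Y_i)$ term), absorb the single inverse $\Sigma_{\mathcal I,\hat p}^{-1}$ via the almost-sure eigenvalue bound in \Cref{ass:6} (avoiding matrix Bernstein entirely), and pay for this by first applying Jensen, so that the exactly-mean-zero oracle piece must instead be controlled in $L^1$ via a second-moment bound. Both routes deliver the same three contributions; yours is more elementary on the covariance matrices, the paper's is sharper on the oracle term (exact unbiasedness versus an $O(1/\sqrt n)$ fluctuation, immaterial here since the statement already carries an $O(1/\sqrt n)$ slack).

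One caveat on your term (C): the claim $\mathbb E\|(\mathrm C)\|^2=O(1/n)$ is not quite routine, because the indicators $B_i$ are jointly determined by the adaptive selection, so the cross terms $\mathbb E\big[B_iB_j\,X_i^\top X_j\,\epsilon_i\epsilon_j/((Z_i-1/2)(Z_j-1/2))\big]$ do not vanish from \Cref{ass:7} alone, which only decouples $(\epsilon_i,Z_i)$ from $B_i$ one index at a time. This is the same looseness present in the paper's own matrix Bernstein step (whose summands $[B_i/p(X_i,Y_i)-1]X_iX_i^\top$ are likewise not independent), so it is a shared gap rather than one you introduced; your closing remarks about the mutual dependence of $\hat\mu$, $\hat p$, and $B_i$, and about the need to truncate $\hat p$ away from zero, correctly identify where additional assumptions or a martingale-type argument would be needed to make either proof fully rigorous.
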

\Cref{prop:p_x} shows that 
$\hat \tau_{\mathcal I,\hat p}(x) = x^{\top} \hat \beta_{\mathcal I,\hat p }$ is an consistent estimator if the estimators $\hat \mu$ and $\hat  p$ are consistent. However, the variance of $\hat \beta_{\mathcal I,\hat p }$ may be inflated by the inverse probability weights  $\hat p^{-1}(X_i,Y_i)$, especially when some $\hat p(X_i,Y_i)$ are close to zero.

We next define the BaR-learner estimator, which uses inverse probability weights indirectly.
We first construct a posterior probability estimator following  \eqref{equ:ej}:
\begin{equation}\label{equ:e_I_hat}
\hat e_{\mathcal I}(x,y) =  \sigma \left([y-\hat \mu(x)]\hat \tau_{\mathcal I,\hat p}(x)/\hat \nu_{\mathcal I,\hat p}^{2}\right),
\end{equation}
where $\hat \nu_{\mathcal I,\hat p}^2 = n^{-1}\sum_{i=1}\hat p^{-1} (X_i,Y_i)B_i\big[Y_i - \hat \mu(X_i) - Z_i\hat \tau_{\mathcal I,\hat p}(X_i) \big]^2$.

The BaR-learner estimator $ \hat \tau_{\mathcal I} (x) = x^{\top}\hat \beta_{\mathcal I } $ is obtained by minimizing the loss in \eqref{eq:solution}:
\begin{align*}
 \hat \beta_{\mathcal I} & = \Sigma_{[n]}^{-1} \cdot \frac{1}{n} \sum_{i=1}^{n}X_i
 \big\{ B_i X_i \hat R_i + (1-B_i) \hat R(X_i,Y_i)\big\}
 :=  \Sigma_{[n]}^{-1} \hat \phi_{[n]},
\end{align*}
where $\hat{R}(X_j,Y_j)$ are marginalized residuals based on $\hat e_{\mathcal I}(x,y)$, as defined below \eqref{equ:residuals}.

\begin{proposition}\label{prop:p_x_bar}
Under \Cref{ass:6,ass:7,ass:8}, the bias of $\hat \beta_{\mathcal I}$ satisfies
    \[
  \mathbb E [    \hat \beta_{\mathcal I}  ] - \beta  
\lesssim  \mathbb E \big[ | \hat  \mu(X) - \mu(X)| \big] + \mathbb E \big[ | \hat  e_{\mathcal I  }(X,Y) - e(X,Y)|  \big] + O(1/\sqrt{n}).
    \] 
\end{proposition}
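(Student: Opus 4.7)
The plan is to mirror the template used for Proposition~\ref{prop:p_x}, but now exploiting the explicit imputation formula in BaR-learner. I would first write $\hat\beta_{\mathcal I} - \beta = \Sigma_{[n]}^{-1}(\hat\phi_{[n]} - \Sigma_{[n]}\beta)$. By Assumption~\ref{ass:6}, $\|\Sigma_{[n]}^{-1}\|_{\mathrm{op}} \leq c_{\Sigma}^{-1}$ almost surely. Splitting $\Sigma_{[n]}^{-1} = \Sigma^{-1} + (\Sigma_{[n]}^{-1} - \Sigma^{-1})$ with $\Sigma = \mathbb E[\Sigma_{[n]}]$, the first piece is deterministic and lets me pass expectation inside; the second contributes only an $O(1/\sqrt n)$ term after a standard covariance-concentration argument (using $\|\Sigma_{[n]} - \Sigma\| = O_p(n^{-1/2})$ from boundedness in Assumption~\ref{ass:6}). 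The task therefore reduces to bounding $\|\mathbb E[\hat\phi_{[n]} - \Sigma_{[n]}\beta]\|$.

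The key algebraic identity is that, under the Bern$(1/2)$ design, $\hat R_i = 2(2Z_i-1)[Y_i - \hat\mu(X_i)]$ and $\hat R(X_i, Y_i) = 2[2\hat e_{\mathcal I}(X_i, Y_i) - 1][Y_i - \hat\mu(X_i)]$. Marginalizing $Z_i$ over its posterior given $(X_i,Y_i)$ yields $\mathbb E[\hat R_i \mid X_i, Y_i] = R^*(X_i,Y_i) := 2[2e(X_i,Y_i)-1][Y_i - \hat\mu(X_i)]$, which is exactly $\hat R(X_i,Y_i)$ with $\hat e_{\mathcal I}$ replaced by $e$. Invoking Assumption~\ref{ass:7}, $(Z_i, \epsilon_i) \independent B_i \mid X_i, Y_i$, the combined BaR-learner contribution at unit $i$ in expectation equals $\mathbb E[X_i R^*(X_i, Y_i)] + \mathbb E[(1-B_i) X_i(\hat R(X_i, Y_i) - R^*(X_i, Y_i))]$, and the second piece is bounded by $4\,\mathbb E[|\hat e_{\mathcal I}(X,Y) - e(X,Y)| \cdot \|X\| \cdot |Y - \hat\mu(X)|] \lesssim \mathbb E|\hat e_{\mathcal I} - e|$ by the $\ell_2$-boundedness in Assumption~\ref{ass:6}.

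It remains to show $\mathbb E[X_i R^*(X_i,Y_i)] = \Sigma\beta + O(\mathbb E|\hat\mu(X) - \mu(X)|)$. By iterated expectation, $\mathbb E[X_i R^*(X_i,Y_i)] = \mathbb E[X_i \hat R_i]$. Using Assumption~\ref{example:1} to expand $Y_i - \hat\mu(X_i) = [Z_i - 1/2]\tau(X_i) + \epsilon_i + [\mu(X_i) - \hat\mu(X_i)]$ and dividing by $Z_i - 1/2$, I obtain
\[
\hat R_i = \tau(X_i) + \frac{\epsilon_i}{Z_i - 1/2} + \frac{\mu(X_i) - \hat\mu(X_i)}{Z_i - 1/2}.
\]
The first term integrates to $\Sigma\beta$; the second vanishes in expectation because $\epsilon_i \independent Z_i \mid X_i$ with $\mathbb E[\epsilon_i \mid X_i] = 0$ and $\mathbb E[1/(Z_i - 1/2) \mid X_i] = 0$; the third is controlled by $2\,\mathbb E|\hat\mu(X) - \mu(X)|$ using $|1/(Z_i - 1/2)| = 2$ together with Assumption~\ref{ass:6}.

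The main technical obstacle I anticipate is the entangled data-dependence of $\hat\mu$ (fit on all $(X_{[n]}, Y_{[n]})$) and $\hat e_{\mathcal I}$ (fit on $(X_{[n]}, Y_{[n]}, Z_{\mathcal I})$), which breaks naive conditional-independence when marginalizing $Z_i$. I would resolve this by first conditioning on $(X_{[n]}, Y_{[n]}, Z_{\mathcal I})$, under which both nuisance estimators become deterministic, and then applying Assumption~\ref{ass:7} to iterate expectation over the residual randomness of $Z_i$ for $i \notin \mathcal I$. The remaining $O(1/\sqrt n)$ term absorbs both the covariance concentration from Step~1 and the usual fluctuations of the sample mean $\hat\phi_{[n]}$ around its expectation, whose variance is uniformly bounded under Assumptions~\ref{ass:6}--\ref{ass:8}.
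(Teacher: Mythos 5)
Your proposal is correct and follows essentially the same route as the paper's proof: the same split into a covariance-concentration term of order $O(1/\sqrt{n})$ and the bias of $\hat\phi_{[n]}$, the same use of Assumption~\ref{ass:7} to decouple $B_i$ from $(Z_i,\epsilon_i)$ given $(X_i,Y_i)$, and the same control of the remaining error by $\mathbb E|\hat\mu-\mu|$ and $\mathbb E|\hat e_{\mathcal I}-e|$. The only (immaterial) difference is your choice of pivot residual $R^*(X_i,Y_i)$ built from $\hat\mu$ and the true posterior $e$, where the paper instead pivots on the residuals built from the true $\mu$ and verifies $\mathbb E[B_i(R_i - R(X_i,Y_i))]=0$ via the identity $R_i - R(X_i,Y_i) = [Z_i - e(X_i,Y_i)][R(X_i,Y_i,1)-R(X_i,Y_i,0)]$.
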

By the definition of $\hat e_{\mathcal I}$ and the Lipschitz property of the sigmoid function, we have
\[
\mathbb E [ | \hat  e_{\mathcal I }(X,Y) - e(X,Y)|]\lesssim \mathbb E [| \hat  \mu(X) - \mu(X)|] +\mathbb E [| \hat  p (X,Y) - p(X,Y)|] + 
O(1/\sqrt{n}),
\]
where we omit higher-order terms involving products of errors from $\hat \mu$ and $\hat e_{\mathcal I}$. Thus, the bias of $\hat \beta_{\mathcal I}$ is of the same order as that of $\hat \beta_{\mathcal I,\hat p}$ defined above. Turning to variance,
\[
\Var[ \hat \beta_{\mathcal I}] = n^{-2} \mathbb E \Big\{\Sigma_{[n]}^{-1} X_{[n]}^{\top}\Var\big(  \hat V_{[n]}  \mid  X_{[n]}\big)  X_{[n]} \Sigma_{[n]}^{-1}\Big\},
\]
where $\hat V_{[n]} = (V_1, \dots, V_n)^{\top}$, and $V_i := B_i \hat R_i + (1 - B_i) \hat R(X_i, Y_i)$. Without using inverse probability weights, $V_i$ likely has lower variance than $\hat p^{-1}(X_i,Y_i)\hat  R_i$ used in  $\hat \beta_{\mathcal I,\hat p}$.

\section{Technical proofs}\label{sect:technical}

\subsection{Proof of \texorpdfstring{\Cref{prop:ej}}{Proposition~\ref{prop:ej}}}

\begin{proof}
  In the setup of \Cref{example:1}, using Bayes' rule,
\begin{align*}
e(x,y) & = \mathbb P\{Z_i = 1 \mid X_i=x,Y_i=y\} \\
& =  \mathbb P\{Z_i = 1, X_i=x,Y_i=y\}/ \mathbb P\{X_i=x,Y_i=y\} \\ 
& = \frac{f_{Y_i\mid X_i,Z_i}(y\mid x,1)e(x)}{f_{Y_i\mid X_i,Z_i}(y\mid x,1)e(x) +f_{Y_i\mid X_i,Z_i}(y\mid x,0)[1-e(x)]} \\ 
& = \left(1+ \frac{f_{Y_i\mid X_i,Z_i}(y\mid x,0)}{f_{Y_i\mid X_i,Z_i}(y\mid x,1)}\right)^{-1},
\end{align*}
where $f_{Y_i\mid X_i,Z_i}$ is the density function of $Y_i$ given $X_i$ and $Z_i$. The treatment assignment probability $e(X_i)=1/2$ for any value of $X_i$ in the Bernoulli design in \Cref{example:1}.

We next introduce \citet{robinson1988root}'s transformation of the linear model in \eqref{equ:normal_model}. Taking an expectation of both sides of the model conditional on $X_i$,
\[
\mu(X_i)  = \mathbb E[Y_i\mid X_i] = \mu_0(X_i) + e(X_i)\tau(X_i).
\]
Subtracting this from the original model yields
\[
Y_i = \mu(X_i) + [Z_i - e(X_i)]\tau(X_i) + \epsilon_i.
\]
By the normal assumption of $\epsilon_i$ in \eqref{equ:normal_model} and $e(X_i)=1/2$, the outcome $Y_i$ conditional on $X_i=x$ and $Z_i=z$ follows a normal distribution:
\[
\mathcal{N}\left(\phi(x,z):=\mu(x)+ [z -e(x)]\tau(x),\nu^2\right),
\]
where the variance $\nu^2$ does not depend on the value $z$ of $Z_i.$
This allows us to simplify the expression of $e(x,y)$ above by rewriting the density ratio,
\begin{align*}
\frac{f_{Y_i\mid X_i,Z_i}(y\mid x,0)}{f_{Y_i\mid X_i,Z_i}(y\mid x,1)}   & =  \exp\left\{  \frac{\phi(x,1) - \phi(x,0)}{\nu^2} y + \frac{h^2(x,0)-h^2(x,1)}{2\nu^2}\right\} \\
& = \exp\left\{  \frac{-\tau(x)}{\nu^2} y + \frac{\tau(x)\mu(x)}{\nu^2}\right\} \\
& = \exp\left\{-[y-\mu(x)]\tau(x)/\nu^{2}\right\}.
\end{align*}
This gives the expression of $e(X_j,Y_j)$ in \eqref{equ:ej} using $\sigma(t) = 1/\{1+\exp(-t)\}$.

\end{proof}

\subsection{Proof of \texorpdfstring{\Cref{thm:expected_power}}{Proposition~\ref{thm:expected_power}}}\label{sect:proof_thm_1}

As mentioned before \Cref{thm:expected_power}, the $p$-value in \eqref{equ:p_k} can be approximated by a bivariate normal integral involving two Gaussian variables, $\tilde T_k$ and $T_k$:
\begin{equation}\label{equ:ttk}
\begin{split}
&T_k \sim \mathcal{N}\Big(E_k := \sum_{j\in \mathcal J_k} \hat  W_j e(X_j,Y_j),\ V_k:= \sum_{j\in \mathcal J_k}\hat  W_j^2 e(X_j,Y_j)[1-e(X_j,Y_j)]  \Big), \\
& \tilde T_k \sim \mathcal{N}\Big(\tilde E_k := \sum_{j\in \mathcal J_k} \hat  W_j e(X_j),\ \tilde V_k:= \sum_{j\in \mathcal J_k}\hat  W_j^2 e(X_j)[1-e(X_j)]  \Big).
\end{split}
\end{equation}
where $e(X_i)$ is the treatment assignment probability for unit $i$, as defined below \eqref{equ:two_mu}.

Based on the means and variances defined in \eqref{equ:ttk}, \Cref{thm:expected_power_2} provides a Gaussian approximation of the $p$-value in \eqref{equ:p_k}; its proof is given in \Cref{sect:thm_2_proof}. In \Cref{sect:expression}, we then verify the expressions of the means and variances in \Cref{thm:expected_power}.

\vspace{1pt}
\begin{theorem}\label{thm:expected_power_2}
 Under \Cref{assumption:weight,assumption:delta}, the  $p$-value in \eqref{equ:p_k} satisfies that
\begin{equation}\label{equ:p_k_gaussian_2}
\hat P_k (X_{\mathcal{J}_k},Y_{\mathcal{J}_k}) =  1 -  \Phi\bigg( \hat f_k(\mathcal J_k ): = \left[ V_{k}  + \tilde V_{k}\right]^{-1/2}\left[ E_{k} - \tilde E_k \right]  \bigg)  +  O_{\mathbb P }\left(|\mathcal J_k|^{-1/2}\right).
\end{equation}
where $\Phi(\cdot)$ is the cumulative distribution function of the standard normal.
\end{theorem}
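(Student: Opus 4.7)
The plan is to reduce $\hat P_k(X_{\mathcal J_k}, Y_{\mathcal J_k})$ to a one-sided tail probability for a sum of independent bounded random variables and then apply a non-identically-distributed Berry-Esseen bound.

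First I would use Fubini to absorb $\mathbb{E}_{Z_{\mathcal J_k}}$ into $\tilde{\mathbb{P}}$: because the $\tilde Z_j\sim\mathrm{Bern}(e(X_j))$ and $Z_j\sim\mathrm{Bern}(e(X_j,Y_j))$ are jointly independent across $j\in\mathcal J_k$ conditional on $(X_{\mathcal J_k},Y_{\mathcal J_k})$, we have
\[
\hat P_k(X_{\mathcal J_k},Y_{\mathcal J_k}) = \mathbb{P}\bigl\{T(\tilde O_{\mathcal J_k})\ge T(O_{\mathcal J_k})\,\bigm|\,X_{\mathcal J_k},Y_{\mathcal J_k}\bigr\}.
\]
Substituting the AIPW formula \eqref{eq:AIPW} into each summand, the terms $-b_j+\hat\tau_{\mathcal I}(X_j)$ that do not involve the assignment cancel between $\tilde O_j$ and $O_j$, yielding the linear representation
\[
T(\tilde O_{\mathcal J_k}) - T(O_{\mathcal J_k}) = \sum_{j\in\mathcal J_k}\hat W_j(\tilde Z_j - Z_j),
\]
with $\hat W_j$ as in \eqref{equ:w_j_hat}. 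This is the key algebraic simplification.

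Next I would compute the conditional moments of $S_{\mathcal J_k}:=\sum_{j\in\mathcal J_k}\hat W_j(\tilde Z_j - Z_j)$. By the independence structure, the conditional mean and variance are exactly $\tilde E_k - E_k$ and $V_k+\tilde V_k$ in the notation of \eqref{equ:ttk}. Each centered summand is bounded in absolute value by $2|\hat W_j|$, so its third absolute moment is at most $8|\hat W_j|^3$. By \Cref{assumption:delta} together with $e(X_j)=1/2$, each summand contributes conditional variance at least $\hat W_j^2[1/4+\delta(1-\delta)]$, so $V_k+\tilde V_k\gtrsim\sum_{j\in\mathcal J_k}\hat W_j^2$. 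Combining these with \Cref{assumption:weight} controls the Berry-Esseen ratio:
\[
\frac{\sum_{j\in\mathcal J_k}|\hat W_j|^3}{(V_k+\tilde V_k)^{3/2}} \;\lesssim\; \frac{\sum_{j\in\mathcal J_k}|\hat W_j|^3}{\bigl(\sum_{j\in\mathcal J_k}\hat W_j^2\bigr)^{3/2}} \;=\; O_{\mathbb P}(|\mathcal J_k|^{-1/2}).
\]

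Finally, applying the Berry-Esseen bound to the standardized sum $(S_{\mathcal J_k}-(\tilde E_k - E_k))/\sqrt{V_k+\tilde V_k}$ and evaluating it at the (data-dependent) threshold $-(\tilde E_k - E_k)/\sqrt{V_k+\tilde V_k}$ gives
\[
\hat P_k(X_{\mathcal J_k},Y_{\mathcal J_k}) = \Phi\!\left(\frac{\tilde E_k - E_k}{\sqrt{V_k+\tilde V_k}}\right) + O_{\mathbb P}(|\mathcal J_k|^{-1/2}) = 1 - \Phi(\hat f_k(\mathcal J_k)) + O_{\mathbb P}(|\mathcal J_k|^{-1/2}),
\]
which is precisely \eqref{equ:p_k_gaussian_2}. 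The main obstacle will be ensuring the Berry-Esseen denominator does not degenerate; this is exactly what \Cref{assumption:delta} is designed to prevent, by keeping $e(X_j,Y_j)(1-e(X_j,Y_j))$ uniformly bounded away from $0$. A minor technicality is that $S_{\mathcal J_k}$ is discrete with a potential atom at $0$ (so $\{\ge 0\}$ and $\{>0\}$ differ), but the mass of this atom is itself $O_{\mathbb P}(|\mathcal J_k|^{-1/2})$ by the same Berry-Esseen bound applied to narrow intervals around $0$, and is absorbed into the stated error term.
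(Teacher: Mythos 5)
Your proposal is correct, and it reaches \eqref{equ:p_k_gaussian_2} by a genuinely different route than the paper. The paper's proof (\Cref{sect:thm_2_proof}) proceeds in two stages: it first applies Berry--Esseen to the randomization sum $\tilde A_k=\sum_j \hat W_j[\tilde Z_j-e(X_j)]$ conditional on all observed data, obtaining a Gaussian approximation of the realized $p$-value $\hat P_k(O_{\mathcal J_k})$ itself; it then applies Berry--Esseen a second time to $A_k=\sum_j\hat W_j[Z_j-e(X_j,Y_j)]$ when marginalizing over $Z_{\mathcal J_k}$, using that $x\mapsto\Phi(\tilde V_k^{-1/2}V_k^{1/2}x+b)$ is Lipschitz, and finally collapses the resulting expression via the bivariate normal integral identity $\mathbb E_{Z\sim\mathcal N(0,1)}[\Phi(aZ+b)]=\Phi\bigl(b/\sqrt{1+a^2}\bigr)$. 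You instead apply Fubini up front to fold the outer expectation over $Z_{\mathcal J_k}$ into the randomization probability, reducing everything to a single one-sided tail probability for the sum $S=\sum_j\hat W_j(\tilde Z_j-Z_j)$ of independent bounded terms, whose mean and variance are exactly $\tilde E_k-E_k$ and $V_k+\tilde V_k$, and then invoke Berry--Esseen once. Your route is shorter and avoids both the Lipschitz smoothing step and the normal integral identity (the convolution of the two Gaussian limits is built in from the start); what it gives up is the intermediate Gaussian approximation of the realized $p$-value $\hat P_k(O_{\mathcal J_k})$, which the paper's staged argument makes explicit and which cleanly separates the exactly-known randomization distribution from the sampling distribution of $Z_{\mathcal J_k}$. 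Your variance lower bound via \Cref{assumption:delta}, the reduction of the Berry--Esseen ratio to \Cref{assumption:weight}, and the remark about the atom at zero are all handled correctly; the only implicit point worth stating is that the weights $\hat W_j$ are held fixed by the conditioning (automatic in the special case $\hat\mu=\mu$, $\hat\tau_{\mathcal I}=\tau$, and requiring conditioning on $Z_{\mathcal I}$ in general), which the paper's proof also assumes.
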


\subsubsection{Proof of \texorpdfstring{\Cref{thm:expected_power_2}}{Proposition~\ref{thm:expected_power_2}}}
\label{sect:thm_2_proof}

\begin{proof}

The $p$-value $\hat P_k (O_{\mathcal J_k})$ in \eqref{equ:pk} can be expressed as a function of the difference
\[
T(\tilde O_{\mathcal J_k}) - T( O_{\mathcal J_k}) =  \sum_{j\in \mathcal{J}_k} \hat W_j \tilde Z_j  -  \sum_{j\in \mathcal{J}_k} \hat W_j  Z_j,
\]
This difference can be rewritten, without changing its sign, as
\begin{align*}
    &\ \tilde V_{k}^{-1/2} \sum_{j\in \mathcal{J}_k} \hat W_j  \big[ \tilde Z_j -e(X_j) \big]   - \tilde V_{k}^{-1/2} \sum_{j\in \mathcal{J}_k} \hat W_j  \big[  Z_j -e(X_j) \big] \\
=  &\ \tilde V_{k}^{-1/2} \tilde A_k
-\tilde V_{k}^{-1/2} A_k +  \tilde V_{k}^{-1/2}\sum_{j\in \mathcal{J}_k} \hat W_j \big[ e(X_j) - e(X_j,Y_j) \big],
\end{align*}
where 
$\tilde A_k = \sum_{j\in \mathcal{J}_k} \hat W_j  \big[ \tilde Z_j -e(X_j) \big]$ and  
$A_k = \sum_{j\in \mathcal{J}_k} \hat W_j  \big[ Z_j -e(X_j,Y_j) \big]$.
Then,
\begin{equation}\label{equ:p_k_proof}
\hat P_k(O_{\mathcal{J}_k})
    =  \tilde{\mathbb P}\bigg\{ \tilde V_{k}^{-1/2} \tilde A_k \geq \tilde V_{k}^{-1/2} A_k +  \tilde V_{k}^{-1/2}\sum_{j\in \mathcal{J}_k} \hat W_j \big[  e(X_j,Y_j) - e(X_j) \big] \bigg\}.
\end{equation}
Conditional on $X_{[n]},Y_{[n]}$ and $Z_{[n]},$ the remaining randomness in the $p$-value comes from the randomized treatment  assignments $\tilde Z_{\mathcal{J}_k}$ in the summation  $\tilde A_k$ defined above.
It is straightforward to verify that $\hat  W_j[\tilde Z_j -e(X_j)],j\in \mathcal J_k,$ are independent and mean-zero random variables with finite variance and third absolute moment. Also, 
\[
\tilde V_k  =   \sum_{j\in \mathcal J_k}\hat  W_j^2 e(X_j)[1-e(X_j)] >0.
\]
By the Berry-Essen theorem in Chapter 3 of \citet{chen2010normal}, we have 
\begin{align*}
\sup_{t\in \mathcal R} \left|    \mathbb P_{\tilde Z_{\mathcal{J}_k}}\left\{ \tilde V_{k}^{-1/2} \tilde A_k \leq t \right\}  - \Phi(t)      \right| & \leq \frac{C}{\tilde V_k^{3/2} }\sum_{j\in \mathcal J_k}\hat W_j^3\mathbb E_{\tilde Z_{j}}\left\{ |\tilde Z_j - e(X_j)|^3 \right\}  \\
& = O_{\mathbb P }\bigg(1/\sqrt{|\mathcal J_k|}\bigg).
\end{align*}
where  $C$ is a universal constant. The equality holds under \Cref{assumption:weight,assumption:delta}.
By $E_k - \tilde E_k =\sum_{j\in \mathcal{J}_k} \hat W_j \big[  e(X_j,Y_j) - e(X_j) \big]$, we can rewrite \eqref{equ:p_k_proof} as
\[
\hat P_k(O_{\mathcal{J}_k}) = 1 - \Phi\left(  \tilde V_{k}^{-1/2} V_{k}^{1/2} V_{k}^{-1/2} A_k +   \tilde V_{k}^{-1/2} \left[E_k - \tilde E_k\right]   \right)  + \ O_{\mathbb P }\bigg(1/\sqrt{|\mathcal J_k|}\bigg).
\]
Conditional on $X_{[n]},Y_{[n]}$ and $Z_{\mathcal I}$, the remaining randomness in the $p$-value comes from the observed treatment  assignments $Z_{\mathcal J_k}$ in the summation $A_k$ defined above.
Similar to the proof above, we can first check that $\hat  W_j[Z_j -e(X_j,Y_j)],j\in \mathcal J_k,$ are independent and mean-zero random variables with finite variance and third absolute moment. Also, it is easy show that 
$\Phi( \tilde V_{k}^{-1/2} V_{k}^{1/2} x + E_k - \tilde E_k)$ is Lipschitz function of $x$ because its derivative with respect to $x$ is bounded by $\tilde V_{k}^{-1/2} V_{k}^{1/2}/\sqrt{2\pi}.$ 

By the Berry-Essen theorem in \citet[Theorem 3.1]{chen2010normal}, we have
\[
\hat P_k(X_{\mathcal{J}_k}, Y_{\mathcal{J}_k})  = 1 -  \mathbb E_{Z\sim \mathcal{N}(0,1)} \left\{    \Phi\left(  \tilde V_{k}^{-1/2} V_{k}^{1/2} Z +   \tilde V_{k}^{-1/2}  \left[E_k - \tilde E_k\right]   \right)\right\} + O_{\mathbb P }\Big(1/\sqrt{|\mathcal J_k|}\Big).
\]
By the well-known bivariate normal integral identity,
\begin{align*}
\hat P_k(X_{\mathcal{J}_k}, Y_{\mathcal{J}_k}) &  = 1 -  \Phi\left(   \left[1+   \tilde V_{k}^{-1} V_{k}  \right]^{-1/2} \tilde V_{k}^{-1/2}  \left[E_k - \tilde E_k\right]  \right)  +  O_{\mathbb P }\Big(1/\sqrt{|\mathcal J_k|}\Big) \\
&  =  1 -  \Phi\bigg( \left[ V_{k}  + \tilde V_{k}\right]^{-1/2}\left[ E_{k} - \tilde E_k \right]  \bigg)  +  O_{\mathbb P }\Big(1/\sqrt{|\mathcal J_k|}\Big).
\end{align*}

\end{proof}

\subsubsection{Expressions of the means and variances in \texorpdfstring{\Cref{thm:expected_power}}{Proposition~\ref{thm:expected_power}}}
\label{sect:expression}
\begin{proof}
Under the assumptions we make, the weight $\hat W_j$ can be written as \eqref{equ:w_j_hat} as
\[
\hat W_j = 2\left[2 Y_j - \mu_1(X_j) - \mu_0(X_j)  \right] = 4[Y_j - \mu(X_j)],
\]
using the expression of $\mu_1$ and $\mu_0$ in \eqref{equ:two_mu}. Observe that 
\[
e(X_j,Y_j) = \sigma([Y_j-\mu(X_j)]\tau(X_j)/\nu^2)\geq 1/2,
\]
if $Y_j - \mu(X_j)$ and $\tau(X_j)$ have the same sign, and less than $1/2$ otherwise. Then,
\begin{align*}
&\ \tilde W_j \big[e(X_j,Y_j) - e(X_j)\big] =   4 [Y_j - \mu (X_j)] \cdot [\sigma([Y_j-\mu(X_j)]\tau(X_j)/\mu^2)-1/2] \\ \vspace{5pt}
= &  
\begin{cases}
    4 |Y_j - \mu (X_j)| |e(X_j,Y_j)-1/2|,   \hspace{9pt} \text{ if } \text{sign}(Y_j-\mu(X_j))= \text{sign}(\tau(X_j))=1,\\ \vspace{5pt}
        4 |Y_j - \mu (X_j)| |e(X_j,Y_j)-1/2|,  \hspace{9pt}  \text{ if } \text{sign}(Y_j-\mu(X_j))=-1, \text{sign}(\tau(X_j))= 1. \\ \vspace{5pt}
        -4 |Y_j - \mu (X_j)| |e(X_j,Y_j)-1/2|, \text{ if } \text{sign}(Y_j-\mu(X_j))=1, \text{sign}(\tau(X_j))= -1. \\  \vspace{5pt}     
              -4 |Y_j - \mu (X_j)||e(X_j,Y_j)-1/2|,    \text{ if } \text{sign}(Y_j-\mu(X_j))= \text{sign}(\tau(X_j))=-1,\\
\end{cases} \\
& = \text{sign}(\tau(X_j))\cdot |Y_j-\mu(X_j)|\cdot |e(X_j,Y_j) - 1/2|.
\end{align*}
Substituting this into the definitions of $E_k$ and $\tilde E_k$ in \eqref{equ:ttk},
we obtain the expression of $E_j-E_j$ in the proposition.
Using the expression of $\hat W_j$ above, we can rewrite the variances sum $V_k+\tilde V_k$ defined in \eqref{equ:ttk} as follows:
\begin{align*}
V_k + \tilde  V_k &= 16 \sum_{j\in \mathcal{J}_k}[Y_j-\mu(X_j)]^2 \cdot\left\{e(X_j,Y_j)[1-e(X_j,Y_j)]  +1/4\right\} \\
& = 4 \sum_{j\in \mathcal{J}_k}[Y_j-\mu(X_j)]^2 \cdot\left\{2-C_j^2\right\},
\end{align*}
as required in the proposition. 
\end{proof}

\subsection{Proof of Proposition \ref{prop:threshold}}

\begin{proof}
    Let $\weight^*$ denote an optimizer of $\hat{l}_k(\weight)$.  
    We first observe that if $\tau(X_j) \le 0$, then $\weight^*_j = 0$.  
    Indeed, if $\weight^*_j > 0$ for some $j$ with $\tau(X_j) \le 0$, setting $\weight^*_j$ to zero would strictly increase the objective $\hat{l}_k(\weight)$, contradicting the optimality of $\weight^*$.

    Now consider the case where $\tau(X_j) > 0$. Given any optimizer $\weight^*$ of $\hat{l}_k(\xi)$, define $B^* :=  V_k(\weight^*) + \tilde{V}_k(\weight^*)$.
It is straightforward to verify that  $\weight^*$ also solves the following constrained optimization problem:
\begin{equation}\label{proof:eq:alternative.optimization}   
\begin{split}
\max_{\weight \in [0,1]^{|\mathcal S_k|}} \quad & E_k(\weight) - \tilde{E}_k(\weight), \\
\text{subject to} \quad & V_k(\weight) + \tilde{V}_k(\weight) \le B^*.
\end{split}
\end{equation}
For if $\weight^*$ were not an optimizer of \eqref{proof:eq:alternative.optimization}, then there would exist a feasible point $\weight'$ satisfying the constraint in \eqref{proof:eq:alternative.optimization} and
$E_k(\weight') - \tilde{E}_k(\weight') > E_k(\weight^*) - \tilde{E}_k(\weight^*).$
This implies  $\hat{l}_k(\weight') > \hat{l}_k(\weight^*)$,
contradicting the optimality of $\weight^*$ for $\hat{l}_k(\xi)$.

In the main text, the quantities $E_k(\weight)$, $V_k(\weight)$, $\tilde{E}_k(\weight)$, and $\tilde{V}_k(\weight)$ are defined as
    \begin{align*}
        E_k(\weight) :=& \sum_{j \in \mathcal S_k} 
        \weight_j \hat{W}_j e(X_j,Y_j),\quad V_k(\weight):= 
        \sum_{j\in \mathcal S_k} \weight_j \hat{W}_j^2 e(X_j,Y_j)[1-e(X_j,Y_j)], \\
        \tilde{E}_k(\weight) :=& \sum_{j \in \mathcal S_k} \weight_j \hat  W_j e(X_j),\quad
        \tilde{V}_k(\weight):= \sum_{j\in \mathcal S_k} \weight_j \hat  W_j^2 e(X_j)[1-e(X_j)].
    \end{align*}
Using these expressions, the optimization problem in \eqref{proof:eq:alternative.optimization} can be rewritten as
\begin{align*}
    \max_{\weight \in [0,1]^{|\mathcal{S}_k|}} \quad & \sum_{j \in \mathcal{S}_k} \weight_j \cdot \left( \operatorname{sign}(\tau(X_j))  \cdot |Y_j - \mu(X_j)|  \cdot|e(X_j, Y_j) - 1/2| \right), \\
    \text{subject to} \quad & \sum_{j \in \mathcal{S}_k} \weight_j \cdot \left( [Y_j - \mu(X_j)]^2  \cdot \left\{ 1/2 - [e(X_j, Y_j) - 1/2]^2 \right\} \right) \le B^*/4.
\end{align*}
By \Cref{lemm:fractional.knapsack}, this problem admits an optimal solution that takes the threshold form described in \Cref{prop:threshold}.
\end{proof}

\begin{lemma}[\citet{neyman1933ix}]\label{lemm:fractional.knapsack}
Let $a, b \in \mathbb{R}^m$ with $b_i > 0$ for all $i$, and let $B \in (0, \sum_{i=1}^m b_i]$.  
Then the optimization problem
\[
\max_{w \in [0,1]^m} \sum_{i=1}^m w_i a_i, \quad \text{subject to} \quad \sum_{i=1}^m w_i b_i \le B
\]
admits an optimal solution of the form
\[
w_i^* =
\begin{cases}
1, & \text{if } a_i / b_i > c \text{ and } a_i \ge 0, \\
c', & \text{if } a_i / b_i = c \text{ and } a_i \ge 0, \\
0, & \text{if } a_i / b_i < c \text{ or } a_i < 0,
\end{cases}
\]
for some $c \ge 0$ and  $c' \in [0,1)$.
\end{lemma}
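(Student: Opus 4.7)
I would prove this by a two-step reduction followed by a standard exchange argument. First, I would eliminate the items with $a_i < 0$: given any feasible $w$, setting $w_i = 0$ for every such $i$ yields another feasible point (the constraint $\sum w_i b_i \le B$ can only become slacker, since $b_i > 0$) whose objective is at least as large (since $w_i a_i \le 0$ for those coordinates). Hence, without loss of generality, I can search for an optimum among $w \in [0,1]^m$ with $w_i = 0$ whenever $a_i < 0$, and restrict attention to the sub-index set $\mathcal{M} = \{i : a_i \ge 0\}$.

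Second, I would handle the case $B \ge \sum_{i \in \mathcal{M}} b_i$ separately: then $w_i^* = \mathbf{1}\{a_i \ge 0\}$ is feasible and clearly optimal, and it fits the threshold form with $c = 0$ and $c' = 0$ (vacuously). For the remaining case $B < \sum_{i \in \mathcal{M}} b_i$, the constraint is binding at any optimum: if not, one could scale up any $w_i < 1$ with $a_i > 0$ and strictly increase the objective. I would then invoke compactness of $[0,1]^m$ together with continuity of the linear objective to guarantee an optimizer $w^*$ exists.

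Now the heart: an exchange argument on any optimizer $w^*$. Suppose there exist indices $i, j \in \mathcal{M}$ with $a_i/b_i > a_j/b_j$, $w_i^* < 1$, and $w_j^* > 0$. For sufficiently small $\delta > 0$, define $w'$ by $w'_i = w_i^* + \delta / b_i$, $w'_j = w_j^* - \delta / b_j$, and $w'_\ell = w_\ell^*$ otherwise. Then $w' \in [0,1]^m$ and $\sum_\ell w'_\ell b_\ell = \sum_\ell w_\ell^* b_\ell$, so $w'$ is feasible; yet the objective changes by $\delta (a_i/b_i - a_j/b_j) > 0$, contradicting optimality. Consequently, for every pair $i, j \in \mathcal{M}$ with $a_i/b_i > a_j/b_j$, the implication $w_j^* > 0 \Rightarrow w_i^* = 1$ holds.

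This implication pins down the threshold structure. Define $c := \inf\{ a_i/b_i : i \in \mathcal{M},\ w_i^* > 0 \}$ (which is finite and nonnegative by construction). Then every $i \in \mathcal{M}$ with $a_i/b_i > c$ must have $w_i^* = 1$, and every $i$ with $a_i/b_i < c$ must have $w_i^* = 0$; the remaining coordinates satisfy $a_i/b_i = c$. Among the tie-breaking coordinates, the binding constraint $\sum_i w_i^* b_i = B$ determines the total mass to distribute, and it suffices to place this remaining mass on a single representative via fractional weight $c' \in [0,1)$; if a tie value would force $c' = 1$, I would instead promote that index to the ``above-threshold'' group and decrement $c$, showing one can always choose $c' \in [0,1)$. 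The main delicate step is this last edge case (ties at the threshold together with the budget being exactly matched), but it is a bookkeeping matter rather than a genuine obstacle once the exchange argument above is in hand.
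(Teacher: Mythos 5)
Your proposal is correct, but it proves the lemma by a genuinely different route than the paper. The paper argues \emph{constructively}: it sorts the indices with $a_i \ge 0$ by the ratio $\rho_i = a_i/b_i$, greedily sets $w_i^* = 1$ until the budget is exhausted, places the fractional weight $c' = [B - \sum_{i=1}^{k-1} b_i]/b_k$ on the marginal item $k$, and then verifies optimality in one line via the pointwise inequality $(w_i^* - w_i)(a_i - \rho_k b_i) \ge 0$, summed over $i$ against any feasible competitor $w$ --- a Neyman--Pearson-style verification. You instead take an arbitrary optimizer (guaranteed by compactness of the feasible set), show the constraint binds when $B < \sum_{i\in\mathcal M} b_i$, and run a local exchange argument ($w_i' = w_i^* + \delta/b_i$, $w_j' = w_j^* - \delta/b_j$) to force the implication $a_i/b_i > a_j/b_j,\ w_j^* > 0 \Rightarrow w_i^* = 1$, from which the threshold structure follows. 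Your approach buys slightly more --- it shows essentially \emph{every} optimizer is of threshold form, not just that one exists --- at the cost of extra bookkeeping; the paper's is shorter because it only needs existence. Two small points on your tie-handling: placing the residual mass on a single representative gives the tied indices \emph{different} weights ($c'$ for one, $0$ for the rest), which does not literally match the stated form; the cleaner fix is to spread it proportionally, $c' = \bigl[B - \sum_{i:\rho_i > c} b_i\bigr]/\sum_{i:\rho_i = c} b_i$, which leaves the objective unchanged since all tied indices share the same ratio. (The paper's own construction has the same cosmetic mismatch under ties, and indeed its claim $c' \in [0,1)$ can fail to exclude $c'=1$ when the budget is exactly met; your ``promote the index and lower $c$'' fix addresses precisely that edge case, so neither issue is a substantive gap.)
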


\begin{proof}[Proof of \Cref{lemm:fractional.knapsack}]
    As in the proof of \Cref{prop:threshold}, any index $i$ with $a_i < 0$ must have $w_i^* = 0$ in any optimal solution, since including it would reduce the objective.

For indices with $a_i \ge 0$, define $\rho_i := a_i / b_i$, and without loss of generality, assume they are sorted in decreasing order: $\rho_1 \ge \rho_2 \ge \cdots \ge \rho_m$. Let $w^*$ be defined as follows: set $w_i^* = 1$ for the largest $\rho_i$ values until the constraint holds with equality. Let $k$ be the smallest index such that
$\sum_{i=1}^k b_i \ge B$. Let
$c = \rho_k$  and $ c' = [B - \sum_{i=1}^{k-1} b_i]/b_k \in [0,1).$
Then set $w_k^* := c'$ and $w_i^* = 0$ for all $i > k$.
For any feasible point $w \in [0,1]^m$, 
\[
        w_i^* a_i- \rho_k w_i^* b_i - w_i a_i + \rho_k w_i b_i =  (w_i^* - w_i)(a_i- \rho_k b_i) \ge 0.
\]
    Because  $w_i^* = 1 \ge w_i$ for $a_i > \rho_k b_i, w_i^* \le w_i$ for $a_i < \rho_k b_i$, and \( c = \rho_k \) implies $a_k - \rho_k b_k = 0.$ Finally, summing up the equations for $i\in [m]$ proves the claim, given that the constraint $\sum_{i=1}^{m}w_ib_i \leq B$ should hold with equality for any maximizer $w$.
\end{proof}

\subsubsection{Derivative calculation}\label{sect:derivative}

In \Cref{prop:threshold}, the function $h(X_j,Y_j)$ can be written as $h_1(C_j)/h_0(X_j,Y_j)$, where
\[
h_0(X_j,Y_j) :=  |Y_j-\mu(X_j)| \quad \text{and} \quad h_1(C_j) :=   C_j/(2-C_j^2).
\]
Consider that $|Y_j - \mu(X_j)| \approx |\tau(X_j)|/2$ when ignoring the error $\epsilon_j$ in \eqref{equ:normal_model}. By \eqref{equ:ej},
\begin{equation}\label{equ:approx_h}
h_0(X_j,Y_j) = \left| \tau^{-1}(X_j)\nu^2  \text{logit}\{e(X_j, Y_j)\} \right| \approx \left| \text{logit}\{e(X_j, Y_j)\} \nu^2/2\right|^{1/2}.
\end{equation}
The probability $e(X_j,Y_j)$ can be expressed as
\begin{align*}
 e(X_j,Y_j)   =  \begin{cases}
        [1-C_j]/2, & \text{if } e(X_j,Y_j)\in (0,1/2), \\
        [1+C_j]/2, & \text{if } e(X_j,Y_j)\in (1/2,1).
    \end{cases}
\end{align*}
In either case, we can re-express $h_0(X_j,Y_j)$ using
\[
\left| \text{logit}\{e(X_j, Y_j)\} \right| = \ln \left\{ \frac{1+C_j}{1-C_j} \right\}:=h_2(C_j).
\]
Then, we have the following derivative expression:
\[
\frac{d h(X_j,Y_j) }{dC_j} \propto \frac{d}{dC_j}\left[\frac{h_1(C_j)}{h_2(C_j)}\right] = \frac{[2-C_j^2]h_2(C_j) - C_j [ -2C_j h_2(C_j) + (2-C_j^2)   h_2'(C_j)/2  ]}{[2-C_j^2]^2 h_2^{3/2}(C_j) }.
\]
The denominator is positive. The numerator can be written as 
\[
[2+C_j^2] h_2(C_j) - C_j[2-C_j^2]/[1-C_j^2],
\]
which is positive unless $C_j \approx 1$. Nevertheless, units with $C_j \approx 1$ are likely to be included in the inference fold regardless.

\subsection{Proof of \texorpdfstring{\Cref{prop:r_learner}}{Proposition~\ref{prop:r_learner}}}

\begin{proof}
By the definitions of the residuals $\hat R_i$ and  $R(X_j,Y_j,z)$ above the proposition, we can write the objective function in \eqref{equ:br_learner} as 
\begin{align*}
&  \sum_{i\in \mathcal I } \left\{Y_i - \hat \mu (X_i) - [Z_i-e(X_i)]\tau(X_i) \right\}^2 + \sum_{j\in \mathcal J }\sum_{z=0}^{1}[1-z + (2z-1)e(X_j,Y_j) ] \\
& \hspace{232pt} \cdot \left\{Y_j - \hat \mu (X_j) - [z-e(X_j)]\tau(X_j) \right\}^2 \\
= &  \sum_{i\in \mathcal I } w_i \left\{\hat R_i - \tau(X_i) \right\}^2 + \sum_{j\in \mathcal J }\sum_{z=0}^{1}[1-z + (2z-1)e(X_j,Y_j) ] \\
& \hspace{232pt} \cdot w_j(z) \left\{\hat R(X_j,Y_j,z) -\tau(X_j) \right\}^2.
\end{align*}
Because $e(X_i)=1/2$ for all $i\in [n]$ in the Bernoulli design, all the weights,
\[
 w_i = (Z_i-e(X_i))^2 , \ w_i(0) = (0-e(X_i))^2 \  \text{ and }  \  w_i(1) = (1-e(X_i))^2,
\]
are equal to 1/2. Dropping these constant weights from the objective, we have
\[
\mathcal L = \sum_{i\in \mathcal I }\left\{\hat R_i - \tau(X_i) \right\}^2 + \sum_{j\in \mathcal J }\sum_{z=0}^{1}\big[1-z + (2z-1)e(X_j,Y_j) \big] \left\{\hat R(X_j,Y_j,z) -\tau(X_j) \right\}^2.
\]
Observe that $\sum_{z=0}^{1}[1-z+(2z-1)e(X_j,Y_j)] = 1-e(X_j,Y_j) +e(X_j,Y_j)=1$.
For a linear model $\tau(x) = x^{\top }\beta$, 
the loss $\mathcal L$ is minimized by a least-squares fit,
\begin{align*}
 \hat \beta_{\mathcal I }  & = \left(  \sum_{i\in \mathcal I}X_iX_i^{\top}  +  \sum_{j\in \mathcal J }X_jX_j^{\top} \sum_{z=0}^{1}\big[1-z + (2z-1)e(X_j,Y_j) \big]              \right)^{-1} \\
 &\hspace{20pt} \cdot \bigg\{ 
    \sum_{i\in \mathcal I }X_i \hat R_i + \sum_{j\in \mathcal J }X_j\sum_{z=0}^{1}\big[1-z + (2z-1)e(X_j,Y_j) \big] \hat R(X_j,Y_j,z)\bigg\}. \\
    & = (X_{[n]}^\top X_{[n]})^{-1}\bigg\{ 
    \sum_{i\in \mathcal I }X_i \hat R_i + \sum_{i\in \mathcal J  }X_j 
    \hat R(X_j,Y_j)\bigg\},
\end{align*}
by the definition of $\hat R(X_j,Y_j)$ above the proposition.

We next prove the expression of the difference between $\hat \beta_{\mathcal I}$ and $\hat \beta_{[n]}$.

By \eqref{eq:solution},  $\hat \beta_{[n]} = (X_{[n]}^\top X_{[n]})^{-1}\sum_{i=1}^{n}X_{i}\hat R_i$. Define $\hat D_j = Y_j - \hat \mu(X_j)$. Then,
\begin{align*}
&\ \mathbb E \big\{ \|\hat \beta_{[n]}- \hat \beta_{\mathcal I }\|^2   \mid X_{[n]}, Y_{[n]}\big\} \\
= & \sum_{j\in[n]\setminus \mathcal I } X_j^{\top} (X_{[n]}^\top X_{[n]})^{-2}X_{j}\cdot \mathbb E \big\{ [\hat R_j -    \hat R(X_j,Y_j)]^2 \mid X_j,Y_j \big\} \\
= & \sum_{j\in[n]\setminus \mathcal I} X_j^{\top} (X_{[n]}^\top X_{[n]})^{-1}X_{j} \cdot \big(\mathbb E \big\{ \hat R_j^2\mid X_j,Y_j \big\} -    \hat R^2(X_j,Y_j) \big) \\
= & \sum_{j\in[n]\setminus \mathcal I} X_j^{\top} (X_{[n]}^\top X_{[n]})^{-1}X_{j} \cdot \big(4 \hat D_j^2 -   [2e(X_j,Y_j)\hat D_j - 2 (1-e(X_j,Y_j))\hat D_j ]^2 \big) \\
= & \sum_{j\in[n]\setminus \mathcal I} X_j^{\top} (X_{[n]}^\top X_{[n]})^{-1}X_{j} \cdot 4 \hat D_j^2 \big( 1 -   [2 e(X_j,Y_j) - 1]^2 \big) \\
= &\ 4\sum_{j\in[n]\setminus \mathcal I} X_j^{\top} (X_{[n]}^\top X_{[n]})^{-1}X_{j} \cdot  \hat [Y_j - \hat \mu (X_j)]^2 \big( 1 -   C_j^2 \big),
\end{align*}
as required.
\end{proof}

\subsection{Proof of \texorpdfstring{\Cref{prop:variance}}{Proposition~\ref{prop:variance}}}
Recall that if the success probability of a Bernoulli random variable is given by a sigmoid function $\sigma(t)$, then its variance $\sigma(t)(1 - \sigma(t))$ is also the derivative of the sigmoid function. Using this fact and the sigmoid expression in \Cref{prop:ej}, 
\[
e(X_j,Y_j)[1-e(X_j,Y_j)] = \sum_{t=1}^{\infty}  (-1)^{t+1} t \exp \left\{-t \tau(X_j)[Y_j-\mu(X_j)] /\nu^2 \right\},
\]
when $\tau(X_j) [Y_j-\mu(X_j)]\geq0$.
The second equality is obtained by Taylor expansion of the sigmoid function.
Since the variance is a symmetric function of $\tau(X_j) [Y_j-\mu(X_j)]$, the same expansion holds if $\tau(X_j) [Y_j-\mu(X_j)]<0$, which completes the proof.

\subsection{Proof of \texorpdfstring{\Cref{thm:validity}}{Proposition~\ref{thm:validity}}}

\begin{proof}
    
We follow the partition-based idea in \citet{zhang2023randomization} to provide the validity of our randomization tests. Here, we partition the space $\mathcal Z$ of assignments $ Z_{[n]}$ into disjoint subsets based on the selected nuisance fold $\mathcal I = \mathcal I (X_{[n]},Y_{[n]},Z_{[n]})$:
\[
\mathcal{A}_I(X_{[n]},Y_{[n]}) =  \{ z_{[n]}\in \mathcal Z: \mathcal I (X_{[n]},Y_{[n]},z_{[n]})  = I \},
\]
for all values $I$ of $\mathcal I$.
Every subset $\mathcal{A}_I$ satisfies an invariance property: if $z_{I} =  z_{I}'$,
\begin{equation}\label{equ:invariance_1}
z_{[n]} \in \mathcal{A}_I(X_{[n]},Y_{[n]}) \  \Longrightarrow \  z_{[n]}' \in \mathcal{A}_I(X_{[n]},Y_{[n]}),
\end{equation}
Equation \eqref{equ:invariance_1} means that if the selected nuisance fold \(\mathcal{I}(X_{[n]}, Y_{[n]}, Z_{[n]}) = I\) under \(Z_{[n]} = z_{[n]}\), it remains \(I\) under \(Z_{[n]} = z_{[n]}'\) for any \(z_{[n]}'\) such that \(z_{I}' = z_{I}\). 

Next, we note that $ Z_{[n]}$ is randomized independently of the potential outcomes:
\begin{equation}\label{equ:ci_random}
Y_{[n]}(0),Y_{[n]}(1)\independent Z_{[n]}\mid X_{[n]}.
\end{equation}
Under the nulls $H_{0,k}, k\in \mathcal K_0$, and by \cref{ass:consistency},
\begin{equation}\label{equ:invariance_2}
Y_i  = Y_i(1) = Y_i(0),\forall i \in \mathcal S_{\mathcal K_0}.
\end{equation}
Denote \( \mathcal{S}_{\mathcal{K}_z} = \bigcup_{k \in \mathcal{K}_z} \mathcal{S}_k \) for $z\in \{0,1\}$. The previous two equations imply that 
\begin{align*}
&\ Y_{\mathcal S_{\mathcal K_0}},Y_{\mathcal S_{\mathcal K_1}}(0),Y_{\mathcal S_{\mathcal K_1}}(1) \independent Z_{\mathcal S_{\mathcal K_0}},Z_{\mathcal S_{\mathcal K_1}}\mid X_{[n]} \\
\Rightarrow &\ Y_{\mathcal S_{\mathcal K_0}},Y_{\mathcal S_{\mathcal K_1}} \independent Z_{\mathcal S_{\mathcal K_0}\setminus I }\mid X_{[n]},Z_{I \cup \mathcal S_{\mathcal K_1} } \\
\Rightarrow &\ Y_{\mathcal S_{\mathcal K_0}\setminus I} \independent Z_{\mathcal S_{\mathcal K_0}\setminus I}\mid X_{[n]},Y_{I \cup \mathcal S_{\mathcal K_1}},Z_{I \cup \mathcal S_{\mathcal K_1}}.
\end{align*} 
Then by the invariance described in \eqref{equ:invariance_1} and \eqref{equ:invariance_2}, we have
\[ Y_{\mathcal S_{\mathcal K_0}\setminus I}\independent Z_{\mathcal S_{\mathcal K_0}\setminus I }\mid X_{[n]}, Y_{I \cup \mathcal S_{\mathcal K_1}},Z_{I \cup \mathcal S_{\mathcal K_1}}, Z_{[n]}\in  \mathcal{A}_{I}(X_{[n]},Y_{[n]}).
\]
This means the assignments of the null units in the inference fold no longer depend on these units' observed outcomes once we fix the nuisance fold. Then, 
\begin{align*}
 &\ Z_{\mathcal S_{\mathcal K_0}\setminus I } \mid X_{[n]}, Y_{[n]}, Z_{I \cup\mathcal S_{\mathcal K_1}}, Z_{[n]}\in  \mathcal{A}_{I}(X_{[n]},Y_{[n]})  \\
   \overset{d}{=} &\ Z_{\mathcal S_{\mathcal K_0}\setminus I }\mid X_{[n]},Y_{I \cup \mathcal S_{\mathcal K_1}}, Z_{I \cup \mathcal S_{\mathcal K_1}}, Z_{[n]}\in  \mathcal{A}_{I}(X_{[n]},Y_{[n]}).  
\end{align*} 
Since this holds for any value of $\mathcal I = \mathcal I(X_{[n]},Y_{[n]},Z_{[n]})$, and $\mathcal S_{\mathcal K_z} \setminus  \mathcal I = \mathcal J_{\mathcal K_z}$ for $z\in \{0,1\}$,
\begin{equation}\label{equ:penultimate}
    \begin{split}
 &\ Z_{\mathcal J_{\mathcal K_0}} \mid X_{[n]}, Y_{[n]}, Z_{\mathcal I \cup \mathcal J_{\mathcal K_1}}, \mathcal I (O_{[n]})  \\
   \overset{d}{=} &\ Z_{\mathcal J_{\mathcal K_0} }\mid X_{[n]},Y_{\mathcal I \cup \mathcal J_{\mathcal K_1}}, Z_{\mathcal I \cup \mathcal J_{\mathcal K_1}}, \mathcal I (O_{[n]}) \\
   \overset{d}{=} &\ Z_{\mathcal J_{\mathcal K_0} }\mid X_{\mathcal J_{\mathcal K_0} }, \mathcal I (O_{[n]}),
   \end{split}
\end{equation}
by the Bernoulli assign, where treatment assignments are independent across units.
Using the standard validity proof of randomization tests, for example, Theorem 1 in \citet{zhang2023randomization}, we can show that
\begin{align*}
     \mathbb P \{ \hat P_k(O_{\mathcal{J}_k}) \leq \alpha \mid X_{[n]},Y_{[n]}, Z_{\mathcal I \cup \mathcal J_{\mathcal K_1} }, \mathcal I(O_{[n]}) \}   \leq  \alpha.
\end{align*}
Randomization $p$-values computed via Monte-Carlo simulation remain valid because the randomized and observed assignments are exchangeable. For more details, see Theorem 2 in both \citet{hemerik2018exact} and \citet{ramdas2023permutation}.

Conditioning on the selected nuisance fold \( \mathcal{I}(O_{[n]}) \), the assignments \( Z_{\mathcal{J}_{\mathcal{K}_0}} \) in the null inference fold are independent Bernoulli random variables. Moreover, conditioning on \( X_{[n]}, Y_{\mathcal{I} \cup \mathcal{J}_{\mathcal{K}_1}} \), and \( Z_{\mathcal{I} \cup \mathcal{J}_{\mathcal{K}_1}} \) fixes all other sources of randomness, including the estimators of \( \mu \) and \( \tau \), involved in the null \( p \)-values \( \hat{P}_k(O_{\mathcal{J}_k}) \). This establishes the joint independence among the null $p$-values stated in the theorem.
\end{proof}

\subsection{Proof of \texorpdfstring{\Cref{prop:p_x}}{Proposition~\ref{prop:p_x}}}

\begin{proof}
Let $\Sigma = \mathbb E[X_i X_{i}^{\top}]$.
We decompose the error of $\hat \beta_{\mathcal I,\hat p}$ as follows:
\begin{equation}\label{equ:three_terms}
    \begin{split}
    \hat \beta_{\mathcal I,\hat p}-\beta  & = (\Sigma_{\mathcal I,\hat p}^{-1}- \Sigma_{\mathcal I,p}^{-1}+ \Sigma_{\mathcal I,p}^{-1}-\Sigma^{-1}+ \Sigma^{-1} )\phi_{\mathcal I,\hat p}  - \Sigma^{-1}\Sigma\beta \\
    & = (\Sigma_{\mathcal I,\hat p}^{-1}- \Sigma_{\mathcal I,p}^{-1})\phi_{\mathcal I,\hat p} 
    + (\Sigma_{\mathcal I,p}^{-1} - \Sigma^{-1})\phi_{\mathcal I,\hat p}
    + \Sigma^{-1}(\phi_{\mathcal I,\hat p} -\Sigma\beta).
 \end{split}
\end{equation}
In the first term in the last line,
\begin{align*}
\|\Sigma_{\mathcal I,\hat p}- \Sigma_{\mathcal I,p}\|_{\text{op}} &\ = \left\|\frac{1}{n}\sum_{i=1}^{n}B_i X_i X_i^{\top}\left[\hat p^{-1}(X_i,Y_i) - p^{-1}(X_i,Y_i) \right]\right\|_{\text{op}}    \\
&\ \leq n^{-1}\sum_{i=1}^{n}B_i \|X_i\|^2 | \hat p^{-1}(X_i,Y_i) - p^{-1}(X_i,Y_i) |   \\
&\ \lesssim n^{-1}
\sum_{i=1}^{n} | \hat p(X_i,Y_i) - p(X_i,Y_i) |.
\end{align*}
Then,
\begin{align*}
\mathbb E\left[  (\Sigma_{\mathcal I,\hat p}^{-1}- \Sigma_{\mathcal I,p}^{-1})\phi_{\mathcal I,\hat p}   \right]
= &\
\mathbb E\left[  \Sigma_{\mathcal I, p}^{-1}(\Sigma_{\mathcal I, p}-\Sigma_{\mathcal I,\hat p})\Sigma_{\mathcal I,\hat p}^{-1} \phi_{\mathcal I,\hat p}   \right]   \\
\leq &\ \mathbb E\left[  \lambda_{\min}^{-1}( \Sigma_{\mathcal I, p}) \cdot
\lambda_{\min}^{-1}( \Sigma_{\mathcal I,\hat p}) \cdot \|\Sigma_{\mathcal I, p}-\Sigma_{\mathcal I,\hat p}\|_{\text{op}} \cdot \|_{\text{op}} \cdot  \|\phi_{\mathcal I,\hat p}\|   \right]     \\
\lesssim &\  \mathbb E \left[ | \hat p(X_i,Y_i) - p(X_i,Y_i) |  \right].
\end{align*}
We now upper bound the second term in \eqref{equ:three_terms}. Observe that 
\[
\Sigma_{\mathcal I,p} - \Sigma = n^{-1}\sum_{i=1}^{n} [B_i/p(X_i,Y_i)-1]X_iX_i^{\top}.
\]
Applying the matrix Bernstein inequality in \citet{vershynin2018high}[Theorem 5.4.1] to the zero-mean random matrices in the average, we have 
\[
\|  \Sigma_{\mathcal I,p} - \Sigma  \|_{\text{op}} = O_{\mathbb P}\left(\sqrt{ n^{-1}\log d}\right).
\]
Using the same proof above for the first term,
\[
\mathbb E[(\Sigma_{\mathcal I,p}^{-1} - \Sigma^{-1})\phi_{\mathcal I,\hat p} ] = 
\mathbb E\left[  \Sigma^{-1} (\Sigma -\Sigma_{\mathcal I,p})\Sigma_{\mathcal I, p}^{-1} \phi_{\mathcal I,\hat p}   \right]  = O (1/\sqrt{n}).
\]
We now turn to the last term in \eqref{equ:three_terms}. First,
\[
\mathbb E \big[\phi_{\mathcal I,\hat p}\big] -\Sigma\beta =
\mathbb E\big\{ X_i \big[ \hat p^{-1}(X_i,Y_i) B_i\hat R_i -X_i^{\top}\beta \big] \big\}.
\]
By the definition of $R_i$, we have
\begin{align*}
\hat p^{-1}(X_i,Y_i)B_i\hat R_{i}  - X_i^{\top }\beta = & [\hat p^{-1}(X_i,Y_i)  - p^{-1}(X_i,Y_i) ] B_i\hat R_{i}  + p^{-1}(X_i,Y_i) B_i(\hat  R_{i}-R_i)  \\
& + \frac{B_i-p(X_i,Y_i)}{ p(X_i,Y_i)}X_i^{\top} \beta  + \frac{B_i}{ p(X_i,Y_i)}\cdot \frac{\epsilon_i}{Z_i-e(X_i)}.
\end{align*}
Compute the expectation conditional on $X_i$:
\begin{align*}
\mathbb E\left[\frac{B_i\hat R_{i} }{\hat p(X_i)}  - X_i^{\top }\beta   \  \bigg|\ X_i \right]\lesssim &\ \mathbb E[| \hat p(X_i) - p(X_i)| \mid X_i] + \mathbb E[| \hat \mu(X_i) - \mu(X_i)| \mid X_i]\\
&\ + \mathbb E\left[ \frac{p(X_i,Y_i)-p(X_i,Y_i)}{ p(X_i,Y_i)}X_i^{\top} \beta  \  \bigg|\ X_i\right] \\
&\ +  \mathbb E\left[        \mathbb E\bigg[ \frac{B_i}{ p(X_i,Y_i)}\cdot \frac{\epsilon_i}{Z_i-e(X_i)} \  \bigg|\ X_i,Y_i,\epsilon_i,Z_i \bigg]                                     \  \bigg|\ X_i\right]  \\
= &\  \mathbb E[| \hat p(X_i) - p(X_i)| \mid X_i] + \mathbb E[| \hat \mu(X_i) - \mu(X_i)| \mid X_i]\\
&\ +  \mathbb E\left[         \frac{\mathbb E[B_i\mid X_i,Y_i]}{ p(X_i,Y_i)}\cdot \frac{\epsilon_i}{Z_i-e(X_i)}                               \  \bigg|\ X_i\right]\\
= &\  \mathbb E[| \hat p(X_i) - p(X_i)| \mid X_i] + \mathbb E[| \hat \mu(X_i) - \mu(X_i)| \mid X_i],
\end{align*}
by $(\epsilon_i,Z_i)\independent B_i \mid X_i, Y_i$
and $\mathbb E[\epsilon_i\mid X_i ] = 0$. Marginalizing out $X_i$ proves the claim. 
\end{proof}

\subsection{Proof of \texorpdfstring{\Cref{prop:p_x_bar}}{Proposition~\ref{prop:p_x_bar}}}

\begin{proof}
We first divide the error $\hat \beta_{\mathcal I} - \beta$ into two error terms.
\[
 \hat \beta_{\mathcal I} - \beta = (\Sigma_{[n]}^{-1} - \Sigma^{-1} )\hat\phi_{[n]} + \Sigma^{-1}(\hat\phi_{[n]} - \Sigma \beta ).
\]
As in the last subsection, by the matrix Bernstein inequality in \citet{vershynin2018high},
\[
\mathbb E[ (\Sigma_{[n]}^{-1} - \Sigma^{-1} )\hat\phi_{[n]} ]  
= O(1/\sqrt{n}).
\]
Denote the marginalized residual based on $\mu$ and $e(X_j,Y_j)$ as
\[
R(X_j,Y_j)  = e(X_j,Y_j) R(X_j,Y_j,1) + [1- e(X_j,Y_j)] R(X_j,Y_j,0).
\]
Next, we bound the second error term:
\begin{align*}
 \mathbb E [\hat \phi_{[n]}] - \Sigma \beta  = &\ \mathbb E\big[X_i  (B_i \hat R_i + (1-B_i) \hat R(X_i,Y_i) - X_{i}^{\top}\beta)\big] \\
= &\ \mathbb E\big[X_i  (B_i [\hat R_i-R_i] + (1-B_i) [\hat R(X_i,Y_i)-R(X_i,Y_i)]  \\
 &\ \hspace{12pt} + B_i R_i + (1-B_i)R(X_i,Y_i) - X_{i}^{\top}\beta)\big]  \\
 \lesssim &\ \mathbb E\big[|\hat \mu(X_i) - \mu(X_i)| +  |\hat e_{\mathcal I  }(X_i,Y_i) - e(X_i,Y_i)| \\
 &\ \hspace{12pt} + X_i (B_i [R_i- R(X_i,Y_i)] + R(X_i,Y_i) - X_{i}^{\top}\beta)\big]  \\
= &\ \mathbb E\big[|\hat \mu(X_i) - \mu(X_i)| +  |\hat e_{\mathcal I  }(X_i,Y_i) - e(X_i,Y_i)| \big].
\end{align*}
by $\mathbb E[R(X_i,Y_i) - X_i^{\top}\beta\mid X_i] = \mathbb E\big[\epsilon_i/[Z_i-e(X_i)]\mid X_i\big]=0$.
We also use
\begin{align*}
R_i  -  R(X_i,Y_i) =  &\  Z_i\cdot [ R(X_i,Y_i,1)  -  R(X_i,Y_i)  ]  + [1-Z_i]\cdot [ R(X_i,Y_i,0)  -  R(X_i,Y_i)  ] \\
= &\   Z_i\cdot [1-  e(X_i,Y_i)]\cdot [ R(X_i,Y_i,1)- R(X_i,Y_i,0)] \\
&\ - [1-Z_i]\cdot   e(X_i,Y_i)\cdot [ R(X_i,Y_i,1)- R(X_i,Y_i,0)] \\
= &\ [Z_i - e (X_i,Y_i) ] \cdot [ R(X_i,Y_i,1)- R(X_i,Y_i,0)]. 
\end{align*}
Taking an expectation conditional on ($X_{i},Y_{i}$) leads to
\[
\mathbb E \big[ R_i  - R(X_i,Y_i) \mid X_{i},Y_{i}\big] =   [e(X_i,Y_i) - e (X_i,Y_i) ] \cdot [R(X_i,Y_i,1)-R(X_i,Y_i,0)]  = 0.
\]
Combining the bounds for both error terms proves the claim.
\end{proof}

\section{Additional simulations}\label{sect:appendix_exp}

\subsection{Details of \texorpdfstring{\Cref{fig:AdaSplit_example}}{Proposition~\ref{fig:AdaSplit_example}}}\label{sect:appendix_exp_figure}

To generate \Cref{fig:AdaSplit_example}, we simulate a sample of \( n = 200 \) i.i.d.\ units. For each unit \( i \), we generate a covariate \( X_i \sim \mathrm{Unif}(0, 5) \), and a treatment assignment variable \( Z_i \sim \mathrm{Bernoulli}(0.5) \). The outcome \( Y_i \) is then generated as follows:
\[
Y_i = \begin{cases}
\epsilon_i, & \text{if } Z_i = 0, \\
X_i + \epsilon_i, & \text{if } Z_i = 1,
\end{cases}
\quad \text{where } \epsilon_i \sim \mathcal{N}(0, 1).
\]

\subsection{Results of BaR-learner}\label{sect:appendix_exp_consistent}

We conduct additional experiments using the default simulation setup in \Cref{sect:exp_setup} to assess the consistency of BaR-learner. For each sample size \( n \in \{300, 600, \dotsc, 1500\} \), we run the AdaSplit algorithm in \Cref{sect:alg} until 50\% of the units are assigned to the nuisance fold \( \mathcal{I} \); the remaining units form the inference fold \( \mathcal{J} \).
\Cref{fig:BaR.learner.consistency} shows that
the normalized error of the BaR-learner estimator vanishes as $n$ increases,  indicating that  the estimator converges consistently to the true CATE function \( \tau \) in \eqref{equ:tau_synthetic}.

\begin{figure}[h]
\hspace{-1.5cm}
     \centering
\includegraphics[width=0.35\textwidth]{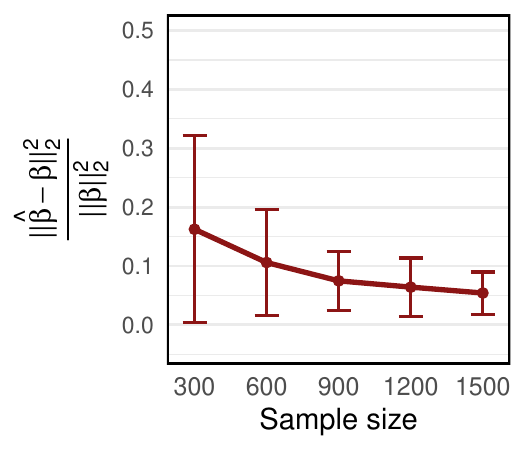}
\centering
    \caption{
Empirical consistency of BaR-learner on nuisance folds of size \( 0.5n \) selected by AdaSplit for sample size \( n \in \{300, 600, \dotsc, 1500\} \). 
    }
    \label{fig:BaR.learner.consistency}
\end{figure}

\vspace{-5pt}

\subsection{Results for XGboost}\label{sect:appendix_exp_xgboost}

In \Cref{fig:group.p.val.nuisance.estimator}, 
we assess how the choice of 
$\hat{\mu}$ impacts the test power. Compared to the default linear regression model, XGBoost provides a less accurate estimate of $\mu$ in our simulation setup in \Cref{sect:exp_setup}, where $\mu(x)$ is a linear function of $x$. Consequently, the p-values in panel (b) are slightly larger than those in panel (a).
Despite this, both panels show that \ART produces smaller p-values than \RRT. This indicates that, with a fixed $\hat{\mu}$, \ART gains power through its adaptive allocation of units, enabling them to contribute more to estimation and inference.

\begin{figure}[h]
    \hspace{2.3cm}
    \begin{minipage}{0.27\textwidth}
        \centering
        \includegraphics[clip, trim = 0cm 0cm 5cm 0cm, height = 0.9\textwidth]{plot/Comparison_setting_default.pdf}
        \subcaption*{ \hspace{5pt} (a) Linear regression.}
    \end{minipage}
    \hspace{1cm}
     \begin{minipage}{0.27\textwidth}
        \centering
        \includegraphics[clip, trim = 0cm 0cm 0cm 0cm, height = 0.9\textwidth]{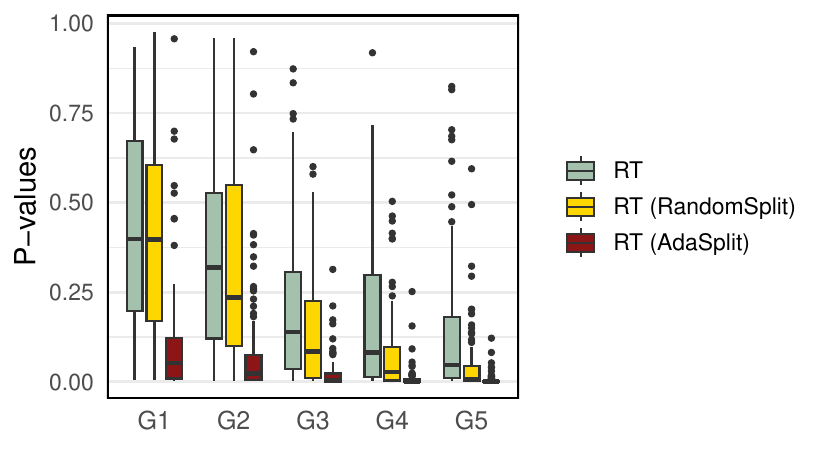}
        \subcaption*{\qquad \quad (b) XGboost.}
    \end{minipage}
    \caption{Boxplots of subgroup $p$-values. Each panel shows the results for RT, RT (RandomSplit), and RT (AdaSplit) on five subgroups (G1–G5) defined in \Cref{sect:exp_setup}. Panel (a) uses linear regression for estimating $\hat{\mu}$, while Panel (b) uses XGBoost \citep{chen2016xgboost}. The results are aggregated over $100$ trials.
    }
    \label{fig:group.p.val.nuisance.estimator}
\end{figure}

\vspace{-5pt}

\subsection{Results for varying proportions}\label{sect:appendix_exp_results}

In \Cref{fig:group.p.val.split.proportion}, we vary $\rho$, the nuisance fold proportion in random sample splitting and the maximum nuisance proportion in AdaSplit. As $\rho$ increases, \RRT becomes less powerful due to the smaller inference fold. In contrast, \ART adaptively chooses the proportion of units used for CATE estimation and may stop early when the estimator converges, preserving a sufficient number of units for inference.

\begin{figure}[h]
    \begin{minipage}{0.27\textwidth}
        \centering
        \includegraphics[clip, trim = 0cm 0cm 5cm 0cm, height = 0.9\textwidth]{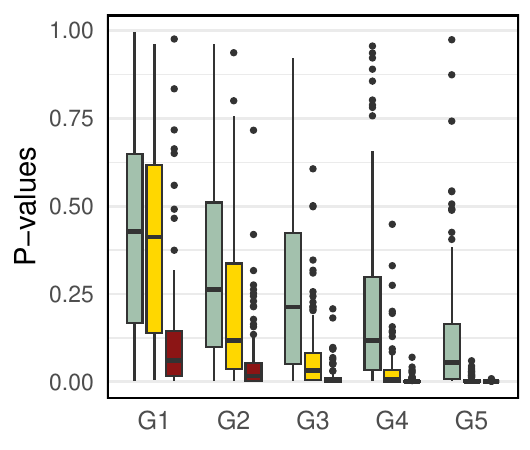}
        \subcaption*{\quad (a) $\rho = 0.25$.}
    \end{minipage}
    \hspace{0cm}
     \begin{minipage}{0.27\textwidth}
        \centering
        \includegraphics[clip, trim = 0cm 0cm 5cm 0cm, height = 0.9\textwidth]{plot/Comparison_setting_default.pdf}
        \subcaption*{\quad (b) $\rho = 0.5$.}
    \end{minipage}
    \hspace{0cm}
         \begin{minipage}{0.27\textwidth}
        \centering
        \includegraphics[clip, trim = 0cm 0cm 0cm 0cm, height = 0.9\textwidth]{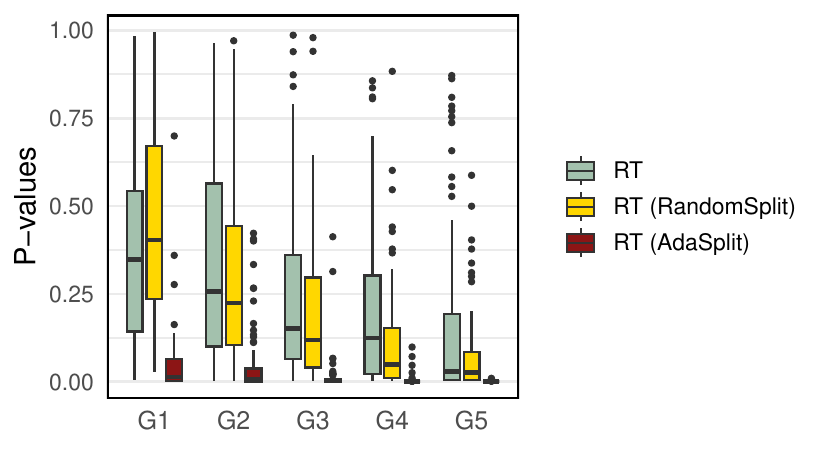}
        \subcaption*{\quad (c) $\rho = 0.75$.}
    \end{minipage}
\caption{Boxplots of subgroup $p$-values. Each panel shows the results for RT, RT (RandomSplit), and RT (AdaSplit) on five subgroups (G1–G5) defined in \Cref{sect:exp_setup}. From panels (a) to (c), both the nuisance fold proportion in RT (RandomSplit) and the maximum nuisance fold proportion in \ART increase from 0.25 to 0.75.  Each configuration is repeated 100 times, and the results are aggregated.
}
\label{fig:group.p.val.split.proportion}
\end{figure}

\end{document}